\definecolor{darkred}{rgb}{0.55, 0.0, 0.0}
\definecolor{thistle}{rgb}{0.85, 0.75, 0.85}
\newtheorem{theorem}{Theorem}[section]
\newtheorem*{theorem*}{Theorem} 
\newtheorem{example}[theorem]{Example}
\newtheorem{proposition}[theorem]{Proposition}
\newtheorem{lemma}[theorem]{Lemma}
\newtheorem{definition}[theorem]{Definition}
\newcommand{\avg}[1]{\langle #1 \rangle}
\newcommand{\expected}[1]{\mathbb{E}\left[ #1 \right]}
\newcommand{\belln}[1]{\mathcal{B}_{#1}}
\newcommand{\coldot}[1][black]{\text{\Large\textcolor{#1}{\ensuremath\bullet}}}
\newcommand{\egperm}[1]{#1 \in \mathcal{S}_n}
\newcommand{\egpartperm}[2]{#1 \in \mathcal{S}_{\partition{#2}}}
\newcommand{\evolve}[1]{\mathcal{U} #1 \mathcal{U}^\dagger}
\newcommand{\evolveH}[1]{\mathcal{U}^\dagger #1 \mathcal{U}}
\newcommand{\fock}[0]{\mathcal{F}}
\newcommand{\follows}{\, \Rightarrow \,}
\newcommand{\hilbert}[0]{\mathcal{H}}
\newcommand{\markov}[0]{\mathcal{M}}
\newcommand{\modsq}[1]{\left| #1 \right| ^2}
\newcommand{\norm}[1]{\| #1\|}
\newcommand{\outcome}[0]{\underline{s}}
\newcommand{\opperm}[1]{{\hat{P}_{#1}}}
\newcommand{\pluseq}{\mathrel{+}=}
\newcommand{\partition}[1]{(\underline{#1})}
\newcommand{\partket}[1]{\ket{\psi_{\partition{#1}}}}
\newcommand{\partbra}[1]{\bra{\psi_{\partition{#1}}}}
\newcommand{\permament}[1]{\mathrm{Perm} \!\left( #1 \right)}
\newcommand{\refeqn}[1]{Eq.~(\ref{#1})}
\newcommand{\scalar}[2]{\langle#1| #2 \rangle}
\newcommand{\symg}[1]{\mathcal{S}_{#1}}
\newcommand{\sympart}[1]{\mathcal{S}_{\partition{#1}}}
\newcommand{\trace}[1]{\mathrm{Tr} \!\left[ #1 \right]}
\newcommand{\var}[1]{\mathrm{Var}\!\left( #1 \right)}
\newcommand{\wpart}[1]{p_{\partition{\Lambda}}}
\begin{document}

\title{Incoherent behavior of partially distinguishable photons}
\author{Emilio Annoni}
\email{Corresponding author: emilio.annoni@quandela.com}
\affiliation{ Quandela SAS, 7 Rue Léonard de Vinci, 91300 Massy, France}
\affiliation{ Centre for Nanosciences and Nanotechnology, Université Paris-Saclay, UMR 9001,10 Boulevard Thomas Gobert, 91120, Palaiseau, France}
\author{Stephen C. Wein}

\affiliation{ Quandela SAS, 7 Rue Léonard de Vinci, 91300 Massy, France}

\begin{abstract}


Photon distinguishability is a key factor limiting quantum interference in photonic devices, directly impacting the performance of protocols such as Boson Sampling and photonic quantum computing. We present a basis-independent framework for analyzing multi-photon interference, identifying a necessary and sufficient condition under which distinguishability behaves as a stochastic error process. This condition enables any multi-photon state to be uniquely expressed as a classical mixture of partition states---discrete configurations representing different patterns of photon distinguishability. We introduce an experimentally implementable operation, analogous to Pauli twirling, that enforces this condition without compromising computational hardness. The resulting probability distribution over partition states defines the system's incoherent distinguishability spectrum, which we show can be fully characterized through a specific set of experiments. Building on this structure, we also demonstrate the existence of an error mitigation strategy. This framework clarifies key challenges in defining genuine multi-photon indistinguishability, links previous perspectives on partial distinguishability, and provides a rigorous foundation for robust photonic protocols.

\end{abstract}
\maketitle

\section{Introduction}

    Photonic systems are a leading platform for quantum information processing \cite{walmsley_light_2023}, offering near decoherence-free evolution and scalable architectures when paired with reliable multi-photon state preparation. As described in the Knill-Laflamme-Milburn (KLM) scheme \cite{knill_scheme_2001} or by leveraging the Measurement Based Quantum Computing paradigm \cite{raussendorf_oneway_2001}, discrete-variable photonic computation can be carried out using only linear-optical interference and photodetection.
    Demonstrating high-quality multi-photon interference is a natural benchmarking step on the path toward these more general computing schemes.

    Predicting the outcome of a rudimentary photonic computation---specifically, the photocounting statistics at the output of an arbitrary linear interferometer---requires overcoming a fundamental challenge: the physics of photon interference conceals a hard combinatorial problem~\cite{scheel_permanents_2004, valiant_complexity_1979}.
    The formalization of the task of simulating this prototypical experiment---Boson Sampling~\cite{aaronson_computational_2011}---exposed how photon interference is actually tied to a highly plausible separation between the classical and quantum model of computation.
    However, a classical computer might still be able to efficiently reproduce the results of such a computation if imperfections manage to steer an experiment too far from the quantum regime~\cite{shchesnovich_sufficient_2014,renema_efficient_2018,garcia-patron_simulating_2019, renema_classical_2019}. Uncovering how imperfections affect the quality of photon interference is therefore crucial, not only to assess experimental implementations of Boson Sampling~\cite{tillmann_experimental_2013, spring_boson_2013, wang_boson_2019, maring_versatile_2024}, but also to support claims of photonic quantum advantage \cite{aaronson_complexitytheoretic_2017}.

    Photon distinguishability---the presence of differences in the internal degrees of freedom of photons---remains one of the most theoretically challenging imperfections to understand. While unwanted multi-photon emission from single-photon sources is already strongly suppressed in modern devices \cite{somaschi_nearoptimal_2016,hanschke_quantum_2018}, and optical loss, though substantial, is relatively simple to understand \cite{garcia-patron_simulating_2019} and can be mitigated with post-selection or other strategies \cite{mills_mitigating_2024}, distinguishability introduces a wide range of complex and often coherent interference effects \cite{tichy_fourphoton_2011,menssen_distinguishability_2017} that are far more difficult to characterize or suppress.


    Accurately modeling a state of $n$ partially distinguishable photons requires up to $n!$ independent parameters \cite{shchesnovich_partial_2015, tichy_sampling_2015}, making the task quickly intractable. This complexity arises from both rich classical correlations between the internal states of individual photons \cite{spivak_generalized_2022} and genuinely coherent effects linked to internal “geometric” phases \cite{shchesnovich_collective_2018}.
    Such intricacies make it impractical to retain a complete description of distinguishability when analyzing advanced photonic protocols---such as all-photonic repeaters \cite{azuma_allphotonic_2015}, KLM-like schemes \cite{knill_scheme_2001, maring_versatile_2024}, or fusion networks \cite{bartolucci_fusionbased_2023}---which involve post-selection, adaptivity, and additional layers of noise.
    
    This challenge mirrors a well-known problem in circuit-based quantum computing, where fully coherent noise models prohibit large-scale simulation and complicate error mitigation \cite{cai_quantum_2023a}. In the context of fault-tolerant quantum computing, evaluating the performance of an error-correction circuit under general coherent gate errors is typically infeasible, and even modest coherent components can lead to significant deviations from approximate incoherent noise models for large codes \cite{greenbaum_modeling_2018}. As a result, both theoretical analyses and numerical simulations often rely on simplified noise models involving incoherent Pauli errors \cite{gidney_stim_2021}, and experimental strategies frequently aim to tailor physical noise toward this regime—for instance, through randomized compilation \cite{jain_improved_2023}.

    \begin{figure}[ht]
        \centering
        \includegraphics[width = .70\linewidth]{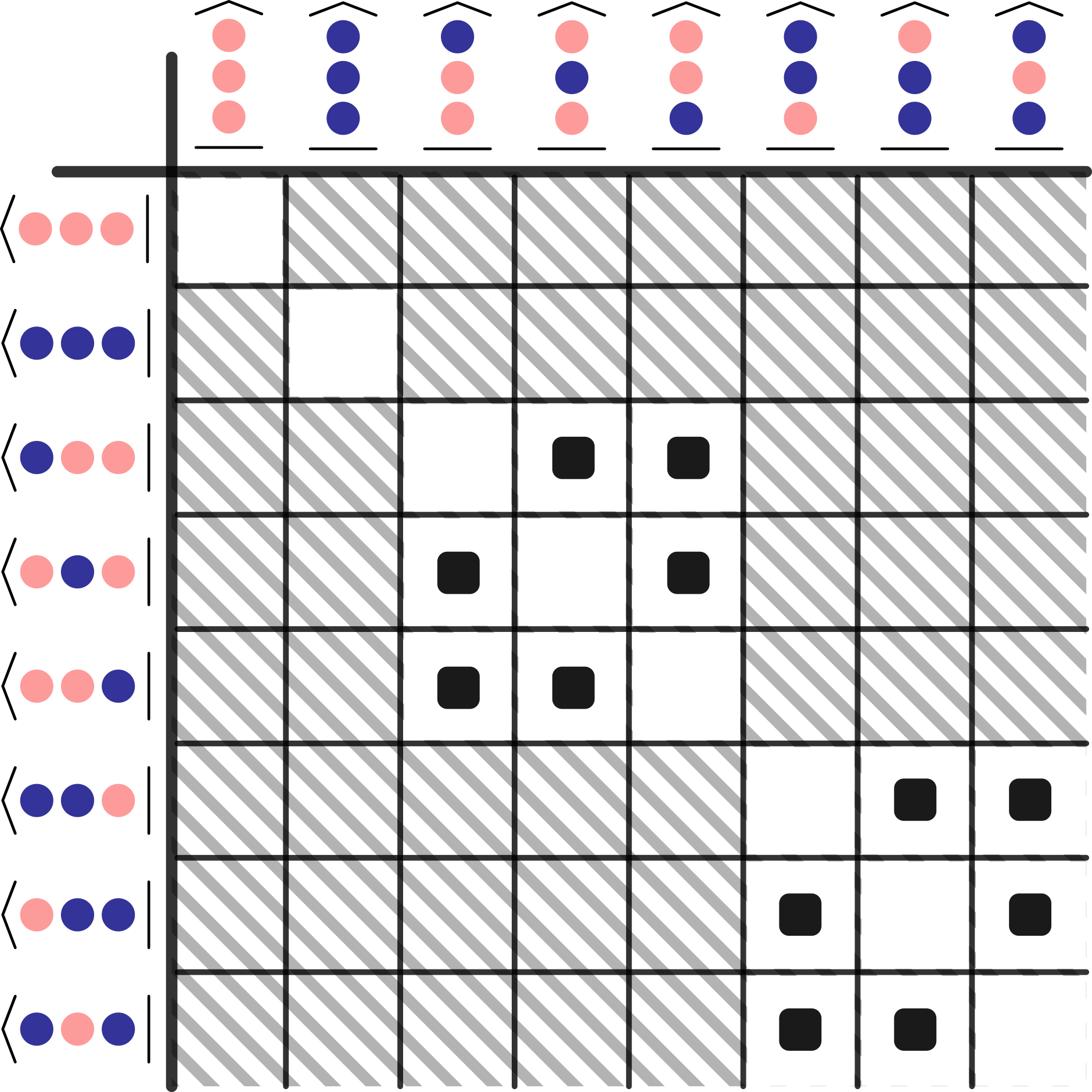}
        \caption{\textbf{An illustration of photocounting-relevant coherence.} Elements of the density matrix of a three-photon state $\bra{\coldot[gray]\coldot[gray]\coldot[gray]}\rho\ket{\coldot[gray]\coldot[gray]\coldot[gray]}$ with a two-dimensional internal state $\{\ket{\coldot[BlueViolet]}, \ket{\coldot[Apricot]}\}$. The black stamps depict the photocounting-relevant coherence, while the hatched area depicts the photocounting-irrelevant coherence---the share of the state whose effect cannot be detected by the photodetection observables. The diagram highlights the subspaces (in white) spanned by the orbit of each basis state under mode permutation.
        }
        \label{fig:irrelevant}

    \end{figure}

    The aim of this paper is to explore the notion of incoherent error in the context of multi-photon interference, by characterizing the regime in which correlations between the internal states of photons give rise to discrete, stochastic distinguishability errors. A formal understanding of this regime is a crucial first step toward developing simplified models of photonic noise—enabling more tractable analysis, efficient simulation techniques, practical error mitigation strategies, and targeted noise tailoring in quantum photonic systems.
    
    A first result of this paper is to show that the problem can be addressed in a basis-independent manner (Thm.~(\ref{th:state_equivM})). Inspired by the observation that ideal photodetectors resolve (but do not register) the internal state of each photon, we demonstrate that most coherences in an $n$-photon state are irrelevant for describing interference behavior and can be safely discarded without affecting the observed statistics (see Fig.~\ref{fig:irrelevant}).
    By linking the parameters of a known framework for partial distinguishability \cite{shchesnovich_partial_2015} to a concrete measurement procedure (App.~\ref{app:cyclic}), we show that all relevant features of a state can be accessed through a form of \emph{distinguishability tomography}.
    This approach bypasses the need for an explicit density-matrix description and avoids the ambiguities introduced by the implicit projection associated with photodetection.

    Using this tool, we derive the main result of this work: a characterization of all partially distinguishable states whose interference behavior can be attributed to discrete incoherent errors (Thm.~(\ref{th:partition})). This establishes a clear boundary between states belonging to an incoherent distinguishability regime (denoted by $\mathcal{I}$) and those exhibiting coherent distinguishability effects ($\overline{\mathcal{I}}$).
    
    In analogy with the qubit scenario, we show how incoherent distinguishability errors can be mitigated using probabilistic error cancellation (Thm.~\ref{th:mitigation}), and that any state can be brought to the incoherent distinguishability regime by applying a stochastic operation (Thm.~\ref{th:twirling}). Furthermore, we evidence that states transformed in this way retain computational hardness, supporting the argument that such operations do not diminish their usefulness for quantum information processing when applied appropriately.
    
    In Sec.~\ref{sec:simulation}, we analyze the structure of Boson Sampling in the presence incoherent distinguishability errors. In Sec.~\ref{sec:disambiguation} we discuss the notion of ``genuine $n$-photon indistinguishability"---a quantity that is central in the characterization of multi-photon distinguishability---and we demonstrate that it acquires a natural and consistent meaning within the incoherent distinguishability regime. However, we show that this interpretation is only partially compatible with other definitions found in the literature, including those based on representation-theoretic approaches \cite{tillmann_generalized_2015}. This mismatch unveils a previously overlooked fundamental ambiguity in the meaning of genuine indistinguishability, which our framework helps to clarify. We conclude in Section~\ref{sec:discussion} with a discussion and outlook.



\section{Permutation representation}
\label{sec:preliminaries}

A multi-photon interference experiment considers a physical system where a state of light occupying $n$ distinct input spatial modes evolves into $m$ output modes through the action of a linear interferometer. The input Hilbert space is a tensor product of $n$ identical Fock spaces $\hilbert = \bigotimes_i \fock_i$, and the basis of the space $\fock_i$ is represented using the bosonic creation operator $a_i^\dagger$ defining the Fock states $\ket{n}_i = \frac{1}{\sqrt{n!}} a_i^{\dagger n} \ket{0}_i$, respectively. The interferometer, represented by the single-shot evolution operator $\mathcal{U}$, acts linearly on the input creation operators such that, in Heisenberg picture:
 \begin{equation}
    \label{eq:BS_linear_evo}
    \evolveH{b^{\dagger}_{j}} = \sum_{j=1}^m U^*_{ij}a^{\dagger}_{i},
\end{equation}
for some $n \times m$ unitary matrix $U$, called the scattering matrix, and where $b_i^\dagger$ are bosonic creation operations for the output modes of the interferometer.

Ideally, this system is prepared in a standard initial state $\ket{1^n} = \bigotimes_i a_i^\dagger \ket{0}$ with one photon per input mode. Through the mode-mixing action of the interferometer, this state evolves into a non-separable superposition of states with different mode occupation in the output Hilbert space. Using the mode-occupation vector  $\outcome = [s_0, s_1, ...s_m]$ to label each output configuration, where the integer $s_i$ counts the number of photons in mode $i$, it becomes possible to construct the complete set of photocounting observables through the projection operators $\hat{O}^{\outcome} = \bigotimes_i \ket{s_i}_i\bra{s_i}_i$, such that the measured photocounting probabilities are:
 \begin{equation}
 p(\outcome | 1^n) = \trace{\evolve{\ket{1^n}\bra{1^n}} \hat{O}^{\outcome}}.
 \end{equation}
 In this context, Boson Sampling is the problem of devising an efficient classical procedure which, for any given scattering matrix $U$, produces outcome samples $\outcome$ according to the distribution $p(\outcome | 1^n)$.
 
To incorporate photon distinguishability, we can consider a more realistic definition of $\fock_i$ that contemplates a basis of orthogonal internal degrees of freedom $a^\dagger_{i, \alpha}$ with associated Fock states $\ket{n_\alpha}_i = \frac{1}{\sqrt{n!}} a_{i,\alpha}^{\dagger n} \ket{0}$.
It is then useful to define the $k$-particle subspace $\fock_i^{(k)}$ of $\fock_i$, spanned by the vectors $\{\bigotimes_\alpha \frac{1}{\sqrt{j_\alpha!}} a_{i,\alpha}^{\dagger j_\alpha}\ket{0}_i\}$ respecting, under proper symmetrization, $\sum_\alpha j_\alpha = k$. In this more general version of Boson Sampling, the initial state is an arbitrary state $\ket{\psi} \in \bigotimes_{i=1}^n \fock_i^{(1)}$.
 The linear interferometer is assumed to act equally on all internal degrees of freedom, such that \refeqn{eq:BS_linear_evo} continues to be valid after appending the subscript $\alpha$ to creation operators on both sides.
 In the same way, we obtain a new set of observables $\hat{O}^{\outcome}$ by substituting each $\ket{s_i}_i\bra{s_i}_i$ with the projection operator over $\fock_i^{(s_i)}$.
 In this broader context, the task is to understand how the choice of initial state $\ket{\psi}$ affects the observed statistics $p(\outcome|\psi)$, and how the complexity of the corresponding sampling problem compares to the ideal scenario using indistinguishable particles.

 We can obtain a clear picture of the effects of partial distinguishability by expressing the photocounting probabilities $p(\outcome)$ as arising from the interference of virtual paths through the interferometer \cite{shchesnovich_partial_2015} (see Fig.~\ref{fig:path}).

  \begin{figure}[t]
    \includegraphics[height = .8\linewidth]{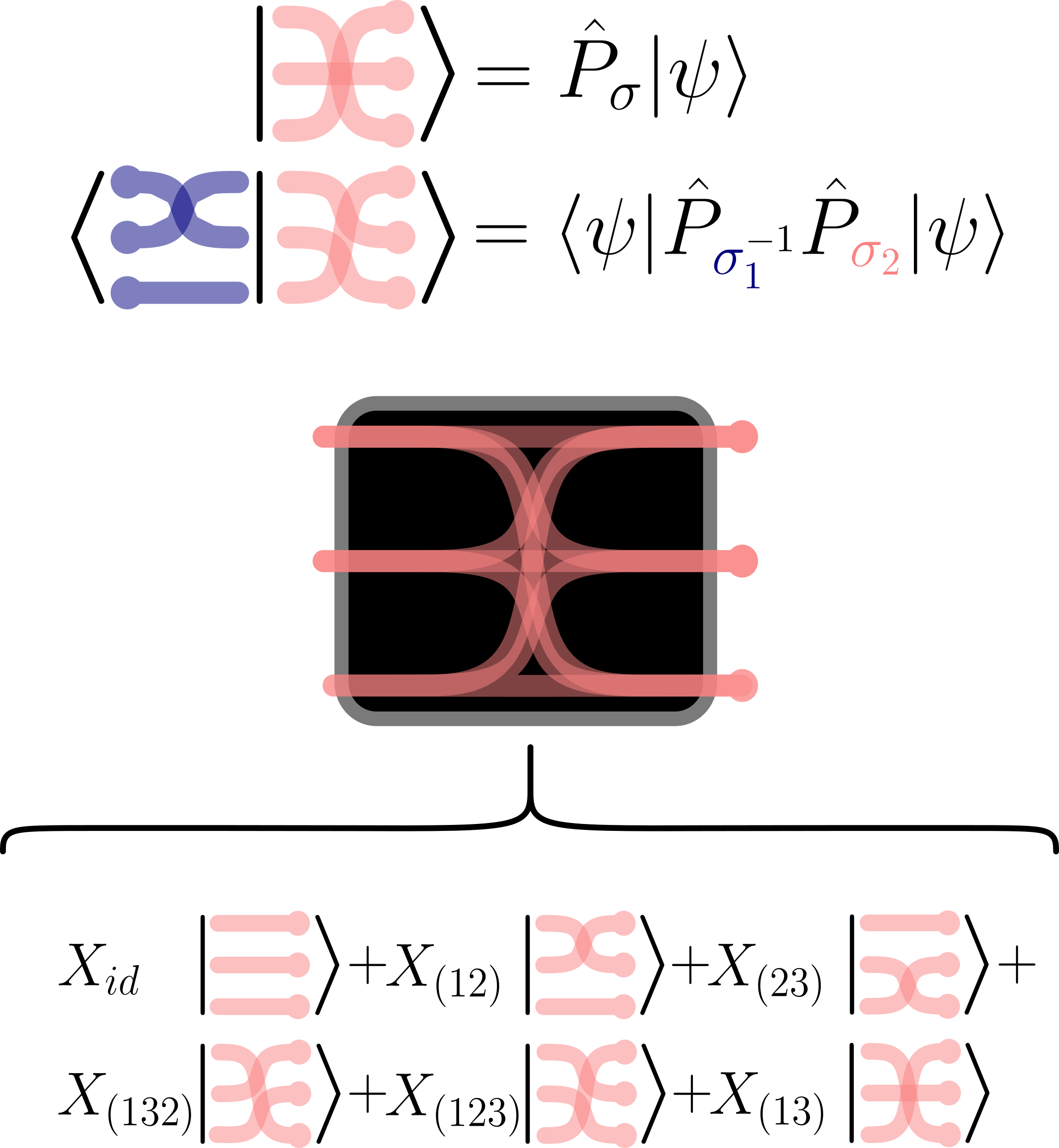}
    \caption{
    \textbf{A graphical representation of the decomposition of photon interference into paths through an interferometer.} The three-photon outcome $\outcome = [1,1,1]$ arising from a $3 \times 3$ interferometer is described by the coherent interference of all the $3!=6$ permutations path that could have led to its realization. The probability amplitude $X_\sigma$, indexed by the permutation $\sigma$ that describes a path, is given by the scattering matrix of the interferometer $U$ (e.g. $X_{(132)} = U_{13}U_{32}U_{21}$).
    The scalar products between two paths taken by the photons correspond to one of the indistinguishabilities of Shchesnovich's framework~\cite{shchesnovich_partial_2015}, here denoted as $M_\sigma$.
    \label{fig:path}
    }
\end{figure}

 If we pick an arbitrary element of the $n$-photon basis with internal states $\alpha_i$ and pass it through the interferometer we obtain:
 \begin{equation}
 \label{eq:allevolve}
     \mathcal{U}\prod_i a^\dagger_{i,\alpha_i} \ket{0} = \prod_i \left(\sum_j U_{ij}b^\dagger_{j,\alpha_i}\right) \ket{0}.
 \end{equation}
 By expanding the left hand side we recover $m^n$ distinct terms which can be listed through all the possible functions $f: \{1,..,n\}  \rightarrow\{1,...,m\}$. Each term corresponds to a discrete mode hopping event, where the photon in the input mode $i$ and internal state $\alpha_i$ is found in the output mode $f(i)$ with an amplitude of probability equal to:

 \begin{equation}
    \label{eq:path}
    X_{f} = \prod_{i=1}^n U_{i \,f(i)}.
 \end{equation}
  
 Using the projector $\hat{O}^{\outcome}$ we can further sort the paths based on the outcome they produce.
 Denoting with $\outcome_f$ the outcome linked to $f$, such that $(\outcome_f)_j = |f^{-1}(j)| $,
 and applying $\hat{O}^{\outcome}$ on \refeqn{eq:allevolve}, we're able to filter on the condition $\outcome_f = \outcome$ obtaining the relevant set of functions $\{f^{[\outcome]}\}$:

  \begin{equation}
    \label{eq:allpath}
    \hat{O}^{\outcome}\mathcal{U}\prod_i a^\dagger_{i,\alpha_i} \ket{0} = \sum_{f \in \{f^{[\outcome]}\}} \prod_i U_{i f(i)} b^{\dagger}_{f(i)}.
 \end{equation}

 Considering without loss of generality the scenario where $\outcome$ has no bunching and the photons are found in the first $n$ output modes ( $\outcome = [1,...,1,0,...,0]$ ) the set $f^{[\outcome]}$ obtained after applying  $\hat{O}^{\outcome}$ consists in all bijective functions from $\{1,...,n\}$ onto itself, that is the symmetric group $\symg{n}$ of all possible permutations $\sigma$ of $n$ objects.
 Upon the substitution $b^\dagger_{i,\alpha} \leftrightarrow a_{i,\alpha}^\dagger$, we can then rewrite the share of the state that produces $\outcome$ in terms of mode permutation operators $\opperm{\sigma}$, which act on $a^\dagger_i$ as $\opperm{\sigma} a_i^\dagger = a_{\sigma(i)}^\dagger$ (irrespective of the internal state).
 Using $X_\sigma$ to address the amplitude of probability that produces $\outcome$ through the mode hopping $\opperm{\sigma}$, we obtain the following equality, valid for an arbitrary basis element and thus valid for an arbitrary $\ket{\psi} \in \bigotimes_{i=1}^n \fock_i^{(1)}$:
 
  \begin{equation}
 \label{eq:evolution_permutation}
     \hat{O}^{\outcome} \mathcal{U}\ket{\psi} = \sum_{\sigma \in \symg{n}} X_\sigma \opperm{\sigma} \ket{\psi}.
 \end{equation}

 This last expression shows how the share of the output state pertaining to the subspace with mode occupation $\outcome$ can be seen as a superposition of all the possible permuted versions of the initial state $\ket{\psi}$ weighted by $X_\sigma$.
 For a state made of $n$ completely indistinguishable photons, the permutation operators act trivially on $\ket{\psi}$ leading to just one distinct pure state at the output. However, here we directly see how the coherent interference between the $n!$ paths is sensitive to the internal state of the photons. More explicitly computing $p(\outcome) = \bra{\psi} \mathcal{U}^\dagger\hat{O}^{\outcome} \mathcal{U}\ket{\psi}$ using \refeqn{eq:evolution_permutation}, it is possible to highlight the cross-path interference in a pairwise fashion:
 
 \begin{equation}
    \label{eq:path_interf}
    p(\outcome) = \sum_{\egperm{\sigma, \sigma'}} X^*_{\sigma'}X_\sigma \bra{\psi}\opperm{\sigma'^{-1}\sigma} \ket{\psi},
\end{equation}

\noindent where we have composed the action of the two permutation operators, noting that $\opperm{\sigma}^\dagger = \opperm{\sigma^{-1}}$.
\refeqn{eq:path_interf} shows that, in order to capture the effects of partial distinguishability, one must know the value of $n!$ independent quantities that govern the strength of coherent path interference. These are exactly computed by the (possibly complex) average value of the $n!$ mode permutation operators:

\begin{equation}
    M_\sigma = \langle \opperm{\sigma} \rangle.
\end{equation}

The quantities $M_\sigma$, here referred to as ``generalized indistinguishabilities", are a characterization of the symmetries of the $n$-photon initial state upon mode permutation, and represent all information about distinguishability that an ideal photocounting experiment is able to resolve.
This is the essence of Shchesnovich's distinguishability framework \cite{shchesnovich_partial_2015}, where this new set of parameters is shown to generalize to mixed initial states and all outcomes $\outcome$.

Importantly, in the ideal case where $\opperm{\sigma^{-1}\sigma'} \ket{1^n} = \ket{1^n}$ we have $M_\sigma = 1$ for all $\sigma$ and one can recover the well known permanent formula:
\begin{equation}
    p(\outcome | 1^n) = \left(\sum_{\egperm{\sigma}} X_\sigma\right)^*\left(\sum_{\egperm{\sigma}} X_\sigma\right) = \modsq{\permament{U^{[\outcome]}}} .
\end{equation}

This description of distinguishability exposes how, in general, a complete tomography of a multi-photon initial is redundant if the only goal is to describe the photocounting statistics it produces at the output of a linear interferometer. No matter the dimension of the space that is required to describe the state, it is sufficient to determine only $n!$ parameters. Considering all, possibly-mixed, partially-distinguishable $n$-photon states, notated here by the set $\mathcal{D}$ of density matrices over $\bigotimes_{i=1}^n \fock_i^{(1)}$, it is useful to establish an equivalence relation that dictates whether two such states can be told apart using only linear optics and photocounting:

\begin{definition}
\label{def:state_equiv}
    Given $\rho$, $\rho'$ $\in \mathcal{D}$, $\rho \sim \rho'$ if and only if $p(\outcome | \rho) = p(\outcome | \rho')$ for all unitaries $U$ and outcomes $\outcome$.
\end{definition}

A relevant feature of $\sim$ is that, being based on the observables $\hat{O}^{\outcome}$, it obeys the same algebra of expectation values and is preserved upon mixing:

\begin{equation}
    \rho_1 \sim \rho'_1,\rho_2 \sim \rho'_2 \Rightarrow \rho_1 + \rho_2 \sim \rho_1' + \rho_2'.
\end{equation}
Given \refeqn{eq:path_interf}, if two states present the same generalized indistinguishabilities $M_\sigma$, they will produce the same photocounting statistics in every interference experiment. In addition Lemma \ref{lm:shch_sufficient}, proven in Appendix~\ref{app:cyclic}, shows that the converse is also true: each generalized indistinguishability can be unambiguously measured with a carefully chosen photocounting experiment that generalizes the cyclic interferometer proposed in Ref.~\cite{pont_quantifying_2022}. This family of $2n$-mode interferometers is constructed by applying a layer of $n$ balanced beam splitters, followed by one or more phase shifters, a permutation applied to every second mode, and a final layer of balanced beam splitters (see Fig.~\ref{fig:cyclic} in Appendix~\ref{app:cyclic}). The interferometer described in Ref.~\cite{pont_quantifying_2022} can thus be understood as the specific case where the permutation consists of a single $n$-cycle.

This explicit construction allows Def.~\ref{def:state_equiv} to be equivalently formulated in terms of the generalized indistinguishabilities.
Considering two $n$-photon states $\rho, \rho'$ and their indistinguishabilities $M_\sigma = \langle \hat{P}_\sigma \rangle_\rho$ and $M'_\sigma = \langle \hat{P}_\sigma\rangle_{\rho'}$ we have the following:
\begin{theorem}
\label{th:state_equivM}
    Given $\rho$, $\rho'$ $\in \mathcal{D}$, $\rho \sim \rho' \iff M_\sigma = M'_\sigma$ for all $\sigma \in \symg{n}$,
\end{theorem}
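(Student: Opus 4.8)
The plan is to prove the two implications separately, since their content is quite different. The forward direction $M_\sigma = M'_\sigma\,\forall\sigma \follows \rho \sim \rho'$ is an essentially immediate consequence of the path-interference decomposition, whereas the converse is the substantive claim and rests entirely on the fact that each generalized indistinguishability can be isolated by a dedicated photocounting experiment.

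For the direction $(\Leftarrow)$, I would start from \refeqn{eq:path_interf} and merely reorganize the double sum. Setting $\tau = \sigma'^{-1}\sigma$ and relabeling, one obtains
\begin{equation}
    p(\outcome) = \sum_{\egperm{\tau}} c_\tau(U,\outcome)\, M_\tau, \qquad c_\tau(U,\outcome) = \sum_{\egperm{\sigma'}} X^*_{\sigma'} X_{\sigma'\tau},
\end{equation}
which exhibits $p(\outcome)$ as a linear functional of the family $\{M_\tau\}_{\tau \in \symg{n}}$ whose coefficients $c_\tau$ depend only on the scattering matrix $U$ and the outcome $\outcome$, never on the state. The passage from pure to mixed $\rho \in \mathcal{D}$ is automatic, since $M_\sigma = \trace{\rho\,\opperm{\sigma}}$ is linear in $\rho$ and the observables $\hat{O}^{\outcome}$ respect mixing. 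The one point requiring care is that \refeqn{eq:path_interf} was derived for the collision-free outcome $\outcome = [1,\dots,1,0,\dots,0]$: for bunched outcomes the relevant set of mode-hopping functions is no longer all of $\symg{n}$, but the combinatorial bookkeeping of the repeated rows and columns leaves the structure intact, and one still recovers $p(\outcome) = \sum_\tau c_\tau(U,\outcome)\,M_\tau$ with state-independent coefficients (this is exactly the generalization of Shchesnovich's framework to arbitrary $\outcome$). Granting this, $M_\tau = M'_\tau$ for every $\tau$ forces $p(\outcome|\rho) = p(\outcome|\rho')$ for every $U$ and every $\outcome$, which is precisely $\rho \sim \rho'$.

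For the direction $(\Rightarrow)$, I would argue directly from the measurability of the $M_\sigma$. The key ingredient is Lemma~\ref{lm:shch_sufficient}: for each fixed $\egperm{\sigma}$ there is an explicit interferometer drawn from the $2n$-mode cyclic family (a layer of balanced beam splitters, phase shifters, a mode permutation applied to every second mode, and a final beam-splitter layer) together with a set of outcomes such that a fixed linear combination of the measured probabilities returns $M_\sigma$. Since $\rho \sim \rho'$ forces $p(\outcome|\rho) = p(\outcome|\rho')$ for all $U$ and $\outcome$ at once, every such linear combination agrees on $\rho$ and $\rho'$, whence $M_\sigma = M'_\sigma$ for all $\sigma$. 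The main obstacle of the whole theorem lives here, and is deferred to Appendix~\ref{app:cyclic}: one must show that the cyclic family is expressive enough to disentangle the contribution of a single permutation from all the others and, because $M_\sigma$ is generally complex with $M_{\sigma^{-1}} = \overline{M_\sigma}$, to recover both its real and imaginary parts from real-valued photocounting data---which is exactly the role played by the tunable phase shifters in the construction.
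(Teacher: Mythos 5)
Your proposal is correct and takes essentially the same route as the paper: your $(\Leftarrow)$ direction is exactly the paper's Lemma~\ref{lm:shch_necessary} (regrouping \refeqn{eq:path_interf} into a linear functional of the $M_\sigma$ with state-independent coefficients), and your $(\Rightarrow)$ direction is exactly Lemma~\ref{lm:shch_sufficient}, with the substantive work---showing the cyclic interferometer family $C_\sigma$ can isolate each complex $M_\sigma$ via phase-dependent fringes---deferred to Appendix~\ref{app:cyclic} just as in the paper.
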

\begin{proof}
    The statement follows from Lemmas \ref{lm:shch_necessary} and \ref{lm:shch_sufficient}.
\end{proof}

We thus prove that this characterization of distinguishability, formulated in terms of permutation observables, constitutes a non-redundant and basis-independent quorum of information sufficient to describe the outcome of any interference experiment performed on the state.

\section{Partition representation}
\label{sec:reconstruct}

In the previous section, we explored the complete description of multi-photon interference in terms of path permutations, which captures both coherent and incoherent effects. In this section, we focus on deriving the restricted set of states where imperfections due to partial distinguishability give rise to purely incoherent behavior.

    We begin by fixing a basis $\ket{1_\alpha}$ for all the single particle subspaces $\fock^{(1)}$, and consider the most general $n$-photon state $\rho \in \mathcal{D}$ with a discrete distinguishability structure:

    \begin{equation}
    \label{eq:general_decomp}
        \rho = \!\!\sum_{\alpha_1, \hdots, \alpha_n} p(\alpha_1,\hdots, \alpha_n) \ket{1_{\alpha_1}\hdots1_{\alpha_n}}\bra{1_{\alpha_1}\hdots1_{\alpha_n}},
    \end{equation}
    \noindent where the sum runs over all possible distinct internal states $\alpha_i$ for the $n$ particles.
    States with such structure can exhibit classical correlation but possess no coherence between the internal state of each photon.

\begin{figure}[htb]
    \centering
    \includegraphics[width = .65\linewidth]{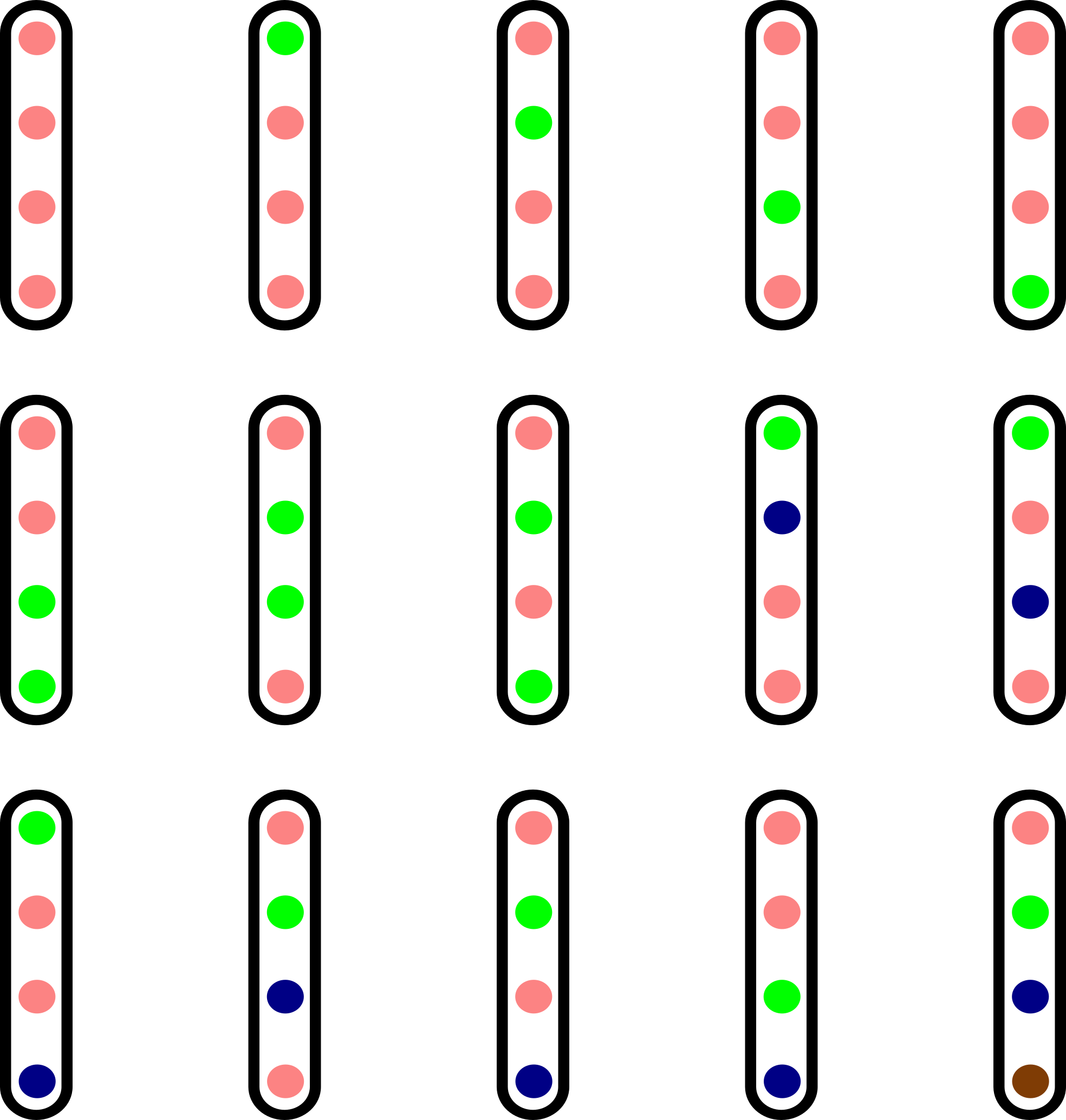}
    \caption{ \textbf{An illustration of partition states} All 15 possible discrete distinguishability configurations of four photons in four modes. In the diagram, photons of the same color are fully indistinguishable ($\scalar{\coldot[Apricot]}{\coldot[Apricot]} = 1$),  while photons of different colors are fully distinguishable ($\scalar{\coldot[Apricot]}{\coldot[BlueViolet]} = 0$).}
    \label{fig:configurations}
\end{figure}
    When focusing on a single eigenstate of this decomposition, it becomes evident that its behavior upon permutation, and thus the value of its generalized indistinguishabilities $M_\sigma = \langle \opperm{\sigma}\rangle$, depends only on the state's \emph{configuration} of internal states and not on the specific basis elements that realize it. For example: $ \ket{1_a 1_b 1_b 1_c 1_a} \sim \ket{1_d 1_a 1_a 1_f 1_d}$. These distinguishability configurations are illustrated in Fig.~\ref{fig:configurations} for $n=4$ particles.
    The various groupings of indistinguishable photons among these particles induce a ``distinguishability partition" $\partition{\Lambda}$. Formally, this is a collection of non-intersecting subsets of the modes $\Lambda_i$ that sum to the whole set. Using Thm.~\ref{th:state_equivM}, it is then possible to group together all states that exhibit the same behavior under photocounting, globally obtaining a single probability distribution $\{ \wpart{\Lambda}\}$ for all such configurations.
    By choosing an arbitrary set of representatives for each distinguishability partition $\{ \partket{\Lambda} \}$, we can rewrite all states of the form of \refeqn{eq:general_decomp} as:
    
    \begin{equation}
        \label{eq:part_repr}
     \rho \sim \sum_{\partition{\Lambda}} \wpart{\Lambda} \partket{\Lambda} \partbra{\Lambda}.
    \end{equation}
    We then define the class $\mathcal{I} \subset \mathcal{D}$ of states which exhibit a incoherent distinguishability behavior as the ones for which the decomposition of \refeqn{eq:part_repr} is possible:

    \begin{definition}[Incoherent distinguishability]
        Let $\rho$ be a partially distinguishable $n$-photon state:
        $$ \rho \in \mathcal{I} \iff \;\exists\: p_{\partition{\Lambda}} \text{ such that } \rho \sim \sum_{\partition{\Lambda}} \wpart{\Lambda} \partket{\Lambda} \partbra{\Lambda} .$$
    \end{definition}

    The task of characterizing $\mathcal{I}$  can thus be seen as an instance of a quasi-probability reconstruction problem \cite{sperling_quasiprobability_2018}: we are asked to recover the features of a given initial state $\rho$ through a possibly positive mixture of states with a ``classical" behavior.
    In our scenario, the reference set of states defining the classical behavior of distinguishability are the distinguishability configurations illustrated in Fig.~\ref{fig:configurations}---the partition states $\partket{\Lambda}$.
    The added subtlety of our problem is that, in contrast to Ref.~\cite{sperling_quasiprobability_2018}, our knowledge of the state is only indirect and passes through the eyes of the observables $M_\sigma$.

    In this context, the existence of a partition representation (and its positivity) can be considered a classicality boundary for the behavior of distinguishability (see Fig.~\ref{fig:classicality}), adding further structure on top of the known difference between pure \cite{tichy_sampling_2015} and mixed \cite{shchesnovich_partial_2015} states.
    Furthermore an interesting feature of such a representation is that the behavior of each partition state $\partket{\Lambda}$ can be expressed in terms of the properties of the associated partition $\partition{\Lambda}$.
    In particular, partitions can be partially ordered by a set inclusion criterion ($\succeq$, see Appendix~\ref{app:partition}) and the indistinguishabilities $M_\sigma$ of each $\partket{\Lambda}$ can be obtained by comparing $\partition{\Lambda}$ with the partition $\partition{\sigma}$ induced by the disjoint cycles of $\sigma$:

    \begin{equation}
        \label{eq:partition_msigma}
        M_\sigma = 
        \begin{cases}
        1 & \text{if } \partition{\Lambda} \succeq \partition{\sigma} \\
        0 & \text{otherwise}
        \end{cases}.
    \end{equation}

    This last equation captures the intuitive statement that, in order to obtain non-zero indistinguishability, $\sigma$ cannot move photons between two distinct cells $\Lambda_i \neq \Lambda_j$ of the partition $\partition{\Lambda}$, which contain photons in orthogonal states.
    We can then observe how, for a partition state $\partket{\Lambda}$, the values of $M_\sigma$ no longer explicitly depend on the permutation $\sigma$, but rather on its cycle partition $\partition{\sigma}$.
    This property, which we refer to as ``orbit invariance", is clearly preserved by mixing in \refeqn{eq:part_repr}, and is thus a necessary condition for a state to belong to $\mathcal{I}$:

    \begin{lemma}
    \label{lm:partition_necessary}
        Let $\rho\in \mathcal{I}$ with distinguishabilities $M_\sigma$ and $\sigma, \tau \in \mathcal{S}_n$:
        $$\partition{\sigma} = \partition{\tau} \follows M_\sigma = M_\tau .$$
    \end{lemma}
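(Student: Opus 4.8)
The plan is to combine the defining decomposition of the incoherent class $\mathcal{I}$ with the explicit single-partition-state indistinguishabilities of \refeqn{eq:partition_msigma}, exploiting that each $M_\sigma = \avg{\opperm{\sigma}}$ is an expectation value and therefore behaves linearly under the decomposition. Since $\rho \in \mathcal{I}$, by definition there exist weights $\wpart{\Lambda}$ such that $\rho \sim \sum_{\partition{\Lambda}} \wpart{\Lambda}\,\partket{\Lambda}\partbra{\Lambda}$. My first move is to promote this $\sim$-equivalence to an equality of indistinguishabilities: Thm.~\ref{th:state_equivM} guarantees that $\rho \sim \rho'$ holds precisely when $M_\sigma = M'_\sigma$ for every $\sigma \in \mathcal{S}_n$, so the $M_\sigma$ computed on $\rho$ are identical to those computed on the partition mixture.

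The core of the argument is then a short reduction. Writing $\avg{\opperm{\sigma}}_{\partition{\Lambda}} := \partbra{\Lambda}\opperm{\sigma}\partket{\Lambda}$ for the indistinguishability of a single partition state and using linearity of the expectation value over the decomposition, I obtain
\[
M_\sigma = \sum_{\partition{\Lambda}} \wpart{\Lambda}\, \avg{\opperm{\sigma}}_{\partition{\Lambda}}.
\]
Next I invoke \refeqn{eq:partition_msigma}, which states that for each fixed $\partition{\Lambda}$ the term $\avg{\opperm{\sigma}}_{\partition{\Lambda}}$ equals $1$ if $\partition{\Lambda} \succeq \partition{\sigma}$ and $0$ otherwise; in particular it depends on $\sigma$ only through the cycle partition $\partition{\sigma}$. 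The conclusion is then immediate: if $\partition{\sigma} = \partition{\tau}$, then for every $\partition{\Lambda}$ the set-inclusion tests $\partition{\Lambda} \succeq \partition{\sigma}$ and $\partition{\Lambda} \succeq \partition{\tau}$ agree, so $\avg{\opperm{\sigma}}_{\partition{\Lambda}} = \avg{\opperm{\tau}}_{\partition{\Lambda}}$ term by term, and summing against the weights $\wpart{\Lambda}$ yields $M_\sigma = M_\tau$.

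Given \refeqn{eq:partition_msigma}, there is no genuine analytic difficulty here; the content of the lemma is essentially carried by that per-partition-state formula, which I take as granted. The only step that warrants care is conceptual rather than computational: the relation $\sim$ certifies only that $\rho$ and the partition mixture reproduce the same photocounting statistics, not that they coincide as density operators, so the passage from ``$\rho \in \mathcal{I}$'' to ``$M_\sigma$ is a weighted combination of partition-state indistinguishabilities'' must be routed through Thm.~\ref{th:state_equivM}. Once that identification is in place, the ``orbit invariance'' property---constancy of $M_\sigma$ across all $\sigma$ sharing a given cycle partition---is simply inherited from the individual partition states, whose indistinguishabilities already enjoy it by \refeqn{eq:partition_msigma}, and is preserved by the weighted sum. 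Note that the invariance is genuinely tied to the cycle \emph{partition} $\partition{\sigma}$ and not merely to the cycle type of $\sigma$, since two permutations with the same cycle type but different supports can test differently against a fixed $\partition{\Lambda}$.
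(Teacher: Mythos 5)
Your proof is correct and follows essentially the same route as the paper: the paper's argument is precisely that each partition state's indistinguishabilities depend on $\sigma$ only through $\partition{\sigma}$ by \refeqn{eq:partition_msigma}, and that this orbit invariance is preserved under the mixing in \refeqn{eq:part_repr}. Your explicit routing through Thm.~\ref{th:state_equivM} to convert the $\sim$-equivalence into equality of the $M_\sigma$ is a point the paper leaves implicit, and making it explicit is a welcome clarification rather than a deviation.
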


    A first consequence of constraining distinguishability to an effect of classical correlations is that the original set of $n!$ independent parameters $M_\sigma$ must be reduced to a size equal to the number of distinct partitions $\belln{n}$ called the Bell number (see Appendix~\ref{app:partition}). Moreover, orbit invariance is actually enough to recover a partition distribution that can reproduce the interference behavior of the state:
    
    \begin{lemma}
    \label{lm:partition_sufficient}
        Let $\rho\in\mathcal{D}$ with distinguishabilities $M_\sigma$, such that $\partition{\sigma} = \partition{\tau} \follows M_\sigma = M_\tau$.
        Then there exists a $ \rho' \in \mathcal{I} $ such that $\rho \sim \rho'$.
    \end{lemma}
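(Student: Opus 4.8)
The plan is to use Theorem~\ref{th:state_equivM} to reduce the claim to a purely combinatorial reconstruction: since $\rho \sim \rho'$ holds exactly when the two states share all generalized indistinguishabilities, it suffices to exhibit a genuine probability distribution $\{\wpart{\Lambda}\}$ (nonnegative and normalized) over partitions such that the mixture $\rho' = \sum_{\partition{\Lambda}} \wpart{\Lambda}\, \partket{\Lambda}\partbra{\Lambda}$ has indistinguishabilities equal to the given $M_\sigma$. Orbit invariance is what makes this possible, as it lets us regard the data as a single function on the partition lattice, $m(\partition{\sigma}) := M_\sigma$, well defined because $M_\sigma$ depends only on $\partition{\sigma}$. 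I would first record that $m$ is real valued: since $\partition{\sigma^{-1}} = \partition{\sigma}$ while $M_{\sigma^{-1}} = \langle \opperm{\sigma}^\dagger\rangle = M_\sigma^*$, orbit invariance forces $m = m^*$, removing at the outset any obstruction from the possibly complex collective phases.

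Next I would pin down the candidate weights and their normalization. By \refeqn{eq:partition_msigma} and linearity, the indistinguishabilities of any partition mixture are the cumulative (zeta) sums $M'_\sigma = \sum_{\partition{\Lambda} \succeq \partition{\sigma}} \wpart{\Lambda}$, so demanding $M'_\sigma = M_\sigma$ for all $\sigma$ is the linear system $m(\partition{\sigma}) = \sum_{\partition{\Lambda} \succeq \partition{\sigma}} \wpart{\Lambda}$ on the partition lattice ordered by $\succeq$. This system has a unique solution given by Möbius inversion, $\wpart{\Lambda} = \sum_{\partition{\nu} \succeq \partition{\Lambda}} \mu(\partition{\Lambda}, \partition{\nu})\, m(\partition{\nu})$, where $\mu$ is the Möbius function of the lattice. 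Normalization is then automatic: evaluating the cumulative relation at the all-singletons partition $\partition{e}$ induced by the identity (which lies below every partition) gives $\sum_{\partition{\Lambda}} \wpart{\Lambda} = m(\partition{e}) = M_e = 1$.

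The crux is positivity, $\wpart{\Lambda} \ge 0$, which is precisely what upgrades the signed solution to a genuine state lying in $\mathcal{I}$. Here I would \emph{not} try to argue that the raw Möbius combination of permutation operators is positive, since it is not in general; instead I would read the weights off the discrete-distinguishability decomposition~\refeqn{eq:general_decomp} of $\rho$ fixed at the start of the section. Each configuration $\ket{1_{\alpha_1}\cdots 1_{\alpha_n}}$ induces a partition $\pi(\alpha)$ by grouping modes carrying equal internal labels, and by the orbit invariance of a single configuration it satisfies $\langle \opperm{\sigma}\rangle = [\,\pi(\alpha) \succeq \partition{\sigma}\,]$. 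Grouping the classical weights of \refeqn{eq:general_decomp} by induced partition, $\wpart{\Lambda}^{\mathrm{cl}} := \sum_{\alpha:\, \pi(\alpha) = \partition{\Lambda}} p(\alpha_1,\dots,\alpha_n)$, yields a manifestly nonnegative, normalized distribution whose cumulative sums reproduce $m$; by uniqueness of Möbius inversion this classical distribution coincides with the candidate weights, certifying $\wpart{\Lambda} = \wpart{\Lambda}^{\mathrm{cl}} \ge 0$.

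Finally I would assemble the conclusion: the state $\rho' = \sum_{\partition{\Lambda}} \wpart{\Lambda}\, \partket{\Lambda}\partbra{\Lambda}$ is a convex combination of partition states, hence a legitimate density matrix in $\mathcal{I}$ by definition, and by construction its indistinguishabilities equal $m(\partition{\sigma}) = M_\sigma$, so Theorem~\ref{th:state_equivM} gives $\rho \sim \rho'$. I expect the positivity step to be the main obstacle: Möbius inversion by itself delivers only a signed quasi-distribution, and what rescues it is recognizing the inverted weights as honest classical marginals of the decomposition~\refeqn{eq:general_decomp}. This is exactly where orbit invariance does the work, since it guarantees that grouping configurations by partition is simultaneously consistent with all $n!$ indistinguishabilities, collapsing them without contradiction onto the $\belln{n}$ lattice values and ensuring the resulting distribution is both admissible and exact.
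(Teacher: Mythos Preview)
Your approach through M\"obius inversion on the partition lattice matches the paper's proof exactly, and the observations about reality of $m$ and normalization via $M_{\mathrm{id}}=1$ are correct and nice to record. The problem is the positivity step, which is both unnecessary and wrong.

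It is unnecessary because the paper's definition of $\mathcal{I}$ does not demand $\wpart{\Lambda}\ge 0$: membership in $\mathcal{I}$ only requires the existence of \emph{some} (quasi-)distribution $\{\wpart{\Lambda}\}$ reproducing all $M_\sigma$ (cf.\ Fig.~\ref{fig:classicality}, where positivity is a strictly stronger inner region, and the explicit hypothesis ``$\wpart{\Lambda}\ge 0$'' added in later theorems). Hence once M\"obius inversion delivers the unique $\{\wpart{\Lambda}\}$, the lemma is proved; the paper stops there.

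It is wrong because the decomposition~\refeqn{eq:general_decomp} is \emph{not} available for an arbitrary $\rho\in\mathcal{D}$: that equation introduces a special motivating class of classically-correlated states, not a universal representation of the $\rho$ in the lemma. Your argument ``read the weights off~\refeqn{eq:general_decomp}'' therefore assumes what is not given. Concretely, Example~\ref{ex:negative_partition} exhibits a pure three-photon state that satisfies orbit invariance yet has $p_{\{1,2,3\}}=-\tfrac{1}{8}<0$; by uniqueness of the M\"obius solution, no nonnegative $\{\wpart{\Lambda}\}$ can reproduce its $M_\sigma$, directly refuting the conclusion of your positivity step. So keep the M\"obius-inversion core and drop the appeal to~\refeqn{eq:general_decomp}.
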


    The proof of Lm.~\ref{lm:partition_sufficient} relies on inverting a system of equations based on the order relation $\succeq$ between partitions.
    Problems of this kind are already known in the literature and are linked to the same combinatorial relations that leads to the ``inclusion-exclusion" principle exploited in Ryser's algorithm~\cite{ryser_combinatorial_1963}, namely M\"obius inversion.
    We thus rely on the following result: 

    \begin{lemma}[M\"obius inversion \cite{rota_foundations_1964}]
    \label{lm:mobius}
        Let $\succeq$ be a partial order relation over a set $\Omega$.
        The  $|\Omega|\times |\Omega|$ matrix 
        $$
        R_{ab} =
        \begin{cases}
            1 & \text{if } a \succeq b\\
            0 & \text{otherwise}
        \end{cases}
        $$
        is invertible.
    \end{lemma}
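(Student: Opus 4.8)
The plan is to show that, after a suitable relabeling of the elements of $\Omega$, the matrix $R$ becomes triangular with unit diagonal, from which invertibility is immediate. Since the poset relevant to our application ($\Omega = \belln{n}$, the set of partitions ordered by $\succeq$) is finite, I will work throughout with $|\Omega| = N < \infty$, consistent with the statement that $R$ is an $|\Omega|\times|\Omega|$ matrix.

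First I would establish the structural fact underlying the whole argument: every finite partial order admits a \emph{linear extension}, i.e.\ a total order refining $\succeq$. Concretely, I claim there is an enumeration $\Omega = \{a_1, \dots, a_N\}$ such that $a_i \succeq a_j$ implies $i \leq j$. This follows by induction on $N$. A finite nonempty poset always has a maximal element $a$: starting from any element and repeatedly passing to a strictly larger one yields a chain that must terminate because $\Omega$ is finite. Setting $a_1 = a$, no element $a_i$ with $a_i \neq a_1$ can satisfy $a_i \succeq a_1$ (that would force $a_i \succ a_1$, contradicting maximality), so placing $a$ first respects the required implication; applying the inductive hypothesis to the sub-poset $\Omega \setminus \{a\}$ orders the remaining elements $a_2, \dots, a_N$ and completes the enumeration.

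With this enumeration fixed, I would reindex the rows and columns of $R$ accordingly. By construction a nonzero entry $R_{a_i a_j} = 1$ forces $a_i \succeq a_j$ and hence $i \leq j$, so all nonzero entries lie on or above the diagonal: $R$ is upper triangular. Reflexivity of the partial order gives $a_i \succeq a_i$, so $R_{a_i a_i} = 1$ for every $i$, i.e.\ the diagonal is all ones. The determinant of an upper-triangular matrix equals the product of its diagonal entries, here $1 \neq 0$, so $R$ is invertible.

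The main (and essentially only) obstacle is the existence of the linear extension, i.e.\ the topological-sorting step; once the poset is triangularized the rest is routine linear algebra. I would note in passing that the same decomposition yields the inverse explicitly: writing $R = I + S$ with $S$ strictly upper triangular and therefore nilpotent, one has $R^{-1} = \sum_{k \geq 0} (-S)^k$, a finite sum whose entries are precisely the values of the Möbius function of the poset—the quantity invoked in the inclusion--exclusion inversion used to prove Lemma~\ref{lm:partition_sufficient}.
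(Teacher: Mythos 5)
Your proof is correct. One structural point worth noting: the paper never proves Lemma~\ref{lm:mobius} itself---it is invoked as a known result of Rota---and the closest in-paper argument is Prop.~\ref{lm:perm_to_part} in Appendix~\ref{app:partition}, which establishes invertibility only for the partition poset $\belln{n}$. Your argument and the paper's share the same core mechanism: exhibit an enumeration of the poset under which $R$ becomes triangular with unit diagonal, so that $\det R = 1$. The difference lies in how the enumeration is produced. The paper uses a grading specific to partitions---sorting by the number of cells, justified by Lemma~\ref{lm:deferrec_celln} ($\partition{\Lambda} \succ \partition{\Xi} \follows |\partition{\Lambda}| < |\partition{\Xi}|$)---so that comparable distinct partitions necessarily land on one side of the diagonal. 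You instead prove, by induction via maximal elements, that \emph{every} finite poset admits a linear extension (a topological sort), and triangularize with respect to that. Your route thus establishes the lemma in the generality in which it is actually stated, covering arbitrary finite posets rather than the single instance the paper needs; conversely, the paper's cell-count sorting is fully explicit and requires no inductive construction, which is convenient for the algorithmic use it makes of the triangular structure (inversion by forward substitution). Your closing observation that writing $R = I + S$ with $S$ strictly triangular, hence nilpotent, gives $R^{-1} = \sum_{k \geq 0} (-S)^k$ is also correct, and it recovers exactly the paper's remark that the entries of $R^{-1}$ are the values of the M\"obius function used in the inversion of Lemma~\ref{lm:partition_sufficient}.
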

    
    \begin{proof}[Proof of Lm.~\ref{lm:partition_sufficient}]
    If we consider a state $\rho'\in \mathcal{I}$, with a partition distribution $\wpart{\Lambda}$, we can use \refeqn{eq:partition_msigma} to first express the generalized indistinguishabilities of each partition state. By summing over all possible partitions, the generalized indistinguishabilities $M'_\sigma$ of $\rho'$ are then:
    \begin{equation}
    \label{eq:msigma_sum1}
        M'_\sigma = \sum_{\partition{\Lambda} \succeq \partition{\sigma}} \wpart{\Lambda}.
    \end{equation}
    
    Taking into account the resizing implied by orbit invariance, \refeqn{eq:msigma_sum1} defines a system of $\belln{n}$ distinct linear equations that compute the generalized indistinguishabilities $M'_\sigma$ given a partition distribution $\wpart{\Lambda}$.
    Rewriting this system in matrix form, one obtains
    $ M'_\sigma~=~\sum_{\partition{\Lambda}} R_{\partition{\sigma}\partition{\Lambda}}\cdot\wpart{\Lambda}$
    with : 
    $$
    R_{\partition{\sigma}\partition{\Lambda}} =
        \begin{cases}
            1 & \text{if } \partition{\sigma} \preceq \partition{\Lambda}\\
            0 & \text{otherwise}
        \end{cases}.
    $$
    According to Lm.~\ref{lm:mobius}, $R_{\partition{\sigma}\partition{\Lambda}}$ can be inverted, recovering a partition distribution $\wpart{\Lambda}$ which reproduces the distinguishabilities $M_\sigma$ of $\rho$.
    
    \end{proof}

    For an interested reader, we also provide a more elementary proof of Lm.~\ref{lm:partition_sufficient} in Appendix~\ref{lm:perm_to_part} that is based on the determinant.
    
    By combining Lemmas \ref{lm:partition_sufficient} and \ref{lm:partition_necessary}, we can finally obtain the desired characterization of the class $\mathcal{I}$:
    \begin{theorem}
    \label{th:partition}
        Let $\rho\in \mathcal{D}$ with distinguishabilities $M_\sigma$.
        Then $\rho\in \mathcal{I}$ if and only if $\partition{\sigma} = \partition{\tau} \follows M_\sigma = M_\tau$.
    \end{theorem}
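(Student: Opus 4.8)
The plan is to prove the biconditional by treating its two directions as direct consequences of the two lemmas just established, since Theorem~\ref{th:partition} is precisely the conjunction of the necessity and the sufficiency of orbit invariance for membership in $\mathcal{I}$. I would open by recalling the defining condition of $\mathcal{I}$ in \refeqn{eq:part_repr} and announcing that it suffices to show separately that $\rho \in \mathcal{I}$ forces the implication $\partition{\sigma} = \partition{\tau} \follows M_\sigma = M_\tau$, and conversely that this implication guarantees a partition representation.

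For the forward direction ($\rho \in \mathcal{I} \Rightarrow$ orbit invariance), I would simply invoke Lemma~\ref{lm:partition_necessary}. The underlying reason, already spelled out before \refeqn{eq:partition_msigma}, is that every partition state $\partket{\Lambda}$ has indistinguishabilities depending only on the cycle partition $\partition{\sigma}$; since the $M_\sigma$ of any mixture are obtained by the same convex combination of the component values (the relation $\sim$ being linear and preserved under mixing), orbit invariance is inherited by $\rho$. No computation is needed beyond citing the lemma.

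For the reverse direction (orbit invariance $\Rightarrow \rho \in \mathcal{I}$), I would apply Lemma~\ref{lm:partition_sufficient}, which produces a state $\rho' \in \mathcal{I}$ with $\rho \sim \rho'$ by Möbius-inverting the incidence matrix $R_{\partition{\sigma}\partition{\Lambda}}$ of the partition order $\succeq$ (invertible by Lemma~\ref{lm:mobius}) to solve \refeqn{eq:msigma_sum1} for the weights $\wpart{\Lambda}$. To finish, I would chain the two equivalences: from $\rho' \in \mathcal{I}$ we have $\rho' \sim \sum_{\partition{\Lambda}} \wpart{\Lambda}\partket{\Lambda}\partbra{\Lambda}$, and transitivity of $\sim$ (immediate from its definition as equality of all photocounting probabilities) then yields $\rho \sim \sum_{\partition{\Lambda}} \wpart{\Lambda}\partket{\Lambda}\partbra{\Lambda}$, which is exactly the defining condition placing $\rho$ in $\mathcal{I}$.

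Since both implications collapse to lemmas already in hand, I do not expect a genuine obstacle in this final step; the technical weight has been discharged in Lemma~\ref{lm:partition_sufficient}. The one point I would treat with care is the role of positivity: the Möbius inversion returns a unique real solution $\wpart{\Lambda}$ but does not by itself guarantee $\wpart{\Lambda} \geq 0$, so I would make sure that the version of $\mathcal{I}$ being characterized is the one admitting a (possibly signed) partition representation, keeping the stronger requirement of a genuine probability distribution as the separate classicality criterion discussed around \refeqn{eq:part_repr}.
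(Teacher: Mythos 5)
Your proposal is correct and follows exactly the paper's own route: Theorem~\ref{th:partition} is proved there in one line by combining Lemma~\ref{lm:partition_necessary} (orbit invariance is necessary, inherited under mixing) with Lemma~\ref{lm:partition_sufficient} (orbit invariance suffices, via M\"obius inversion of $R_{\partition{\sigma}\partition{\Lambda}}$). Your closing caution about positivity is also consistent with the paper, since $\mathcal{I}$ is defined by the existence of a possibly signed partition distribution, positivity being the separate classicality boundary.
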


    {\begin{figure}[t]
        \includegraphics[width = .8\linewidth]{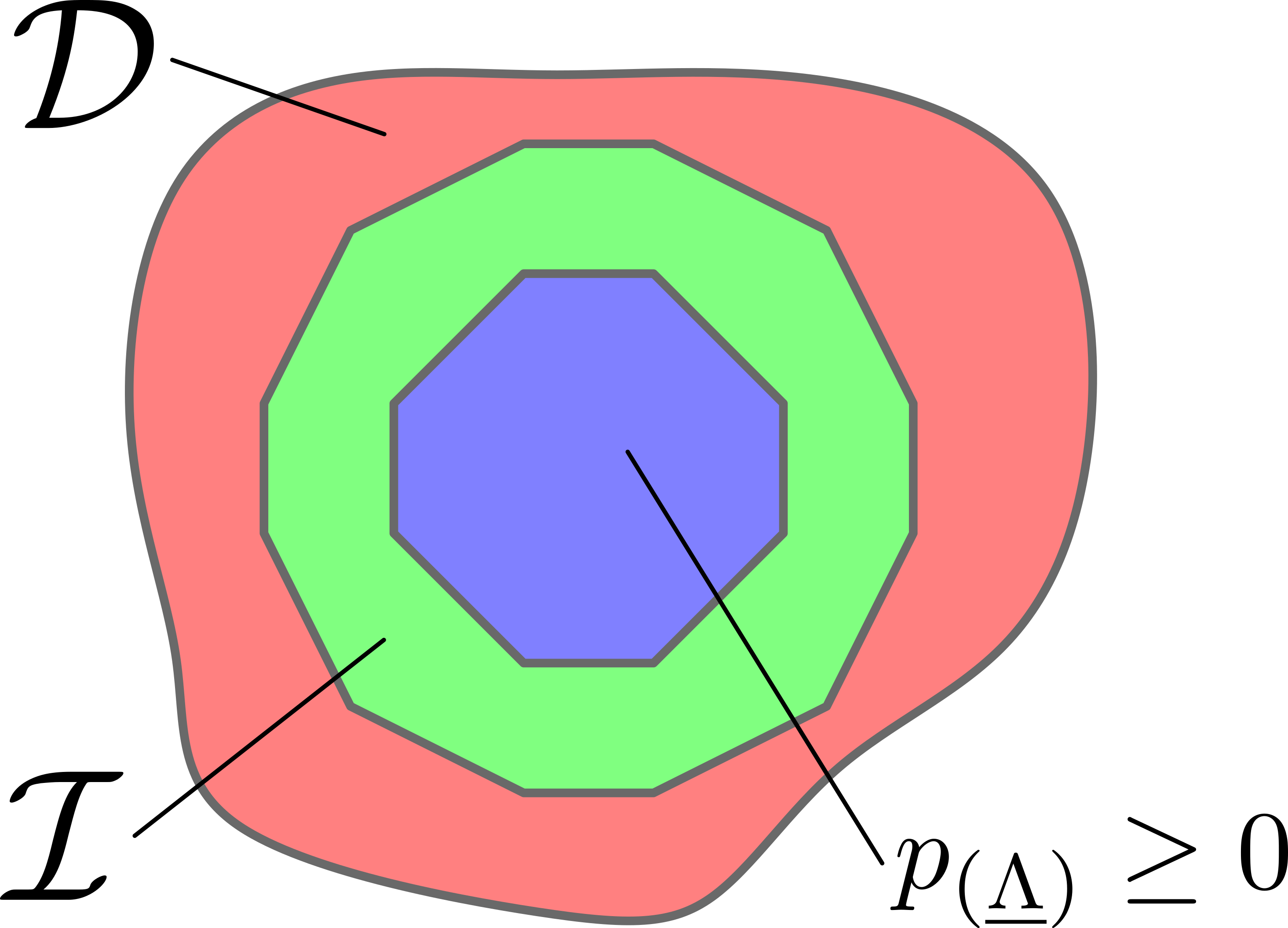}
        \caption{\textbf{Illustration of the regimes of partial distinguishability drawn from partition arguments.} The existence of a partition representation marks a first boundary between arbitrary distinguishability behavior (red area) and general incoherent distinguishability (green area). A second classicality boundary is crossed when the partition representation is positive (blue area).}
        \label{fig:classicality}
    \end{figure}

    A relevant situation in which states of the form given in Eq.~(\ref{eq:general_decomp}) naturally arise is when the $n$ single-photon states composing $\rho$ are simultaneously diagonalizable. By analyzing the generalized indistinguishability parameters of an arbitrary separable $n$-photon state, we uncover a direct connection between simultaneous diagonalizability and the condition established in Theorem~\ref{th:partition}:

    \begin{proposition}[Separable states \cite{minke_characterizing_2021}]
    \label{eq:msigma_separable}
        Let $\rho\in\mathcal{D}, \quad\rho= \bigotimes_i^n \rho_i$ and let us express $\sigma\in \mathcal{S}_n$ as the product of its disjoint cycles $\sigma = \prod_i\sigma_i$:

        $$
        M_\sigma =\avg{\opperm{\sigma}}_\rho = \prod_{\sigma_i} \trace{\vec{\prod_{j \in \sigma_i}} \rho_j}.
        $$
        
    \end{proposition}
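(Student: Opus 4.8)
The plan is to pass to the first-quantized picture, where the action of the permutation operator becomes transparent, and then reduce the claim to a standard trace identity for permutations acting on a tensor product.

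First I would restrict attention to the one-photon-per-mode sector of $\bigotimes_i \fock_i^{(1)}$, which is canonically isomorphic to the tensor product $\hilbert_{\mathrm{int}}^{\otimes n}$ of $n$ copies of the single-particle internal space, with the $i$-th factor recording the internal state of the photon in mode $i$. Under the identification $\ket{1_{\alpha_1}\cdots 1_{\alpha_n}} \leftrightarrow \ket{\alpha_1}\otimes\cdots\otimes\ket{\alpha_n}$, the separable state $\rho=\bigotimes_i\rho_i$ is simply the tensor product of the single-particle density matrices, and the defining relation $\opperm{\sigma}a_{i,\alpha}^\dagger = a_{\sigma(i),\alpha}^\dagger$ shows that $\opperm{\sigma}$ acts as the standard unitary permutation representation $V_\sigma$ of $\symg{n}$ that relocates the contents of factor $i$ to factor $\sigma(i)$, i.e.\ $V_\sigma\,\ket{\alpha_1}\otimes\cdots\otimes\ket{\alpha_n} = \ket{\alpha_{\sigma^{-1}(1)}}\otimes\cdots\otimes\ket{\alpha_{\sigma^{-1}(n)}}$.

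Next I would compute $M_\sigma = \trace{\rho\,\opperm{\sigma}} = \trace{(\bigotimes_i\rho_i)\,V_\sigma}$ directly. Fixing an orthonormal internal basis $\{\ket{\alpha}\}$ and inserting resolutions of the identity, the matrix elements of $V_\sigma$ collapse the double sum to a single multi-index sum,
\[
M_\sigma = \sum_{\alpha_1,\dots,\alpha_n}\ \prod_{i=1}^{n}(\rho_i)_{\alpha_i,\,\alpha_{\sigma^{-1}(i)}},
\]
so that each factor couples the row index $\alpha_i$ to the column index $\alpha_{\sigma^{-1}(i)}$. The final step is combinatorial: decomposing $\sigma=\prod_i\sigma_i$ into disjoint cycles, a factor $(\rho_i)_{\alpha_i,\alpha_{\sigma^{-1}(i)}}$ only ever links indices lying in the same cycle, so the index set $\{1,\dots,n\}$ splits into the cycles and the multi-index sum factorizes into one independent sum per cycle. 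For a cycle $\sigma_i=(c_1\,c_2\cdots c_k)$ the corresponding sum $\sum_{\alpha_{c_1},\dots,\alpha_{c_k}}\prod_{l}(\rho_{c_l})_{\alpha_{c_l},\alpha_{c_{l-1}}}$ (cyclic index $c_0\equiv c_k$) is, by the definition of matrix multiplication together with the cyclic property of the trace, exactly the trace of the ordered product of the $\rho_j$ taken along the cycle, yielding the claimed $\prod_{\sigma_i}\trace{\vec{\prod_{j\in\sigma_i}}\rho_j}$.

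I expect the only genuine obstacle to be fixing the \emph{orientation} of the ordered product inside the trace. The cyclic structure itself is forced, but whether each cycle is traversed in the order of $\sigma$ or of $\sigma^{-1}$ is tied precisely to the convention $\opperm{\sigma}a_i^\dagger=a_{\sigma(i)}^\dagger$ and to the appearance of $\sigma^{-1}$ in the index of $V_\sigma$; this is what pins down the arrow in $\vec{\prod}$. Since the $\rho_j$ need not commute, the two orientations differ in general (they are complex conjugates, reflecting the collective geometric phases characteristic of partial distinguishability), so the orientation must be tracked carefully rather than waved away. As consistency checks I would verify the two-photon case $M_{(12)}=\trace{\rho_1\rho_2}$, recovering the usual pairwise indistinguishability, and the fully indistinguishable limit $\rho_i=\ket{\phi}\bra{\phi}$, which gives $M_\sigma=1$ for every $\sigma\in\symg{n}$.
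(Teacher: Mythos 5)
Your proof is correct. Note that the paper offers no proof of its own for this proposition---it is imported wholesale from the cited reference---so your argument stands as a self-contained derivation of a result the paper only quotes. The route you take (identifying the one-photon-per-mode sector with $\mathcal{H}_{\mathrm{int}}^{\otimes n}$, writing $M_\sigma$ as a multi-index sum of matrix elements, and factorizing over the disjoint cycles so that each cycle closes into a trace of an ordered matrix product) is the standard and essentially unavoidable one, and every step checks out: the index pattern $(\rho_i)_{\alpha_i,\alpha_{\sigma^{-1}(i)}}$ follows from $\opperm{\sigma}a^\dagger_{i,\alpha}=a^\dagger_{\sigma(i),\alpha}$, and the per-cycle collapse to a trace is just matrix multiplication. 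Your closing remark on the orientation of the ordered product is a genuine improvement over the paper's presentation: the statement's $\vec{\prod}$ and the surrounding prose never pin down whether a cycle is traversed along $\sigma$ or $\sigma^{-1}$, and with your (correct) bookkeeping the product runs along $\sigma^{-1}$, the two orientations being complex conjugates---a distinction that matters precisely when the $\rho_j$ fail to commute, which is the regime where the paper's collective-phase examples live.
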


    \noindent That is, for each cycle $\sigma_i$, one must compute the trace of the ordered product of the single-photon density matrices corresponding to the input modes involved in the cycle. The arrow over the second product symbol emphasizes that the product is non-commutative---the density matrices must be multiplied in the specific order dictated by the cycle. Requiring orbit invariance means that these traces must remain unchanged under any reordering of the matrices  (e.g. $\trace{\rho_1\rho_2\rho_3\rho_4} = \trace{\rho_1\rho_3\rho_4\rho_2}$). In other words, orbit invariance requires these operators to effectively commute within the trace, which---unlike a conventional commutator---discards coherence contributions that are irrelevant to photocounting statistics.

    The fact that the partition distribution $\wpart{\Lambda}$ of a state $\rho\in \mathcal{I}$ is uniquely defined by the value of the observables $M_\sigma = \langle \opperm{\sigma}\rangle$, means that it is indeed a property of the state with a well-defined, basis-independent meaning.
    Furthermore whenever a partition representation can be defined, the equivalence criterion of Thm.~\ref{th:state_equivM} can be extended:
    \begin{theorem}[Partition equivalence]
    \label{th:equiv_part}
    Let $\rho, \rho' \in \mathcal{I}$:
        $$\rho \sim \rho' \iff M_{\sigma} = M'_{\sigma} \iff \wpart{\Lambda} = \wpart{\Lambda}'.$$
    \end{theorem}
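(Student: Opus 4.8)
The plan is to read the claim as a chain of two logical equivalences and dispatch each using results already in hand. The leftmost equivalence, $\rho \sim \rho' \iff M_\sigma = M'_\sigma$, carries no new content: it is exactly Thm.~\ref{th:state_equivM}, which holds for every pair of states in $\mathcal{D}$ and therefore a fortiori for $\rho, \rho' \in \mathcal{I} \subset \mathcal{D}$. So the entire burden of the proof falls on the second equivalence, $M_\sigma = M'_\sigma \iff \wpart{\Lambda} = \wpart{\Lambda}'$, and the single tool I would use is the linear system $M_\sigma = \sum_{\partition{\Lambda}} R_{\partition{\sigma}\partition{\Lambda}}\, \wpart{\Lambda}$ already assembled in the proof of Lemma~\ref{lm:partition_sufficient}.

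First I would note that, because both states lie in $\mathcal{I}$, Lemma~\ref{lm:partition_necessary} guarantees orbit invariance for each of them; hence every $M_\sigma$ (and every $M'_\sigma$) depends only on the cycle partition $\partition{\sigma}$, and the generalized indistinguishabilities collapse onto $\belln{n}$ quantities correctly indexed by partitions. This is what makes the incidence matrix $R$ square, of size $\belln{n}\times\belln{n}$, rather than an over-determined $n!\times\belln{n}$ array. The forward implication is then immediate: if $\wpart{\Lambda} = \wpart{\Lambda}'$ for all partitions, applying the common map $R$ to the two equal vectors yields $M_\sigma = M'_\sigma$ via \refeqn{eq:msigma_sum1}.

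For the converse I would invoke invertibility. Writing the two relations compactly as $M = R\,p$ and $M' = R\,p'$, Lemma~\ref{lm:mobius} (M\"obius inversion) asserts that $R$ is invertible, so $p = R^{-1}M = R^{-1}M' = p'$, i.e. $\wpart{\Lambda} = \wpart{\Lambda}'$. I expect the only genuinely delicate point to be well-definedness rather than the algebra: membership in $\mathcal{I}$ only furnishes the existence of some partition vector $p$, so before the statement $\wpart{\Lambda} = \wpart{\Lambda}'$ can be read as an unambiguous property of the two states I must confirm that this vector is unique. That uniqueness is precisely the invertibility of $R$---for a fixed $\rho\in\mathcal{I}$ its measurable $M_\sigma$ pin down $p = R^{-1}M$ with no remaining freedom---so the same lemma that powers the converse also retroactively licenses the notation. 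I would flag that positivity of $p$ plays no role here: the argument is purely linear-algebraic and applies verbatim whether the partition representation is a genuine probability distribution or a quasi-probability.
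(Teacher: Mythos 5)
Your proposal is correct and follows essentially the same route as the paper: the first equivalence is delegated to Thm.~\ref{th:state_equivM}, and the second rests on the invertibility of the incidence matrix $R_{\partition{\sigma}\partition{\Lambda}}$ established via Lemma~\ref{lm:mobius} in the proof of Lemma~\ref{lm:partition_sufficient}, which is exactly the bijection argument the paper invokes. Your explicit remark that invertibility also guarantees uniqueness of $\wpart{\Lambda}$ (making the notation well defined) is a point the paper makes in the surrounding text rather than inside the proof, but it is the same idea.
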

    \begin{proof}
        According to  Lm.~\ref{lm:partition_sufficient}, $R_{\partition{\sigma}\partition{\Lambda}}$ is invertible and thus realizes a bijection between orbit invariant $M_\sigma$ and the probability distributions $\wpart{\Lambda}$.
    \end{proof}

    In essence, no two different partition distributions $\{ \wpart{\Lambda}\}$ and $\{ \wpart{\Lambda}' \}$ can produce photocounting-equivalent states $\rho \sim \rho'$ as this would imply some 0 eigenvalue for $R_{\partition{\sigma}\partition{\Lambda}}$. 
    This result shows that the various behavior of states in the class $\mathcal{I}$ are exhaustively listed by all the distinct (quasi)probability distributions $ \{ \wpart{\Lambda} \}$.
    
    A selection of examples that illustrate relevant states in $\mathcal{I}$ and $\overline{\mathcal{I}}$ is given in Appendix~\ref{app:examples}.
    

    \section{Features of incoherent distinguishability}

We now examine key features of the incoherent distinguishability regime that highlight its utility in near-term photonic applications. We show that incoherent distinguishability errors can be mitigated through experimentally feasible partitioning strategies, that this regime can always be recovered using a permutation twirling operation, and that states within this regime retain computational hardness for tasks such as Boson Sampling.

   \subsection{Incoherent distinguishability can be mitigated}

    A useful property of the path permutation description of multi-photon interference is that the indistinguishabilities $M_\sigma$ behave well under particle subtraction: the mode permutation operator $\opperm{\sigma}$ traces out any particle that is not moved by $\sigma$.
    From the set of $n!$ indistinguishabilities $M_\sigma$, one can thus readily extract information that pertains only to a subset of the original $n$ particles.

    Knowing this, we can exploit a ``partitioning" operation---similar to that proposed in \cite{tichy_fourphoton_2011}---which can be experimentally implemented by delaying photons into well-separated temporal groups.
    Given a target partition $\partition{\Lambda}$, the photons within each cell $\Lambda_i$ are delayed together until they become perfectly distinguishable from photons belonging to other cells, resulting in a state that we will denote $\rho\!\!\downharpoonright_{\partition{\Lambda}}$ (see Fig.~\ref{fig:mitigation}).
    This operation enforces a useful condition on $M_\sigma$  which, exactly as in the case of the partition state $\partket{\Lambda}$, will be different from 0 only if $\sigma$ permutes photons within the cells of $\partition{\Lambda}$, so that:
    \begin{equation}
        \label{eq:partitioning_condition}
         M'_\sigma =         \begin{cases}
            M_\sigma & \text{if } \partition{\sigma} \preceq\partition{\Lambda}\\
            0 & \text{otherwise}
        \end{cases}
    \end{equation}
    This forces the most indistinguishable configuration contributing to the distinguishability spectrum of $\rho\!\!\downharpoonright_{\partition{\Lambda}}$ to correspond to the partition state $\partket{\Lambda}$, thereby isolating a specific subset of errors impacting the original state.
    With access to this information, one can apply probabilistic error cancellation \cite{cai_quantum_2023} to mitigate distinguishability noise in the output distribution produced by the original state.
    Using $\rho_0$ to denote the ideal indistinguishable $n$-photon state, we have the following result:

    \begin{theorem}[Partition mitigation]
    \label{th:mitigation}
    Let $\rho\in \mathcal{I}$ with indistinguishabilities $M_\sigma$ and let $\{\rho\!\!\downharpoonright_{\partition{\Lambda}}\}$ be the set of its partitioned states.
    If $M_\sigma \neq 0$ for all $\sigma$, there exists a set of correction weights $w_{\partition{\Lambda}}$ such that:
    $$
    \sum_{\partition{\Lambda}} w_{\partition{\Lambda}} \rho\!\!\downharpoonright_{\partition{\Lambda}} \sim \rho_0.
    $$
        
    \end{theorem}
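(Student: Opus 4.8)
The plan is to translate the target equivalence into a linear condition on the generalized indistinguishabilities and then solve it by M\"obius inversion. First I would invoke Theorem~\ref{th:state_equivM}: the equivalence $\sum_{\partition{\Lambda}} w_{\partition{\Lambda}} \rho\!\!\downharpoonright_{\partition{\Lambda}} \sim \rho_0$ holds if and only if both sides share all indistinguishabilities $M_\sigma$. Since $M_\sigma = \avg{\opperm{\sigma}}$ is linear in the state, the left-hand side carries indistinguishabilities $\sum_{\partition{\Lambda}} w_{\partition{\Lambda}} M^{\partition{\Lambda}}_\sigma$, where $M^{\partition{\Lambda}}_\sigma$ are those of $\rho\!\!\downharpoonright_{\partition{\Lambda}}$; the signed weights pose no difficulty, as linearity of the expectation value holds irrespective of positivity. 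Because the ideal state $\rho_0$ has $M_\sigma = 1$ for every $\sigma$, the whole statement reduces to requiring $\sum_{\partition{\Lambda}} w_{\partition{\Lambda}} M^{\partition{\Lambda}}_\sigma = 1$ for all $\sigma \in \symg{n}$.

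Next I would substitute the partitioning condition \refeqn{eq:partitioning_condition}, namely $M^{\partition{\Lambda}}_\sigma = M_\sigma$ when $\partition{\sigma} \preceq \partition{\Lambda}$ and $0$ otherwise, turning the requirement into $M_\sigma \sum_{\partition{\Lambda} : \partition{\sigma} \preceq \partition{\Lambda}} w_{\partition{\Lambda}} = 1$. Because $\rho\in\mathcal{I}$, orbit invariance (Lemma~\ref{lm:partition_necessary}) guarantees that $M_\sigma$ depends only on the cycle partition $\partition{\sigma}$, so the system collapses from $n!$ equations indexed by permutations to $\belln{n}$ equations indexed by partitions. This is exactly where the hypothesis $M_\sigma \neq 0$ enters: since each $M_{\partition{\sigma}} \neq 0$, I may divide through to obtain the decoupled system $\sum_{\partition{\Lambda} : \partition{\sigma} \preceq \partition{\Lambda}} w_{\partition{\Lambda}} = 1/M_{\partition{\sigma}}$.

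Finally I would recognize the coefficient matrix of this system as precisely the partition-order (zeta) matrix $R_{\partition{\sigma}\partition{\Lambda}}$ already used in the proof of Lemma~\ref{lm:partition_sufficient}, which is invertible by Lemma~\ref{lm:mobius}. Inverting $R$ yields the unique solution $w_{\partition{\Lambda}} = \sum_{\partition{\mu}} (R^{-1})_{\partition{\Lambda}\partition{\mu}}/M_{\partition{\mu}}$, establishing existence of the correction weights. The main step is conceptual rather than computational: one must verify that both the partitioned states and $\rho_0$ inherit orbit invariance, so that the $n!$ permutation conditions consistently collapse onto the $\belln{n}$ partition conditions handled by M\"obius inversion. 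The requirement $M_\sigma \neq 0$ is exactly what prevents the system from degenerating, and I would close by remarking that it is precisely the condition making every distinguishability configuration detectable, and hence correctable by probabilistic error cancellation.
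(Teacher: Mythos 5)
Your proof is correct and follows essentially the same route as the paper: both reduce the target equivalence to the linear system $M_{\sigma}\sum_{\partition{\Lambda}\succeq\partition{\sigma}} w_{\partition{\Lambda}} = 1$ over the $\belln{n}$ partitions (using orbit invariance and \refeqn{eq:partitioning_condition}) and solve it by inverting the partial-order matrix. The only cosmetic difference is that you divide out $M_{\sigma}\neq 0$ first and then invoke M\"obius inversion (Lm.~\ref{lm:mobius}), whereas the paper keeps the $M_{\sigma}$ on the diagonal of a lower-triangular matrix and inverts by forward substitution---the hypothesis $M_\sigma \neq 0$ plays the identical role in both.
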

    \begin{proof}
        We can create a reconstruction matrix $R_{\partition{\Lambda}\partition{\Xi}}$ such that the column relative to the full partition $\{1,\hdots,n\}$ contains the indistinguishabilities $M_\sigma$ of $\rho$ and  the column relative to $\partition{\Xi}$ contains the indistinguishabilities of $\rho\!\!\downharpoonright_{\partition{\Xi}}$.
        The problem of finding the correction weights $w_{\partition{\Lambda}} $ that produce $M'_\sigma = 1$ for all $\sigma$ can be cast to the inversion of the following linear relations:
        $$
        1 = \sum_{\partition{\Xi}}R_{\partition{\Lambda}\partition{\Xi}} \cdot w_{\partition{\Xi}}  \quad \forall\: \partition{\Lambda}.
        $$
        Relying on the condition of \refeqn{eq:partitioning_condition} and sorting the partitions by number of subsets (see Lm.~\ref{lm:deferrec_celln}) the matrix $R_{\partition{\Lambda}\partition{\Xi}}$ can be arranged into a lower triangular form with the values $M_\sigma$ on the diagonal.
        By hypothesis no $M_\sigma$ is 0, $R_{\partition{\Lambda}\partition{\Xi}}$ is thus not singular and it can be inverted through forward substitution.
    \end{proof}

    {\begin{figure}[t]
        \includegraphics[width = .8\linewidth]{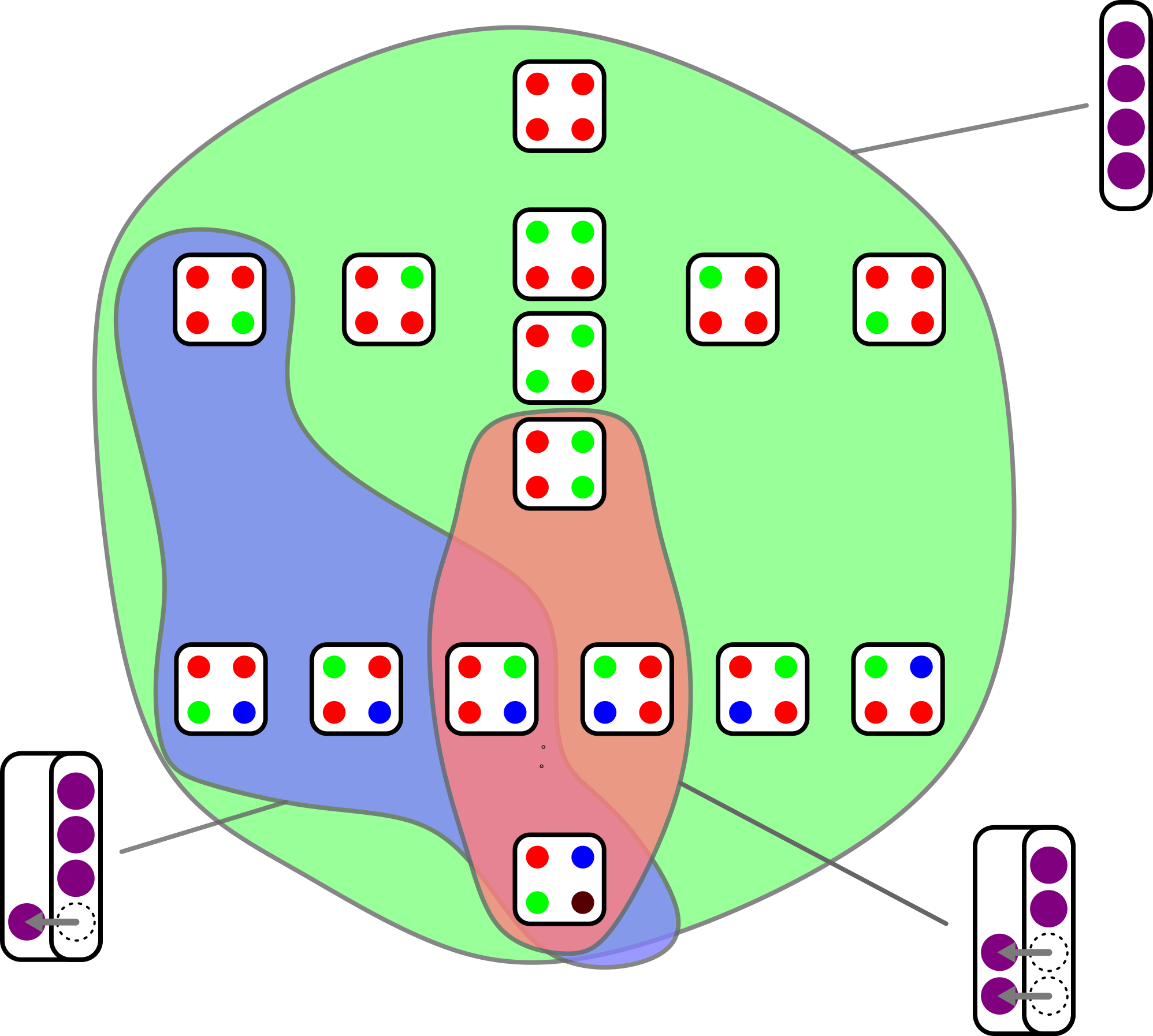}
        \caption{\textbf{Illustration of an error mitigation strategy for distinguishability errors with time-delay partitioning.} Applying discrete time delays to the input photons modifies the distinguishability spectrum produced by the original $n$-photon state (green area), providing access to subsets of errors arising from partial distinguishability.
        The incoherent distinguishability contributions expected from the partitioning $\{1,2,3\}\{4\}$ (blue area) and $\{1,2\}\{3,4\}$ (red area) are shown.}
        \label{fig:mitigation}
    \end{figure}
    
    Orbit invariance is a crucial requirement for this method to work: the number of distinct correction-experiments that can be realized by forcing complete distinguishability in some particles is limited by the number of distinct partitions. In order to be sure to achieve perfect correction, the state must have at most that number of free parameters.

    An interesting feature of this error mitigation procedure is that it can be implemented progressively with increasing orders of correction.
    The lower triangular matrix $R_{\partition{\Lambda}\partition{\Xi}}$ that guides the correction can be inverted one row at a time.
    Thus, restricting the procedure to a polynomial number of rows ensures that it depends only on a polynomial number of partitioning operations $\rho\!\!\downharpoonright_{\partition{\Lambda}}$, and therefore requires only a polynomial number of additional experiments and postprocessing steps—at the cost of achieving only partial correction.

    Along these same lines, we anticipate that the hierarchical structure of errors in the partition representation will prove valuable beyond error mitigation. In particular, it provides a scalable, stochastic framework for modeling partial distinguishability in contexts such as quantum error correction, where efficiently capturing low-weight errors is essential and the number of relevant contributions can remain polynomial in system size.

    \subsection{Incoherent distinguishability is reproducible}

    \label{ssec:twirling}
    Given the strict symmetry requirement imposed by orbit invariance it is quite easy to identify states that do not possess a partition representation (see for example Ex.~\ref{ex:triad_phase} in Appendix~\ref{app:examples}).
    Nevertheless, we shall show that it is always possible to recover a partition representation from any state.
    Given an arbitrary $n$-photon state, we can consider the probabilistic channel $\mathcal{T}[\cdot]$, constructed by averaging $\rho$ over the random applications of one of the $n!$ mode permutations: 
    \begin{definition}[Permutation Twirling]
    \label{def:proj_twirl}
        $$\mathcal{T}[\rho] = \frac{1}{n!}\sum_{\egperm{\sigma}} \opperm{\sigma} \rho \opperm{\sigma}^\dagger.$$
    \end{definition}
   
    By computing the generalized indistinguishability $M'_\sigma$ for the twirled state $\rho_\mathcal{T}$, we can see that this channel is able to enforce orbit invariance by averaging all indistinguishabilities with the same cycle structure:

    \begin{theorem}
    \label{th:twirling}
        Let $\rho\in \mathcal{D}$ with indistinguishabilities $M_\sigma$ and let $\rho_\mathcal{T} = \mathcal{T}[\rho]$ be its twirled state.
        Then $\rho_{\mathcal{T}} \in \mathcal{I}$.
    \end{theorem}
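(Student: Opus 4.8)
The plan is to reduce the claim entirely to the orbit-invariance criterion of Thm.~\ref{th:partition}: since that theorem states $\rho_\mathcal{T} \in \mathcal{I}$ if and only if its indistinguishabilities obey $\partition{\sigma} = \partition{\tau} \follows M'_\sigma = M'_\tau$, it suffices to compute the twirled indistinguishabilities $M'_\sigma = \avg{\opperm{\sigma}}_{\rho_\mathcal{T}}$ and verify that they depend on $\sigma$ only through its cycle partition $\partition{\sigma}$. I would first note in passing that $\rho_\mathcal{T}\in\mathcal{D}$, since it is a convex combination of the unitarily-conjugated states $\opperm{\tau}\rho\opperm{\tau}^\dagger$, each of which is a valid density operator on $\bigotimes_i \fock_i^{(1)}$ (the mode permutations merely relabel the single-particle factors).

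The core computation is to insert Def.~\ref{def:proj_twirl} into $M'_\sigma = \trace{\opperm{\sigma}\rho_\mathcal{T}}$ and exploit cyclicity of the trace:
\begin{equation*}
    M'_\sigma = \frac{1}{n!}\sum_{\egperm{\tau}} \trace{\opperm{\sigma}\opperm{\tau}\rho\opperm{\tau}^\dagger} = \frac{1}{n!}\sum_{\egperm{\tau}} \trace{\opperm{\tau}^{-1}\opperm{\sigma}\opperm{\tau}\,\rho}.
\end{equation*}
The key algebraic input is that $\sigma \mapsto \opperm{\sigma}$ is a group homomorphism---from $\opperm{\sigma}a_i^\dagger = a_{\sigma(i)}^\dagger$ one reads off $\opperm{\sigma}\opperm{\tau} = \opperm{\sigma\tau}$ and $\opperm{\tau}^\dagger = \opperm{\tau^{-1}}$---so that $\opperm{\tau}^{-1}\opperm{\sigma}\opperm{\tau} = \opperm{\tau^{-1}\sigma\tau}$. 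Recognizing $\trace{\opperm{\tau^{-1}\sigma\tau}\rho} = M_{\tau^{-1}\sigma\tau}$ then yields $M'_\sigma = \frac{1}{n!}\sum_{\egperm{\tau}} M_{\tau^{-1}\sigma\tau}$, i.e. the twirl replaces each $M_\sigma$ by the uniform average of the original indistinguishabilities over the conjugacy orbit $\{\tau^{-1}\sigma\tau : \egperm{\tau}\}$.

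To finish, I would invoke the standard fact that two permutations are conjugate in $\symg{n}$ if and only if they share the same cycle type, so the conjugacy class of $\sigma$ is exactly the set of permutations with cycle partition $\partition{\sigma}$. Hence $M'_\sigma$ is manifestly a function of $\partition{\sigma}$ alone, and whenever $\partition{\sigma} = \partition{\tau}$ the elements $\sigma, \tau$ lie in the same class, forcing $M'_\sigma = M'_\tau$. Orbit invariance is thereby established and Thm.~\ref{th:partition} delivers $\rho_\mathcal{T}\in\mathcal{I}$. I do not anticipate a genuine obstacle here; the only points demanding care are fixing the homomorphism convention consistently (to avoid an inverse/order slip in the conjugation step) and citing the conjugacy-class/cycle-type correspondence, which is precisely the combinatorial fact that makes the twirl average respect the partition structure.
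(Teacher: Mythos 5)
Your proposal is correct and takes essentially the same route as the paper's proof: compute $M'_\sigma$ as the average of the original indistinguishabilities over the conjugacy orbit of $\sigma$, invoke the conjugacy-class/cycle-type correspondence (the paper's Lem.~\ref{lm:connect_conj}), and conclude orbit invariance so that Thm.~\ref{th:partition} applies. One small wording slip to fix: the conjugacy class of $\sigma$ is the set of permutations with the same cycle \emph{type}, which is in general strictly larger than the set of permutations with the same cycle partition $\partition{\sigma}$ (e.g. $(12)$ and $(34)$ are conjugate in $\mathcal{S}_4$ but induce different partitions); since you only use the correct direction---equal partitions imply conjugacy, and the twirled average is class-constant---the argument stands as written.
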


    \begin{proof}
        The indistinguishabilities $M'_\sigma$ of $\rho_\mathcal{T}$ read:
        
        \begin{equation}
            \label{eq:twirling}
            M'_\sigma = \frac{1}{n!}\sum_{\egperm{\tau}} \trace{\opperm{\tau^{-1}}\opperm{\sigma}\opperm{\tau}\rho} = \frac{1}{n!}\sum_{\egperm{\tau}} M_{\sigma^\tau},
        \end{equation}
        \noindent where the symbol $\sigma^\nu$ represents the conjugation $\nu \sigma \nu^{-1}$.

        The image of $\sigma^\nu$ for $\egperm{\nu}$ spans all the permutations with the same cycle structure as $\sigma$.
        Given two permutations $\partition{\sigma}$ and $ \partition{\sigma'}$ such that they can be ordered so that $|\partition{\sigma}_i| = |\partition{\sigma'}_i| $, it is always possible to find a third permutation $\nu$ that connects them through conjugation (see Lem.~\ref{lm:connect_conj})). Noting that, for fixed $\nu$, the two sums $\sum_{\nu\tau}$ and $\sum_{\tau}$ would be equivalent in \refeqn{eq:twirling}, we find that  $M'_\sigma = M'_{\sigma^\nu}$, hence orbit invariance is satisfied and consequently $\rho_\mathcal{T}$ has a partition representation.
    \end{proof}
    
    Reminiscent of `twirling' a qubit state with Pauli operators to render noise stochastic \cite{wallman_noise_2016}, we refer to this operation as permutation twirling.
    The averaging in \refeqn{eq:twirling} does more than just enforce orbit invariance: it effectively creates a situation similar to having  $\rho = \bigotimes_i^n \rho_0$ for some fixed single photon state $\rho_0$, where the generalized indistinguishabilities $M_\sigma$ only depend on the size of the cycles composing $\sigma$.
    
    The least destructive projection that enforces orbit invariance involves averaging only over permutations that share the same cycle partition as $\sigma$, i.e. those satisfying $\partition{\tau} = \partition{\sigma}$:
    
    \begin{definition} [Strict partition projection]
    \label{def:proj_strict}
        $$M'_\sigma = \frac{1}{\prod_i{(|\underline{\sigma_i}|-1)!}}\sum_{\partition{\tau} = \partition{\sigma}} M_\tau.$$.
    \end{definition}
    However, this strict partition projection is impossible to implement with random permutations, due to the fact that distinct partitions have incompatible stabilizers.
    If even physical, we believe it is challenging to devise an apparatus that realizes such an operation and we will leave this question for future work. 
    
    The permutation twirling operation can, on the other hand, be practically implemented via optical switches or more simply by averaging experimental runs while permuting the target unitary implemented by a reconfigurable photonic chip \cite{taballione_20mode_2023}, making it an appealing way to tailor distinguishability noise to the incoherent regime in Boson Sampling implementations.
    
 \subsection{Incoherent distinguishability retains hardness}
\label{sec:simulation}
    
    Steering the distinguishability spectrum of an $n$-photon state into the incoherent regime---whether via permutation twirling or a more strict projection---inevitably alters its interference behavior. It is therefore crucial to verify that such transformations do not compromise the state’s utility for quantum information processing.

    A first observation is that the set of states admitting a partition representation has a nontrivial intersection with states that are hard to simulate: the ideal indistinguishable state admits a simple partition representation, as do all states close to it in a partition-distribution sense.
    Therefore, under the hardness conjectures underlying Boson Sampling, there cannot exists a sampling algorithm that efficiently simulates all possible states in $\mathcal{I}$, regardless of the details of $\wpart{\Lambda}$. This rules out the possibility that this regime is fundamentally ``not hard".

    A projection onto this regime might nevertheless degrade its computational hardness by pushing it further from the ideal indistinguishable scenario than the original state.
    In order to exclude this scenario, we can pick a random no-collision outcome $\outcome$ and check how the ideal indistinguishable probability $p(s|\rho_0)$ compares to what is produced before $p(s|\rho)$ and after $p(s|\rho_{\mathcal{T}})$ the operation.
    Focusing on the permutation twirling channel $\mathcal{T}$, we have the following result:

    \begin{theorem}(Averaging effects)
        Let $\rho\in\mathcal{D}$, $\rho_{\mathcal{T}}\in \mathcal{I}$ be its twirled state, and $\rho_0$ an ideal indistinguishable state.
        Fixing a no-collision outcome $\outcome$ and averaging over the ensemble of Haar-random unitary transformations, the following inequality holds:
        $$\expected{(p(\outcome|\rho) - p(\outcome|\rho_0))^2} \geq \expected{(p(\outcome| \rho_{\mathcal{T}}) - p(\outcome|\rho_0))^2}.$$
    \end{theorem}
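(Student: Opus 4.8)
The plan is to collapse the whole inequality into a single convexity (Jensen) argument, by writing both mean–squared deviations as one fixed quadratic form evaluated at the deviation vector $\Delta M_\sigma := M_\sigma - 1$ and at its conjugacy-class average.

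First I would fix the no-collision outcome $\outcome$, so that, as in \refeqn{eq:path_interf}, the interfering paths are indexed by $\symg{n}$ with amplitudes $X_\sigma = \prod_i U_{i\sigma(i)}$. Reparametrizing the double sum through $\pi = \sigma'^{-1}\sigma$ gives $p(\outcome|\rho) = \sum_{\egperm{\pi}} M_\pi\, C_\pi$ with the purely interferometric coefficients $C_\pi := \sum_{\egperm{\sigma}} X^*_\sigma X_{\sigma\pi}$. Since the ideal state has $M_\pi = 1$ for all $\pi$, one checks $p(\outcome|\rho_0) = \sum_\pi C_\pi = \modsq{\permament{U^{[\outcome]}}}$, so that $p(\outcome|\rho) - p(\outcome|\rho_0) = \sum_{\egperm{\pi}} \Delta M_\pi\, C_\pi$, a linear functional of $\Delta M$ whose coefficients $C_\pi$ depend only on $U$. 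Defining the quadratic form $Q(v) := \expected{\modsq{\sum_\pi v_\pi C_\pi}}$ over Haar-random $U$, both sides of the target inequality become instances of $Q$: the left-hand side is $Q(\Delta M)$, and since \refeqn{eq:twirling} gives $M'_\sigma = \frac{1}{n!}\sum_\tau M_{\sigma^\tau}$, i.e. $\Delta M' = P\Delta M$ where $P$ is the projection onto class functions (conjugacy-class averaging, which fixes the constant vector), the right-hand side is $Q(P\Delta M)$. The claim thus reduces to $Q(\Delta M) \ge Q(P\Delta M)$.

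The decisive step is a symmetry of $Q$. Relabeling the input modes by $\nu\in\symg{n}$ multiplies $U$ on the left by a permutation matrix---an operation preserving the Haar measure---and a short computation sends $X_\sigma\mapsto X_{\sigma\nu^{-1}}$, hence $C_\pi\mapsto C_{\pi^\nu}$ with $\pi^\nu = \nu\pi\nu^{-1}$. Invariance of the Haar integral then yields $Q(v) = Q(U_\nu v)$ for every $\nu$, where $U_\nu$ implements the conjugation action $(U_\nu v)_\pi = v_{\nu^{-1}\pi\nu}$ and $P = \frac{1}{n!}\sum_\nu U_\nu$. Finally I would invoke convexity: $Q$ is a positive semidefinite Hermitian form, hence convex, and $P\Delta M$ is the uniform convex combination $\frac{1}{n!}\sum_\nu U_\nu\Delta M$ of points all sharing the value $Q(U_\nu\Delta M) = Q(\Delta M)$. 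Jensen's inequality gives $Q(P\Delta M) \le \frac{1}{n!}\sum_\nu Q(U_\nu\Delta M) = Q(\Delta M)$, which is exactly the assertion. (Equivalently, the Gram matrix $G_{\pi\pi'} = \expected{C_\pi C^*_{\pi'}}$ commutes with the orthogonal projection $P$, whence $Q(\Delta M) = Q(P\Delta M) + Q((1-P)\Delta M) \ge Q(P\Delta M)$.)

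The main obstacle I anticipate is verifying the symmetry $C_\pi\mapsto C_{\pi^\nu}$ cleanly---in particular keeping the conjugation conventions consistent among the twirling average in \refeqn{eq:twirling}, the class-function projection $P$, and the relabeling action---and confirming that the no-collision restriction is precisely what makes the path index set equal to $\symg{n}$, so that both the reparametrization $\pi=\sigma'^{-1}\sigma$ and the row-permutation invariance of the relevant submatrix $U^{[\outcome]}$ go through. Once that symmetry is established, the convexity step is immediate.
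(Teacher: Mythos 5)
Your proposal is correct, and it takes a genuinely different route from the paper. The paper's proof (App.~\ref{app:l2dist}) leans on the random-matrix results of Ref.~\cite{hoven_efficient_2024}: it takes as input that the interferometric coefficients $q_\sigma=\sum_\tau X^*_\tau X_{\tau\sigma}$ are zero-mean, mutually uncorrelated variables whose variance depends only on the number of fixed points of $\sigma$, then decomposes $\var{\Delta p}$ into fixed-point sectors as $\sum_j \var{q^{(j)}} R_{n,j}\bigl[(1-\avg{M_\sigma}_j)^2+\var{M_\sigma}_j\bigr]$ and observes that conjugacy-class averaging preserves each sector mean while shrinking each sector variance. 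You bypass that structural input entirely: your only probabilistic ingredient is the exact invariance of the Haar ensemble under relabeling of the input modes, which gives $C_\pi\mapsto C_{\pi^\nu}$ and hence the conjugation-invariance $Q(v)=Q(U_\nu v)$ of the quadratic form $Q(v)=\expected{\modsq{\sum_\pi v_\pi C_\pi}}$; Jensen (equivalently, $[G,P]=0$ and the orthogonal split $Q(v)=Q(Pv)+Q((1-P)v)$) then finishes the argument. Your conventions check out: with $X_\sigma\mapsto X_{\sigma\nu^{-1}}$ one indeed gets $C_\pi\mapsto C_{\nu\pi\nu^{-1}}$, and $(P M)_\sigma=\frac{1}{n!}\sum_\nu M_{\nu\sigma\nu^{-1}}$ matches \refeqn{eq:twirling}, with $P\mathbf{1}=\mathbf{1}$ so that $\Delta M'=P\Delta M$. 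What each approach buys: yours is exact for any finite $m$ and any no-collision outcome (no asymptotic or approximate moment structure is invoked), is self-contained, and generalizes to any unitary ensemble invariant under input-mode permutations; the paper's computation, while dependent on the cited approximations, yields quantitative information that the inequality alone does not---namely the explicit variance decomposition showing the gap is controlled by $\var{M_\sigma}$ within each fixed-point sector, which is what the main text uses to compare permutation twirling against the strict partition projection.
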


    A proof of this statement can be found in Appendix~\ref{app:l2dist}, and relies on the random matrix theory arguments detailed in Refs.~\cite{hoven_efficient_2024, renema_efficient_2018}.
    More generally, it is possible to show that most averaging operations over the indistinguishabilities of an $n$-photon state can actually be beneficial as the variance of $p(\outcome|\rho_0) - p(\outcome|\rho)$ can be shown to be an increasing function of the variability of $M_\sigma$ between $\sigma$ with the same cycle structure (specifically the cumulant $\var{M_\sigma}$).
    From this, it is clear that broader averaging operations are more rewarding compared to, for example, a strict partition projection with minimal averaging. Stricter projections would only attain an intermediate decrease in $\var{M_\sigma}$ compared to the twirled state.

\section{Boson Sampling with partition states}

    Partially distinguishable multi-photon states that are found in $\mathcal{I}$ exhibit, by definition, classical correlations. The observed distinguishability properties can be interpreted as the result of a random process $\markov$ which samples the distinguishability configurations $\partition{\Lambda}$ with weights $\wpart{\Lambda}$.
    Assuming that the sampling process $\markov$ is not inherently intractable, and the partition distribution is positive, it is possible to construct an exact sampling algorithm whose performance can be fully analyzed. Quasi-probability sampling techniques \cite{cai_quantum_2023} could also be used for distributions containing negative partition weights. It is interesting to note how even in this incoherent setting there still seems to be in general some added cost required to simulate the behavior of partial distinguishability, which could require knowledge of the full partition distribution or be affected by negativities.

    The core idea of the sampling algorithm is to exploit the fact that any partition state can be decomposed into indistinguishable parts that behave as smaller instances of the ideal sampling task (see Appendix~\ref{app:partstate}). After solving these smaller instances using a known algorithm, the results can be convoluted classically in a second step to reconstruct the full solution.
    On average, the correct statistics is obtained by sampling sequentially from each group of indistinguishable photons and then adding together each partial results.

    In Algorithm~\ref{alg:partition_sample}, we report a pseudo-code implementation of this partition sampling algorithm which assumes a subroutine $\markov\_sample()$ that draws a partition state for a given partition distribution $\{\wpart{\Lambda]}\}$ and $\Lambda\_sample()$ that samples an outcome vector from the output statistics of a given indistinguishable group of photons.

        \begin{algorithm}[H]
        \caption{Partition sampling algorithm}\label{alg:partition_sample}

        \begin{algorithmic}
        \Require{$\markov\_sample()$, $\Lambda\_sample()$}
            \Function{partition\_sample}{$ $}
    
                 \State $\underline{s} \gets [\underline{0}] $\Comment{initialize an empty sample}
                \State $\partition{\Lambda} \gets \markov\_sample()$ \Comment{Extract a partition}
                \For{ $\Lambda_i \in \partition{\Lambda}$}
                    \State $\underline{s_i} \gets \Lambda\_sample(\Lambda_i)$ \Comment{Obtain sample from subset} 
                    \State $\underline{s} \pluseq \underline{s_i}$ \Comment{Combine partial result} 
                \EndFor
                \State $\textbf{return } \underline{s}$
            \EndFunction
        \end{algorithmic}
        \end{algorithm}
        
    Assuming that $\Lambda\_sample()$ is carried out by the state-of-art algorithm described by Clifford and Clifford \cite{clifford_classical_2018,clifford_faster_2024}, the number of operations required to sample from a group of $k$ indistinguishable photons is $\mathcal{O}\!\left( {k2^k + \mathrm{poly}(m,k)} \right)$.
    Neglecting the polynomial overhead, which is irrelevant for the global scaling, we can bound the number of operations required to sample from any given partition configuration $\partition{\Lambda}$:

    \begin{equation}
        \Lambda\_sample(\partition{\Lambda}) \sim \sum_i |\Lambda_i| 2^{|\Lambda_i|} \leq n 2^{\norm{\Lambda}},
    \end{equation}

    \noindent where $\norm{\Lambda}$ is shorthand for $\mathrm{max}_i(|\Lambda_i|)$.
    This entails that the average-case performance will be linked to an exponential average of the biggest group of indistinguishable photons in the partition distribution.

    This average can be compute exactly for the common Orthogonal Bad Bits model \cite{sparrow_quantum_2017}, as shown here below:
    \begin{proposition}
        The average complexity of sampling from an OBB state with indistinguishability $x$ using the partition sampling algorithm is:
        $$ \mathcal{O}\!\left( n(1+x)^n \right).$$
    \end{proposition}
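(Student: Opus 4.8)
The plan is to first pin down the partition distribution $\wpart{\Lambda}$ of the OBB state and then evaluate the exponential average of the largest-block size that governs the sampling cost. In the Orthogonal Bad Bits model each photon carries a shared indistinguishable internal component with weight $x$ together with a private, mutually orthogonal distinguishable component with weight $1-x$, so the state is separable, $\rho = \bigotimes_i \rho_i$. I would start from Prop.~\ref{eq:msigma_separable}: for each disjoint cycle $\sigma_i$ of $\sigma$ one evaluates the ordered cycle trace $\trace{\rho_{j_1}\cdots\rho_{j_\ell}}$, and because the bad components are mutually orthogonal and orthogonal to the good one, only the all-good contribution survives in any cycle of length $\ell \geq 2$, yielding a factor $x^\ell$, while every fixed point contributes a factor $1$. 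Writing $f(\sigma)$ for the number of fixed points of $\sigma$, this gives
\begin{equation}
M_\sigma = x^{\,n - f(\sigma)},
\end{equation}
which depends only on the cycle partition $\partition{\sigma}$. Orbit invariance is therefore automatic and $\rho \in \mathcal{I}$.

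Next I would identify $\wpart{\Lambda}$ explicitly through its classical interpretation. Consider the process in which each photon independently joins a single common block with probability $x$ and otherwise forms its own singleton. For a given $\sigma$, the order relation $\partition{\Lambda} \succeq \partition{\sigma}$ holds exactly when every photon displaced by $\sigma$ lands in the common block, an event of probability $x^{n-f(\sigma)}$; hence $\sum_{\partition{\Lambda}\succeq\partition{\sigma}}\wpart{\Lambda}$ reproduces the $M_\sigma$ above. By the bijection between orbit-invariant indistinguishabilities and partition distributions guaranteed by Thm.~\ref{th:equiv_part}, this is the partition distribution of the OBB state. In particular the size $k$ of the indistinguishable block is Binomial$(n,x)$, and the largest cell has size $\norm{\Lambda} = \max(k,1)$.

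Finally I would bound the average cost. Using the per-partition estimate $\sum_i |\Lambda_i|\,2^{|\Lambda_i|} \leq n\,2^{\norm{\Lambda}}$ and averaging against the binomial weights gives
\begin{equation}
\expected{n\,2^{\norm{\Lambda}}} = n\sum_{k=0}^{n}\binom{n}{k}x^k(1-x)^{n-k}\,2^{\max(k,1)}.
\end{equation}
The binomial theorem evaluates $\sum_k \binom{n}{k}(2x)^k(1-x)^{n-k} = (1+x)^n$; correcting only the $k=0$ term, whose weight $2^{\max(0,1)}=2$ replaces $2^0=1$, adds a subdominant $(1-x)^n$, so the average equals $n\,[(1+x)^n + (1-x)^n]$. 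Since $(1-x)^n \leq (1+x)^n$, this is $\mathcal{O}\!\left(n(1+x)^n\right)$.

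I expect the only delicate step to be the identification of the partition distribution in the middle paragraph---verifying that the single-block-plus-singletons process reproduces $M_\sigma = x^{n-f(\sigma)}$ exactly, and invoking the uniqueness of Thm.~\ref{th:equiv_part} to conclude that the indistinguishable block is genuinely Binomial$(n,x)$. Once that is in hand, the final average is a one-line binomial-theorem computation and the $k=0$ correction is manifestly negligible.
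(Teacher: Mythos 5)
Your proof is correct and follows essentially the same route as the paper's: identify the OBB partition distribution as the binomial coin-flip process (the paper justifies this in its appendix by arguing each photon's internal coherence is photocounting-irrelevant, whereas you verify it by matching $M_\sigma = x^{n-f(\sigma)}$ and invoking the uniqueness of Thm.~\ref{th:equiv_part}---a minor variant, not a different method), then evaluate the average cost by the binomial theorem. Your treatment of the $k=0$ boundary term is in fact slightly more careful than the paper's, which sums from $k=1$ yet states the total as exactly $n(1+x)^n$; either way the $\mathcal{O}\!\left(n(1+x)^n\right)$ bound stands.
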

    \begin{proof}

        We recall that the partition distribution in the OBB model is the result of a coin-flip process that chooses each photon to be in the indistinguishable group with probability $x$ and completely distinguishable otherwise.
        From the above considerations, sorting the partition by the number $k$ of indistinguishable photons obtained, we have $\wpart{\Lambda} = x^k(1-x)^{n-k}$ with multiplicity $\binom{n}{k}$.
        For the OBB model, the group of indistinguishable photons is always the largest cell of the partition, so the relative bound for the number of operations is $n2^k$.
        Summing, one obtains the result:

        $$ \sum_{k=1}^n \binom{n}{k} x^k(1-x)^{n-k} \cdot n2^k = n (1+x)^n.$$
    \end{proof}
    
    From the point of view of partition sampling, the indistinguishability $x$ in the OBB model continuously tunes the sampling complexity from a permanent-like exponential scaling to a constant one in the classical regime. Surprisingly, no exact exponential-to-polynomial transition point is found before $x=0$.
\section{Genuine indistinguishability}

\label{sec:disambiguation}

    Multiple preceding works \cite{brod_witnessing_2019,giordani_experimental_2020,pont_quantifying_2022} set the goal of quantifying ``genuine $n$-photon indistinguishability" (GI). This figure of merit is the share of an $n$-photon state that is attributable to $n$ fully indistinguishable photons. It quantifies, in the most intuitive way, the share of results from a photonic protocol that cannot be impacted by errors due to partial distinguishability.
    Measuring (or even bounding) the value of GI would therefore be a useful way to physically certify the correctness of a photonic computation.
    The concepts we have introduced in the earlier sections can now be used to address some outstanding issues that arise when defining this quantity.
    
    As originally stated in \cite{brod_witnessing_2019}, the concept of measuring GI relies on a decomposition of the state such as:

    \begin{equation}
        \label{eq:GI_decompose}
        \rho = \alpha_n \rho_n + \rho_{\mathrm{res}},
    \end{equation}

    \noindent where $\rho_n$ is a state which produces the statistics of $n$ indistinguishable photons and its weight $\alpha_n$ is what is referred to as GI.
    From the state-agnostic point of view of the distinguishabilities $M_\sigma$, this decomposition is ambiguous: there is no self-evident criterion that dictates how much of $\rho_n$ should be subtracted from $\rho$, or which properties should be satisfied by the distinguishabilities of the residual part $\rho_{\mathrm{res}}$.
    Physicality of $\rho_{\mathrm{res}}$ could be chosen as a basic criterion but in this case one faces a second issue, arising from Shchesnovich's framework: we do not yet posses a complete characterization of all set of $n!$ complex quantities, which corresponds to the generalized indistinguishabilities of a physical state.

    When a partition distribution exists, the partition representation naturally realizes the decomposition of \refeqn{eq:GI_decompose} and GI can be identified as the weight of the fully indistinguishable partition $p_{\{1,2,\dots,n\}}$.
    This broader definition of GI can be further extended to all possible $n$-photon states $\rho$: if $\rho$ does not have a partition representation, it can still be assigned the value of GI that is obtained after a projecting it onto the regime of incoherent distinguishability.
    In the context of the two projection operation we proposed, strict projection and permutation twirling, there is no ambiguity that arises from the choice of the projection as they both produce an average over all the generalized indistinguishabilities relative to maximal cycles:
    \begin{definition}
    \label{def:GI_part}
        $$\mathrm{GI}_{\mathrm{part}} = \frac{1}{(n-1)!}\sum_{\partition{\sigma} = \{1,2,\dots,n\}} M_\sigma.$$
    \end{definition}

    The subscript `part' is here used to highlight that this form of $\mathrm{GI}$ stems partition representation arguments.

    The effect of this choice of $\mathrm{GI}$ is to isolate all $n$-photon interference behavior of $\rho$ (as intended in Ref.~\cite{shchesnovich_collective_2018}) within $\rho_n$, at least on average, such that:
    
    \begin{equation}
        \sum_{\partition{\sigma} = \{1,2,\dots,n\}} \trace{\opperm{\sigma} \rho_\mathrm{res}} = 0.
    \end{equation}
    The hypothesis of orbit invariance here reflects the requirement that all such generalized indistinguishabilities have to be equal in order to be grouped together in $\rho_n$.

    From the literature, one can extract a second approach to GI that is inequivalent to the above definition.
    This approach, derived from representation theory arguments and found under various perspectives in Refs.~\cite{yurke_su2_1986,tan_su3_2013,tillmann_generalized_2015,khalid_permutational_2018,spivak_generalized_2022}, establishes the symmetric subspace projection as the measure of bosonic behavior.
    Given the subspace projector $\Pi_{\mathrm{sym}} = \frac{1}{n!}\sum_\sigma \opperm{\sigma}$, one can compute the modulus of this projection in terms of the generalized indistinguishabilities, leading us to consider a second definition of GI:
    \begin{definition}
    \label{def:GI_sym}
        $$\mathrm{GI}_{\mathrm{sym}} = \trace{\Pi_{\mathrm{sym}} \rho} = \frac{1}{n!} \sum_\sigma M_\sigma.$$
    \end{definition}

    Interestingly, this quantity has been found to exactly weight the share of probability relative to completely bunched outcomes \cite{tichy_sampling_2015,shchesnovich_partial_2015}. For this reason, one can argue that $\mathrm{GI}_{\mathrm{sym}}$ is actually the maximum value that GI can take without sacrificing physicality. Beyond this point, $\rho_\mathrm{res}$ would produce negative photocounting probabilities.
    
    When analyzing how this quantity has been defined in earlier works, we find a first group of works \cite{tichy_zerotransmission_2010,rodari_semideviceindependent_2025, deguise_coincidence_2014} that tried to access $\mathrm{GI}_\mathrm{sym}$, while a second group focused on the single indistinguishability $M_{(1\hdots n)}$ as a particular \cite{shchesnovich_collective_2018} or general \cite{karczewski_genuine_2019} quantifier of $n$-body behavior, or even making an explicit incoherent noise assumption \cite{brod_witnessing_2019,pont_quantifying_2022} and thus were aiming at the notion of $\mathrm{GI}_\mathrm{part}$. 

    Both definitions of GI reduce to different averages over the generalized indistinguishabilities and, whenever a state has a positive partition distribution, it is possible to directly compare the two quantities:
    
    \begin{theorem}
        Let $\rho\in \mathcal{I}$ with a positive partition representation $\wpart{\Lambda}\geq0$ and indistinguishabilities $M_\sigma$, then:
        $$\mathrm{GI}_{\mathrm{part}} = M_{(1...n)} \leq  \frac{1}{n!} \sum_{\egperm{\sigma}} M_\sigma =\mathrm{GI}_{\mathrm{sym}}.$$
    \end{theorem}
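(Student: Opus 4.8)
The plan is to treat the two parts of the statement separately, reducing both to the partition representation formula $M_\sigma = \sum_{\partition{\Lambda}\succeq\partition{\sigma}}\wpart{\Lambda}$ from \refeqn{eq:msigma_sum1}, together with the orbit invariance enjoyed by any $\rho\in\mathcal{I}$ (Thm.~\ref{th:partition}). For the equality $\mathrm{GI}_{\mathrm{part}} = M_{(1\dots n)}$, I would observe that the sum in Def.~\ref{def:GI_part} runs over all permutations $\sigma$ whose cycle partition is the single-cell partition $\{1,\dots,n\}$, i.e. over all $n$-cycles, of which there are exactly $(n-1)!$. Since $\rho\in\mathcal{I}$, orbit invariance guarantees $M_\sigma = M_{(1\dots n)}$ for every such $\sigma$, so the average collapses: $\mathrm{GI}_{\mathrm{part}} = \tfrac{1}{(n-1)!}\cdot(n-1)!\cdot M_{(1\dots n)} = M_{(1\dots n)}$.

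For the inequality I would first evaluate both endpoints through \refeqn{eq:msigma_sum1}. For $M_{(1\dots n)}$ the relevant cycle partition $\partition{\sigma}$ is the full partition $\{1,\dots,n\}$, which is the coarsest (maximal) element of the partition lattice under $\succeq$; hence the only $\partition{\Lambda}$ with $\partition{\Lambda}\succeq\{1,\dots,n\}$ is the full partition itself, giving $M_{(1\dots n)} = p_{\{1,\dots,n\}}$, exactly the weight of the fully indistinguishable configuration. For $\mathrm{GI}_{\mathrm{sym}} = \tfrac{1}{n!}\sum_{\egperm{\sigma}}M_\sigma$, I would substitute \refeqn{eq:msigma_sum1} and exchange the order of summation, turning the expression into $\tfrac{1}{n!}\sum_{\partition{\Lambda}}\wpart{\Lambda}\,\bigl|\{\sigma:\partition{\sigma}\preceq\partition{\Lambda}\}\bigr|$. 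The combinatorial step is to recognize that $\{\sigma:\partition{\sigma}\preceq\partition{\Lambda}\}$ consists precisely of the permutations mapping each cell of $\partition{\Lambda}$ to itself---the Young subgroup $\prod_i\symg{|\Lambda_i|}$---whose cardinality is $\prod_i|\Lambda_i|!$. This yields $\mathrm{GI}_{\mathrm{sym}} = \tfrac{1}{n!}\sum_{\partition{\Lambda}}\wpart{\Lambda}\prod_i|\Lambda_i|!$.

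To close the argument I would single out the full-partition term, whose coefficient $\prod_i|\Lambda_i|!/n!$ equals $n!/n! = 1$, reproducing exactly $p_{\{1,\dots,n\}} = M_{(1\dots n)}$. Every remaining partition contributes a strictly positive coefficient $\prod_i|\Lambda_i|!/n!$ multiplied by a weight $\wpart{\Lambda}\geq 0$ furnished by the positivity hypothesis, so all those contributions are non-negative and $\mathrm{GI}_{\mathrm{sym}}\geq M_{(1\dots n)}$ follows at once. The crux of the proof---rather than a genuine obstacle---is the bookkeeping in the summation exchange: one must correctly match the count of permutations refining a given partition to the Young subgroup order and verify that the full partition is the unique term receiving unit weight. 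Positivity of $\wpart{\Lambda}$ is indispensable here, since without it the leftover terms could carry either sign and the inequality need not hold; this is why the statement is restricted to states with a positive partition representation.
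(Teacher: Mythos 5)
Your proof is correct, and it takes a genuinely different route from the paper's. The paper argues termwise: by \refeqn{eq:msigma_sum1} and transitivity of $\succeq$, whenever $\partition{\sigma} \succeq \partition{\tau}$ the summation defining $M_\sigma$ is contained in the one defining $M_\tau$, so positivity of the weights gives the monotonicity $M_\sigma \leq M_\tau$; since the cycle partition of $(1\dots n)$ is the maximal element of the lattice, $M_{(1\dots n)} \leq M_\tau$ for every $\tau$, and averaging over $\symg{n}$ yields the bound. You instead compute the average globally: exchanging the two sums and recognizing $\{\sigma : \partition{\sigma} \preceq \partition{\Lambda}\}$ as the stabilizer $\sympart{\Lambda}$ of size $\prod_i |\Lambda_i|!$ (cf.\ App.~\ref{app:partition}), you obtain the exact identity
$$\mathrm{GI}_{\mathrm{sym}} = \frac{1}{n!}\sum_{\partition{\Lambda}} \wpart{\Lambda} \prod_i |\Lambda_i|!\,,$$
whose full-partition term is exactly $p_{\{1,\dots,n\}} = M_{(1\dots n)}$ and whose remaining terms are non-negative by the positivity hypothesis. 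The paper's route buys the stronger pointwise statement---the entire hierarchy $M_\sigma \leq M_\tau$ along the partition order, of which the theorem is a single instance. Your route buys a closed form for $\mathrm{GI}_{\mathrm{sym}}$ in terms of the partition distribution, mirroring the symmetric-subspace computation of App.~\ref{app:sym_proj}; combined with the bound $|\sympart{\Lambda}|/n! \leq 1/n$ proven there for $\partition{\Lambda} \neq \{1,\dots,n\}$ and the normalization $\sum_{\partition{\Lambda}}\wpart{\Lambda} = 1$, your argument immediately sharpens the theorem to the quantitative estimate $\mathrm{GI}_{\mathrm{sym}} \leq M_{(1\dots n)} + \frac{1}{n}\left(1 - M_{(1\dots n)}\right)$, which is saturated by the singly distinguishable state of Ex.~\ref{ex:single_dist}. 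A minor additional merit: you prove the equality $\mathrm{GI}_{\mathrm{part}} = M_{(1\dots n)}$ explicitly from orbit invariance (the $(n-1)!$ $n$-cycles all share the same indistinguishability), a step the paper's proof leaves implicit.
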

    \begin{proof}
        
    This bound follows from observing that positive partition weights imply that the generalized indistinguishabilities follow the hierarchy described by the order relation $\succeq$. More precisely, considering \refeqn{eq:msigma_sum1}, for two permutations $\sigma$ and $\tau$ with $\partition{\sigma} \succeq \partition{\tau}$, the transitive property ensures that the summation for $\sigma$ is included in the one over $\tau$. From the positivity of $\wpart{}$ we have that $ M_\sigma \leq M_\tau$.
    \end{proof}

    This inequality is strict whenever $ M_{(1...n)} \neq 1$ or, equivalently, unless $\rho$ is a perfectly indistinguishable $n$-photon state. Hence, they disagree whenever the state is nontrivial.
    
    Moreover, the two quantities can be computed and compared for a state following the Orthogonal Bad Bits model with average pairwise distinguishability $x^2$ (see Ex.~\ref{ex:obb} and \cite{rodari_semideviceindependent_2025}) obtaining:
    \begin{example}[OBB $\mathrm{GI}$]
    \label{ex:obb_GI}
    $$
        \mathrm{GI}_\mathrm{part} = x^n,\\
    $$
    $$
        \mathrm{GI}_\mathrm{sym} \approx x^ne^{\frac{1}{x}-1}.
    $$
    \end{example}

    Here the two definitions approximately agree in the indistinguishable regime where $x\approx 1$, yet they diverge exponentially in the distinguishable regime where $x\approx0$.
    This effect is even more evident if we analyze $\mathrm{GI}$ for a partition state with a single distinguishable particle $\partition{\Lambda}~=~\{1\}\{2,3,..n\}$:

    \begin{example}[Singly distinguishable $\mathrm{GI}$]
    \label{ex:single_dist}
    $$
        \mathrm{GI}_\mathrm{part} = 0,\\
    $$
    $$
        \mathrm{GI}_\mathrm{sym} = \frac{(n-1)!}{n!} = \frac{1}{n}.
    $$
    \end{example}

    It is thus clear that the two definitions generally quantify very different properties. The ambiguity highlighted in Ex.~\ref{ex:single_dist} reflects a tension between two perspectives: on one hand, the singly distinguishable state cannot support any task requiring genuine $n$-photon interference; on the other, the state plausibly retains some of the computational hardness of the fully indistinguishable state, preserving partial utility for sampling-based protocols.

    This incongruence stems from the absence of a proper resource theory for multi-photon interference, and we believe the task of defining GI could offer interesting clues to address this gap.
    In order to produce a state that explicitly presents the relative indistinguishability resource, which remains ``latent" for an arbitrary state, the two quantifiers we have discussed require access to two distinct kinds of operations.
    As shown in Thm.~\ref{th:mitigation} reproducing $\mathrm{GI}_\mathrm{part}$ only requires classical evolutions, specifically random permutations, while reproducing $\mathrm{GI}_\mathrm{sym}$, would require implementing the coherent evolution $\Pi_{\mathrm{sym}}$.

\section{Conclusions}
\label{sec:discussion}


In this article, we showcased how a general analysis of multi-photon interference can be carried out without relying on a density matrix representation.
Through Thm.~\ref{th:state_equivM}, we demonstrated that the interference behavior of an $n$-photon state can be fully characterized by measuring and tracking the evolution of the average values of the mode permutation operator $\avg{\opperm{\sigma}}$. In doing so, we constructed an explicit set of experiments that can be implemented to perform an effective tomography of distinguishability noise. This approach notably enabled a unified treatment of several works concerning multi-photon indistinguishability \cite{shchesnovich_collective_2018, spivak_generalized_2022, karczewski_genuine_2019, brod_witnessing_2019}, along with and their associated measurement protocols \cite{giordani_experimental_2018, pont_quantifying_2022, rodari_semideviceindependent_2025}.
While our analysis then focused on the subspace of incoherent errors, we believe this general description is of independent interest, as it may enable the exploration of other relevant scenario such as distinguishability distillation \cite{marshall_distillation_2022, somhorst_photon_2024}.

We established a rigorous framework for the regime in which the imperfect interference behavior of an $n$-photon state can be fully described by incoherent distinguishability errors---that is, discrete, stochastic deviations from ideal indistinguishability (Thm.\ref{th:partition}). In this regime, the partition distribution $\wpart{\Lambda}$ serves as a complete and basis-independent description of interference, replacing the need to track all generalized indistinguishabilities $M_\sigma = \avg{\opperm{\sigma}}$ (Thm.\ref{th:equiv_part}).


Notably, we have shown that this incoherent regime enables an error mitigation strategy for distinguishability based on partition operations (Thm.~\ref{th:mitigation}), such as the controllable delay of groups of photons.
It is furthermore always possible to project an arbitrary state onto the incoherent regime through the application of random mode permutations (Thm.~\ref{th:twirling}), making error mitigation feasible even if the original state contains significant coherent errors.
Through arguments from random matrix theory used in Refs.~\cite{renema_efficient_2018,hoven_efficient_2024} to prove the simulability of some classes of partially distinguishable states, we evidenced that states projected onto the incoherent distinguishability regime retain hardness and are thus still useful for quantum information processing.
Moreover, the results of Refs.~\cite{renema_efficient_2018,hoven_efficient_2024} were recovered for a specific incoherent noise model and the concepts we discussed could be further used to broaden their scope.

Lastly, we discussed a problem intimately tied to incoherent distinguishability: the definition of ``genuine $n$-photon indistinguishability."
States that admit a partition representation naturally lead to a consistent definition of genuine $n$-photon indistinguishability in terms of a probability (Def.~\ref{def:GI_part}), which one can compare to an alternative definition that descends from symmetric subspace projection (Def.~\ref{def:GI_sym}).
Selected examples (Ex.\ref{ex:obb_GI},\ref{ex:single_dist}) highlighted regimes in which the two definitions diverge in their assessment of the resources provided by a state. While the partition framework helps expose this tension, our discussion underscores the need for further investigation to clarify which resource is actually degraded by distinguishability---and to what extent.

As a final note, with minor modifications to our arguments, an equivalent partition-based description of partially distinguishable fermions can be constructed. This complements the claims in Ref.~\cite{spivak_generalized_2022} where `Distinguishable Fermion Sampling' was proposed as a new, potentially hard, problem: whenever the effects of distinguishability can be simulated by a classical process, this cannot hold. In this context, a key question remains whether a `coherent' distinguishability regime makes the sampling problem harder---an open problem for both fermions and bosons.

\section*{Acknowledgements}

This work has been co-funded by the European Commission as part of the EIC accelerator program under the grant agreement 190188855 for SEPOQC project, by the Horizon-CL4 program under the grant agreement 101135288 for EPIQUE project, and by the CIFRE grant n°2024/0084.
E.~A. and S.~C.~W. would like to thank Shane Mansfield, Grégoire de Gliniasty, Rafail Frantzeskakis, and Samuel Mister for reading the manuscript and providing feedback; Rawad Mezher, James Mills, Hugo Thomas, Rodrigo Martinez, Danilo Triggiani, Marcin Karczewski, Olivier Krebs, Nadia Belabas, and Jean Senellart for helpful discussion and comments. 

\bibliography{bibtest}

\begin{thebibliography}{57}%
\makeatletter
\providecommand \@ifxundefined [1]{%
 \@ifx{#1\undefined}
}%
\providecommand \@ifnum [1]{%
 \ifnum #1\expandafter \@firstoftwo
 \else \expandafter \@secondoftwo
 \fi
}%
\providecommand \@ifx [1]{%
 \ifx #1\expandafter \@firstoftwo
 \else \expandafter \@secondoftwo
 \fi
}%
\providecommand \natexlab [1]{#1}%
\providecommand \enquote  [1]{``#1''}%
\providecommand \bibnamefont  [1]{#1}%
\providecommand \bibfnamefont [1]{#1}%
\providecommand \citenamefont [1]{#1}%
\providecommand \href@noop [0]{\@secondoftwo}%
\providecommand \href [0]{\begingroup \@sanitize@url \@href}%
\providecommand \@href[1]{\@@startlink{#1}\@@href}%
\providecommand \@@href[1]{\endgroup#1\@@endlink}%
\providecommand \@sanitize@url [0]{\catcode `\\12\catcode `\$12\catcode
  `\&12\catcode `\#12\catcode `\^12\catcode `\_12\catcode `\%12\relax}%
\providecommand \@@startlink[1]{}%
\providecommand \@@endlink[0]{}%
\providecommand \url  [0]{\begingroup\@sanitize@url \@url }%
\providecommand \@url [1]{\endgroup\@href {#1}{\urlprefix }}%
\providecommand \urlprefix  [0]{URL }%
\providecommand \Eprint [0]{\href }%
\providecommand \doibase [0]{https://doi.org/}%
\providecommand \selectlanguage [0]{\@gobble}%
\providecommand \bibinfo  [0]{\@secondoftwo}%
\providecommand \bibfield  [0]{\@secondoftwo}%
\providecommand \translation [1]{[#1]}%
\providecommand \BibitemOpen [0]{}%
\providecommand \bibitemStop [0]{}%
\providecommand \bibitemNoStop [0]{.\EOS\space}%
\providecommand \EOS [0]{\spacefactor3000\relax}%
\providecommand \BibitemShut  [1]{\csname bibitem#1\endcsname}%
\let\auto@bib@innerbib\@empty
\bibitem [{\citenamefont {Walmsley}(2023)}]{walmsley_light_2023}%
  \BibitemOpen
  \bibfield  {author} {\bibinfo {author} {\bibfnamefont {I.}~\bibnamefont
  {Walmsley}},\ }\bibfield  {title} {\bibinfo {title} {Light in quantum
  computing and simulation: Perspective},\ }\href
  {https://doi.org/10.1364/OPTICAQ.507527} {\bibfield  {journal} {\bibinfo
  {journal} {Optica Quantum}\ }\textbf {\bibinfo {volume} {1}},\ \bibinfo
  {pages} {35} (\bibinfo {year} {2023})}\BibitemShut {NoStop}%
\bibitem [{\citenamefont {Knill}\ \emph {et~al.}(2001)\citenamefont {Knill},
  \citenamefont {Laflamme},\ and\ \citenamefont {Milburn}}]{knill_scheme_2001}%
  \BibitemOpen
  \bibfield  {author} {\bibinfo {author} {\bibfnamefont {E.}~\bibnamefont
  {Knill}}, \bibinfo {author} {\bibfnamefont {R.}~\bibnamefont {Laflamme}},\
  and\ \bibinfo {author} {\bibfnamefont {G.~J.}\ \bibnamefont {Milburn}},\
  }\bibfield  {title} {\bibinfo {title} {A scheme for efficient quantum
  computation with linear optics},\ }\href {https://doi.org/10.1038/35051009}
  {\bibfield  {journal} {\bibinfo  {journal} {Nature}\ }\textbf {\bibinfo
  {volume} {409}},\ \bibinfo {pages} {46} (\bibinfo {year} {2001})}\BibitemShut
  {NoStop}%
\bibitem [{\citenamefont {Raussendorf}\ and\ \citenamefont
  {Briegel}(2001)}]{raussendorf_oneway_2001}%
  \BibitemOpen
  \bibfield  {author} {\bibinfo {author} {\bibfnamefont {R.}~\bibnamefont
  {Raussendorf}}\ and\ \bibinfo {author} {\bibfnamefont {H.~J.}\ \bibnamefont
  {Briegel}},\ }\bibfield  {title} {\bibinfo {title} {A {{One-Way Quantum
  Computer}}},\ }\href {https://doi.org/10.1103/PhysRevLett.86.5188} {\bibfield
   {journal} {\bibinfo  {journal} {Physical Review Letters}\ }\textbf {\bibinfo
  {volume} {86}},\ \bibinfo {pages} {5188} (\bibinfo {year}
  {2001})}\BibitemShut {NoStop}%
\bibitem [{\citenamefont {Scheel}(2004)}]{scheel_permanents_2004}%
  \BibitemOpen
  \bibfield  {author} {\bibinfo {author} {\bibfnamefont {S.}~\bibnamefont
  {Scheel}},\ }\href@noop {} {\bibinfo {title} {Permanents in linear optical
  networks}} (\bibinfo {year} {2004}),\ \Eprint
  {https://arxiv.org/abs/quant-ph/0406127} {arXiv:quant-ph/0406127}
  \BibitemShut {NoStop}%
\bibitem [{\citenamefont {Valiant}(1979)}]{valiant_complexity_1979}%
  \BibitemOpen
  \bibfield  {author} {\bibinfo {author} {\bibfnamefont {L.}~\bibnamefont
  {Valiant}},\ }\bibfield  {title} {\bibinfo {title} {The complexity of
  computing the permanent},\ }\href
  {https://doi.org/10.1016/0304-3975(79)90044-6} {\bibfield  {journal}
  {\bibinfo  {journal} {Theoretical Computer Science}\ }\textbf {\bibinfo
  {volume} {8}},\ \bibinfo {pages} {189} (\bibinfo {year} {1979})}\BibitemShut
  {NoStop}%
\bibitem [{\citenamefont {Aaronson}\ and\ \citenamefont
  {Arkhipov}(2011)}]{aaronson_computational_2011}%
  \BibitemOpen
  \bibfield  {author} {\bibinfo {author} {\bibfnamefont {S.}~\bibnamefont
  {Aaronson}}\ and\ \bibinfo {author} {\bibfnamefont {A.}~\bibnamefont
  {Arkhipov}},\ }\bibfield  {title} {\bibinfo {title} {The computational
  complexity of linear optics},\ }in\ \href
  {https://doi.org/10.1145/1993636.1993682} {\emph {\bibinfo {booktitle}
  {Proceedings of the Forty-Third Annual {{ACM}} Symposium on {{Theory}} of
  Computing}}}\ (\bibinfo  {publisher} {ACM},\ \bibinfo {address} {San Jose
  California USA},\ \bibinfo {year} {2011})\ pp.\ \bibinfo {pages}
  {333--342}\BibitemShut {NoStop}%
\bibitem [{\citenamefont {Shchesnovich}(2014)}]{shchesnovich_sufficient_2014}%
  \BibitemOpen
  \bibfield  {author} {\bibinfo {author} {\bibfnamefont {V.}~\bibnamefont
  {Shchesnovich}},\ }\bibfield  {title} {\bibinfo {title} {Sufficient bound on
  the mode mismatch of single photons for scalability of the boson sampling
  computer},\ }\href {https://doi.org/10.1103/PhysRevA.89.022333} {\bibfield
  {journal} {\bibinfo  {journal} {Physical Review A}\ }\textbf {\bibinfo
  {volume} {89}},\ \bibinfo {pages} {022333} (\bibinfo {year} {2014})},\
  \Eprint {https://arxiv.org/abs/1311.6796} {arXiv:1311.6796 [quant-ph]}
  \BibitemShut {NoStop}%
\bibitem [{\citenamefont {Renema}\ \emph {et~al.}(2018)\citenamefont {Renema},
  \citenamefont {Menssen}, \citenamefont {Clements}, \citenamefont {Triginer},
  \citenamefont {Kolthammer},\ and\ \citenamefont
  {Walmsley}}]{renema_efficient_2018}%
  \BibitemOpen
  \bibfield  {author} {\bibinfo {author} {\bibfnamefont {J.~J.}\ \bibnamefont
  {Renema}}, \bibinfo {author} {\bibfnamefont {A.}~\bibnamefont {Menssen}},
  \bibinfo {author} {\bibfnamefont {W.~R.}\ \bibnamefont {Clements}}, \bibinfo
  {author} {\bibfnamefont {G.}~\bibnamefont {Triginer}}, \bibinfo {author}
  {\bibfnamefont {W.~S.}\ \bibnamefont {Kolthammer}},\ and\ \bibinfo {author}
  {\bibfnamefont {I.~A.}\ \bibnamefont {Walmsley}},\ }\bibfield  {title}
  {\bibinfo {title} {Efficient algorithm for boson sampling with partially
  distinguishable photons},\ }\href
  {https://doi.org/10.1103/PhysRevLett.120.220502} {\bibfield  {journal}
  {\bibinfo  {journal} {Physical Review Letters}\ }\textbf {\bibinfo {volume}
  {120}},\ \bibinfo {pages} {220502} (\bibinfo {year} {2018})},\ \Eprint
  {https://arxiv.org/abs/1707.02793} {arXiv:1707.02793 [quant-ph]} \BibitemShut
  {NoStop}%
\bibitem [{\citenamefont {{Garc{\'i}a-Patr{\'o}n}}\ \emph
  {et~al.}(2019)\citenamefont {{Garc{\'i}a-Patr{\'o}n}}, \citenamefont
  {Renema},\ and\ \citenamefont
  {Shchesnovich}}]{garcia-patron_simulating_2019}%
  \BibitemOpen
  \bibfield  {author} {\bibinfo {author} {\bibfnamefont {R.}~\bibnamefont
  {{Garc{\'i}a-Patr{\'o}n}}}, \bibinfo {author} {\bibfnamefont {J.~J.}\
  \bibnamefont {Renema}},\ and\ \bibinfo {author} {\bibfnamefont
  {V.}~\bibnamefont {Shchesnovich}},\ }\bibfield  {title} {\bibinfo {title}
  {Simulating boson sampling in lossy architectures},\ }\href
  {https://doi.org/10.22331/q-2019-08-05-169} {\bibfield  {journal} {\bibinfo
  {journal} {Quantum}\ }\textbf {\bibinfo {volume} {3}},\ \bibinfo {pages}
  {169} (\bibinfo {year} {2019})},\ \Eprint {https://arxiv.org/abs/1712.10037}
  {arXiv:1712.10037 [quant-ph]} \BibitemShut {NoStop}%
\bibitem [{\citenamefont {Renema}\ \emph {et~al.}(2019)\citenamefont {Renema},
  \citenamefont {Shchesnovich},\ and\ \citenamefont
  {{Garcia-Patron}}}]{renema_classical_2019}%
  \BibitemOpen
  \bibfield  {author} {\bibinfo {author} {\bibfnamefont {J.}~\bibnamefont
  {Renema}}, \bibinfo {author} {\bibfnamefont {V.}~\bibnamefont
  {Shchesnovich}},\ and\ \bibinfo {author} {\bibfnamefont {R.}~\bibnamefont
  {{Garcia-Patron}}},\ }\href@noop {} {\bibinfo {title} {Classical simulability
  of noisy boson sampling}} (\bibinfo {year} {2019}),\ \Eprint
  {https://arxiv.org/abs/1809.01953} {arXiv:1809.01953 [cond-mat,
  physics:physics, physics:quant-ph]} \BibitemShut {NoStop}%
\bibitem [{\citenamefont {Tillmann}\ \emph {et~al.}(2013)\citenamefont
  {Tillmann}, \citenamefont {Daki{\'c}}, \citenamefont {Heilmann},
  \citenamefont {Nolte}, \citenamefont {Szameit},\ and\ \citenamefont
  {Walther}}]{tillmann_experimental_2013}%
  \BibitemOpen
  \bibfield  {author} {\bibinfo {author} {\bibfnamefont {M.}~\bibnamefont
  {Tillmann}}, \bibinfo {author} {\bibfnamefont {B.}~\bibnamefont {Daki{\'c}}},
  \bibinfo {author} {\bibfnamefont {R.}~\bibnamefont {Heilmann}}, \bibinfo
  {author} {\bibfnamefont {S.}~\bibnamefont {Nolte}}, \bibinfo {author}
  {\bibfnamefont {A.}~\bibnamefont {Szameit}},\ and\ \bibinfo {author}
  {\bibfnamefont {P.}~\bibnamefont {Walther}},\ }\bibfield  {title} {\bibinfo
  {title} {Experimental {{Boson Sampling}}},\ }\href
  {https://doi.org/10.1038/nphoton.2013.102} {\bibfield  {journal} {\bibinfo
  {journal} {Nature Photonics}\ }\textbf {\bibinfo {volume} {7}},\ \bibinfo
  {pages} {540} (\bibinfo {year} {2013})},\ \Eprint
  {https://arxiv.org/abs/1212.2240} {arXiv:1212.2240 [quant-ph]} \BibitemShut
  {NoStop}%
\bibitem [{\citenamefont {Spring}\ \emph {et~al.}(2013)\citenamefont {Spring},
  \citenamefont {Metcalf}, \citenamefont {Humphreys}, \citenamefont
  {Kolthammer}, \citenamefont {Jin}, \citenamefont {Barbieri}, \citenamefont
  {Datta}, \citenamefont {{Thomas-Peter}}, \citenamefont {Langford},
  \citenamefont {Kundys}, \citenamefont {Gates}, \citenamefont {Smith},
  \citenamefont {Smith},\ and\ \citenamefont {Walmsley}}]{spring_boson_2013}%
  \BibitemOpen
  \bibfield  {author} {\bibinfo {author} {\bibfnamefont {J.~B.}\ \bibnamefont
  {Spring}}, \bibinfo {author} {\bibfnamefont {B.~J.}\ \bibnamefont {Metcalf}},
  \bibinfo {author} {\bibfnamefont {P.~C.}\ \bibnamefont {Humphreys}}, \bibinfo
  {author} {\bibfnamefont {W.~S.}\ \bibnamefont {Kolthammer}}, \bibinfo
  {author} {\bibfnamefont {X.-M.}\ \bibnamefont {Jin}}, \bibinfo {author}
  {\bibfnamefont {M.}~\bibnamefont {Barbieri}}, \bibinfo {author}
  {\bibfnamefont {A.}~\bibnamefont {Datta}}, \bibinfo {author} {\bibfnamefont
  {N.}~\bibnamefont {{Thomas-Peter}}}, \bibinfo {author} {\bibfnamefont
  {N.~K.}\ \bibnamefont {Langford}}, \bibinfo {author} {\bibfnamefont
  {D.}~\bibnamefont {Kundys}}, \bibinfo {author} {\bibfnamefont {J.~C.}\
  \bibnamefont {Gates}}, \bibinfo {author} {\bibfnamefont {B.~J.}\ \bibnamefont
  {Smith}}, \bibinfo {author} {\bibfnamefont {P.~G.~R.}\ \bibnamefont
  {Smith}},\ and\ \bibinfo {author} {\bibfnamefont {I.~A.}\ \bibnamefont
  {Walmsley}},\ }\bibfield  {title} {\bibinfo {title} {Boson {{Sampling}} on a
  {{Photonic Chip}}},\ }\href {https://doi.org/10.1126/science.1231692}
  {\bibfield  {journal} {\bibinfo  {journal} {Science}\ }\textbf {\bibinfo
  {volume} {339}},\ \bibinfo {pages} {798} (\bibinfo {year} {2013})},\ \Eprint
  {https://arxiv.org/abs/1212.2622} {arXiv:1212.2622 [physics,
  physics:quant-ph]} \BibitemShut {NoStop}%
\bibitem [{\citenamefont {Wang}\ \emph {et~al.}(2019)\citenamefont {Wang},
  \citenamefont {Qin}, \citenamefont {Ding}, \citenamefont {Chen},
  \citenamefont {Chen}, \citenamefont {You}, \citenamefont {He}, \citenamefont
  {Jiang}, \citenamefont {You}, \citenamefont {Wang}, \citenamefont
  {Schneider}, \citenamefont {Renema}, \citenamefont {H{\"o}fling},
  \citenamefont {Lu},\ and\ \citenamefont {Pan}}]{wang_boson_2019}%
  \BibitemOpen
  \bibfield  {author} {\bibinfo {author} {\bibfnamefont {H.}~\bibnamefont
  {Wang}}, \bibinfo {author} {\bibfnamefont {J.}~\bibnamefont {Qin}}, \bibinfo
  {author} {\bibfnamefont {X.}~\bibnamefont {Ding}}, \bibinfo {author}
  {\bibfnamefont {M.-C.}\ \bibnamefont {Chen}}, \bibinfo {author}
  {\bibfnamefont {S.}~\bibnamefont {Chen}}, \bibinfo {author} {\bibfnamefont
  {X.}~\bibnamefont {You}}, \bibinfo {author} {\bibfnamefont {Y.-M.}\
  \bibnamefont {He}}, \bibinfo {author} {\bibfnamefont {X.}~\bibnamefont
  {Jiang}}, \bibinfo {author} {\bibfnamefont {L.}~\bibnamefont {You}}, \bibinfo
  {author} {\bibfnamefont {Z.}~\bibnamefont {Wang}}, \bibinfo {author}
  {\bibfnamefont {C.}~\bibnamefont {Schneider}}, \bibinfo {author}
  {\bibfnamefont {J.~J.}\ \bibnamefont {Renema}}, \bibinfo {author}
  {\bibfnamefont {S.}~\bibnamefont {H{\"o}fling}}, \bibinfo {author}
  {\bibfnamefont {C.-Y.}\ \bibnamefont {Lu}},\ and\ \bibinfo {author}
  {\bibfnamefont {J.-W.}\ \bibnamefont {Pan}},\ }\bibfield  {title} {\bibinfo
  {title} {Boson {{Sampling}} with 20 {{Input Photons}} and a 60-{{Mode
  Interferometer}} in a 1 0 14 -{{Dimensional Hilbert Space}}},\ }\href
  {https://doi.org/10.1103/PhysRevLett.123.250503} {\bibfield  {journal}
  {\bibinfo  {journal} {Physical Review Letters}\ }\textbf {\bibinfo {volume}
  {123}},\ \bibinfo {pages} {250503} (\bibinfo {year} {2019})}\BibitemShut
  {NoStop}%
\bibitem [{\citenamefont {Maring}\ \emph {et~al.}(2024)\citenamefont {Maring},
  \citenamefont {Fyrillas}, \citenamefont {Pont}, \citenamefont {Ivanov},
  \citenamefont {Stepanov}, \citenamefont {Margaria}, \citenamefont {Hease},
  \citenamefont {Pishchagin}, \citenamefont {Lema{\^i}tre}, \citenamefont
  {Sagnes}, \citenamefont {Au}, \citenamefont {Boissier}, \citenamefont
  {Bertasi}, \citenamefont {Baert}, \citenamefont {Valdivia}, \citenamefont
  {Billard}, \citenamefont {Acar}, \citenamefont {Brieussel}, \citenamefont
  {Mezher}, \citenamefont {Wein}, \citenamefont {Salavrakos}, \citenamefont
  {Sinnott}, \citenamefont {Fioretto}, \citenamefont {Emeriau}, \citenamefont
  {Belabas}, \citenamefont {Mansfield}, \citenamefont {Senellart},
  \citenamefont {Senellart},\ and\ \citenamefont
  {Somaschi}}]{maring_versatile_2024}%
  \BibitemOpen
  \bibfield  {author} {\bibinfo {author} {\bibfnamefont {N.}~\bibnamefont
  {Maring}}, \bibinfo {author} {\bibfnamefont {A.}~\bibnamefont {Fyrillas}},
  \bibinfo {author} {\bibfnamefont {M.}~\bibnamefont {Pont}}, \bibinfo {author}
  {\bibfnamefont {E.}~\bibnamefont {Ivanov}}, \bibinfo {author} {\bibfnamefont
  {P.}~\bibnamefont {Stepanov}}, \bibinfo {author} {\bibfnamefont
  {N.}~\bibnamefont {Margaria}}, \bibinfo {author} {\bibfnamefont
  {W.}~\bibnamefont {Hease}}, \bibinfo {author} {\bibfnamefont
  {A.}~\bibnamefont {Pishchagin}}, \bibinfo {author} {\bibfnamefont
  {A.}~\bibnamefont {Lema{\^i}tre}}, \bibinfo {author} {\bibfnamefont
  {I.}~\bibnamefont {Sagnes}}, \bibinfo {author} {\bibfnamefont {T.~H.}\
  \bibnamefont {Au}}, \bibinfo {author} {\bibfnamefont {S.}~\bibnamefont
  {Boissier}}, \bibinfo {author} {\bibfnamefont {E.}~\bibnamefont {Bertasi}},
  \bibinfo {author} {\bibfnamefont {A.}~\bibnamefont {Baert}}, \bibinfo
  {author} {\bibfnamefont {M.}~\bibnamefont {Valdivia}}, \bibinfo {author}
  {\bibfnamefont {M.}~\bibnamefont {Billard}}, \bibinfo {author} {\bibfnamefont
  {O.}~\bibnamefont {Acar}}, \bibinfo {author} {\bibfnamefont {A.}~\bibnamefont
  {Brieussel}}, \bibinfo {author} {\bibfnamefont {R.}~\bibnamefont {Mezher}},
  \bibinfo {author} {\bibfnamefont {S.~C.}\ \bibnamefont {Wein}}, \bibinfo
  {author} {\bibfnamefont {A.}~\bibnamefont {Salavrakos}}, \bibinfo {author}
  {\bibfnamefont {P.}~\bibnamefont {Sinnott}}, \bibinfo {author} {\bibfnamefont
  {D.~A.}\ \bibnamefont {Fioretto}}, \bibinfo {author} {\bibfnamefont {P.-E.}\
  \bibnamefont {Emeriau}}, \bibinfo {author} {\bibfnamefont {N.}~\bibnamefont
  {Belabas}}, \bibinfo {author} {\bibfnamefont {S.}~\bibnamefont {Mansfield}},
  \bibinfo {author} {\bibfnamefont {P.}~\bibnamefont {Senellart}}, \bibinfo
  {author} {\bibfnamefont {J.}~\bibnamefont {Senellart}},\ and\ \bibinfo
  {author} {\bibfnamefont {N.}~\bibnamefont {Somaschi}},\ }\bibfield  {title}
  {\bibinfo {title} {A versatile single-photon-based quantum computing
  platform},\ }\bibfield  {journal} {\bibinfo  {journal} {Nature Photonics}\
  }\href {https://doi.org/10.1038/s41566-024-01403-4}
  {10.1038/s41566-024-01403-4} (\bibinfo {year} {2024})\BibitemShut {NoStop}%
\bibitem [{\citenamefont {Aaronson}\ and\ \citenamefont
  {Chen}(2017)}]{aaronson_complexitytheoretic_2017}%
  \BibitemOpen
  \bibfield  {author} {\bibinfo {author} {\bibfnamefont {S.}~\bibnamefont
  {Aaronson}}\ and\ \bibinfo {author} {\bibfnamefont {L.}~\bibnamefont
  {Chen}},\ }\bibfield  {title} {\bibinfo {title} {Complexity-theoretic
  foundations of quantum supremacy experiments},\ }in\ \href@noop {} {\emph
  {\bibinfo {booktitle} {Proceedings of the 32nd {{Computational Complexity
  Conference}}}}},\ \bibinfo {series and number} {{{CCC}} '17}\ (\bibinfo
  {publisher} {Schloss Dagstuhl--Leibniz-Zentrum fuer Informatik},\ \bibinfo
  {address} {Dagstuhl, DEU},\ \bibinfo {year} {2017})\ pp.\ \bibinfo {pages}
  {1--67}\BibitemShut {NoStop}%
\bibitem [{\citenamefont {Somaschi}\ \emph {et~al.}(2016)\citenamefont
  {Somaschi}, \citenamefont {Giesz}, \citenamefont {De~Santis}, \citenamefont
  {Loredo}, \citenamefont {Almeida}, \citenamefont {Hornecker}, \citenamefont
  {Portalupi}, \citenamefont {Grange}, \citenamefont {Ant{\'o}n}, \citenamefont
  {Demory}, \citenamefont {G{\'o}mez}, \citenamefont {Sagnes}, \citenamefont
  {{Lanzillotti-Kimura}}, \citenamefont {Lema{\'i}tre}, \citenamefont
  {Auffeves}, \citenamefont {White}, \citenamefont {Lanco},\ and\ \citenamefont
  {Senellart}}]{somaschi_nearoptimal_2016}%
  \BibitemOpen
  \bibfield  {author} {\bibinfo {author} {\bibfnamefont {N.}~\bibnamefont
  {Somaschi}}, \bibinfo {author} {\bibfnamefont {V.}~\bibnamefont {Giesz}},
  \bibinfo {author} {\bibfnamefont {L.}~\bibnamefont {De~Santis}}, \bibinfo
  {author} {\bibfnamefont {J.~C.}\ \bibnamefont {Loredo}}, \bibinfo {author}
  {\bibfnamefont {M.~P.}\ \bibnamefont {Almeida}}, \bibinfo {author}
  {\bibfnamefont {G.}~\bibnamefont {Hornecker}}, \bibinfo {author}
  {\bibfnamefont {S.~L.}\ \bibnamefont {Portalupi}}, \bibinfo {author}
  {\bibfnamefont {T.}~\bibnamefont {Grange}}, \bibinfo {author} {\bibfnamefont
  {C.}~\bibnamefont {Ant{\'o}n}}, \bibinfo {author} {\bibfnamefont
  {J.}~\bibnamefont {Demory}}, \bibinfo {author} {\bibfnamefont
  {C.}~\bibnamefont {G{\'o}mez}}, \bibinfo {author} {\bibfnamefont
  {I.}~\bibnamefont {Sagnes}}, \bibinfo {author} {\bibfnamefont {N.~D.}\
  \bibnamefont {{Lanzillotti-Kimura}}}, \bibinfo {author} {\bibfnamefont
  {A.}~\bibnamefont {Lema{\'i}tre}}, \bibinfo {author} {\bibfnamefont
  {A.}~\bibnamefont {Auffeves}}, \bibinfo {author} {\bibfnamefont {A.~G.}\
  \bibnamefont {White}}, \bibinfo {author} {\bibfnamefont {L.}~\bibnamefont
  {Lanco}},\ and\ \bibinfo {author} {\bibfnamefont {P.}~\bibnamefont
  {Senellart}},\ }\bibfield  {title} {\bibinfo {title} {Near-optimal
  single-photon sources in the solid state},\ }\href
  {https://doi.org/10.1038/nphoton.2016.23} {\bibfield  {journal} {\bibinfo
  {journal} {Nature Photonics}\ }\textbf {\bibinfo {volume} {10}},\ \bibinfo
  {pages} {340} (\bibinfo {year} {2016})}\BibitemShut {NoStop}%
\bibitem [{\citenamefont {Hanschke}\ \emph {et~al.}(2018)\citenamefont
  {Hanschke}, \citenamefont {Fischer}, \citenamefont {Appel}, \citenamefont
  {Lukin}, \citenamefont {Wierzbowski}, \citenamefont {Sun}, \citenamefont
  {Trivedi}, \citenamefont {Vu{\v c}kovi{\'c}}, \citenamefont {Finley},\ and\
  \citenamefont {M{\"u}ller}}]{hanschke_quantum_2018}%
  \BibitemOpen
  \bibfield  {author} {\bibinfo {author} {\bibfnamefont {L.}~\bibnamefont
  {Hanschke}}, \bibinfo {author} {\bibfnamefont {K.~A.}\ \bibnamefont
  {Fischer}}, \bibinfo {author} {\bibfnamefont {S.}~\bibnamefont {Appel}},
  \bibinfo {author} {\bibfnamefont {D.}~\bibnamefont {Lukin}}, \bibinfo
  {author} {\bibfnamefont {J.}~\bibnamefont {Wierzbowski}}, \bibinfo {author}
  {\bibfnamefont {S.}~\bibnamefont {Sun}}, \bibinfo {author} {\bibfnamefont
  {R.}~\bibnamefont {Trivedi}}, \bibinfo {author} {\bibfnamefont
  {J.}~\bibnamefont {Vu{\v c}kovi{\'c}}}, \bibinfo {author} {\bibfnamefont
  {J.~J.}\ \bibnamefont {Finley}},\ and\ \bibinfo {author} {\bibfnamefont
  {K.}~\bibnamefont {M{\"u}ller}},\ }\bibfield  {title} {\bibinfo {title}
  {Quantum dot single-photon sources with ultra-low multi-photon probability},\
  }\href {https://doi.org/10.1038/s41534-018-0092-0} {\bibfield  {journal}
  {\bibinfo  {journal} {npj Quantum Information}\ }\textbf {\bibinfo {volume}
  {4}},\ \bibinfo {pages} {43} (\bibinfo {year} {2018})}\BibitemShut {NoStop}%
\bibitem [{\citenamefont {Mills}\ and\ \citenamefont
  {Mezher}(2024)}]{mills_mitigating_2024}%
  \BibitemOpen
  \bibfield  {author} {\bibinfo {author} {\bibfnamefont {J.}~\bibnamefont
  {Mills}}\ and\ \bibinfo {author} {\bibfnamefont {R.}~\bibnamefont {Mezher}},\
  }\href@noop {} {\bibinfo {title} {Mitigating photon loss in linear optical
  quantum circuits: Classical postprocessing methods outperforming
  postselection}} (\bibinfo {year} {2024}),\ \Eprint
  {https://arxiv.org/abs/2405.02278} {arXiv:2405.02278 [quant-ph]} \BibitemShut
  {NoStop}%
\bibitem [{\citenamefont {Tichy}\ \emph {et~al.}(2011)\citenamefont {Tichy},
  \citenamefont {Lim}, \citenamefont {Ra}, \citenamefont {Mintert},
  \citenamefont {Kim},\ and\ \citenamefont
  {Buchleitner}}]{tichy_fourphoton_2011}%
  \BibitemOpen
  \bibfield  {author} {\bibinfo {author} {\bibfnamefont {M.~C.}\ \bibnamefont
  {Tichy}}, \bibinfo {author} {\bibfnamefont {H.-T.}\ \bibnamefont {Lim}},
  \bibinfo {author} {\bibfnamefont {Y.-S.}\ \bibnamefont {Ra}}, \bibinfo
  {author} {\bibfnamefont {F.}~\bibnamefont {Mintert}}, \bibinfo {author}
  {\bibfnamefont {Y.-H.}\ \bibnamefont {Kim}},\ and\ \bibinfo {author}
  {\bibfnamefont {A.}~\bibnamefont {Buchleitner}},\ }\bibfield  {title}
  {\bibinfo {title} {Four-{{Photon}} ({{In}}){{Distinguishability
  Transition}}},\ }\href {https://doi.org/10.1103/PhysRevA.83.062111}
  {\bibfield  {journal} {\bibinfo  {journal} {Physical Review A}\ }\textbf
  {\bibinfo {volume} {83}},\ \bibinfo {pages} {062111} (\bibinfo {year}
  {2011})},\ \Eprint {https://arxiv.org/abs/1009.4998} {arXiv:1009.4998
  [quant-ph]} \BibitemShut {NoStop}%
\bibitem [{\citenamefont {Menssen}\ \emph {et~al.}(2017)\citenamefont
  {Menssen}, \citenamefont {Jones}, \citenamefont {Metcalf}, \citenamefont
  {Tichy}, \citenamefont {Barz}, \citenamefont {Kolthammer},\ and\
  \citenamefont {Walmsley}}]{menssen_distinguishability_2017}%
  \BibitemOpen
  \bibfield  {author} {\bibinfo {author} {\bibfnamefont {A.~J.}\ \bibnamefont
  {Menssen}}, \bibinfo {author} {\bibfnamefont {A.~E.}\ \bibnamefont {Jones}},
  \bibinfo {author} {\bibfnamefont {B.~J.}\ \bibnamefont {Metcalf}}, \bibinfo
  {author} {\bibfnamefont {M.~C.}\ \bibnamefont {Tichy}}, \bibinfo {author}
  {\bibfnamefont {S.}~\bibnamefont {Barz}}, \bibinfo {author} {\bibfnamefont
  {W.~S.}\ \bibnamefont {Kolthammer}},\ and\ \bibinfo {author} {\bibfnamefont
  {I.~A.}\ \bibnamefont {Walmsley}},\ }\bibfield  {title} {\bibinfo {title}
  {Distinguishability and many-particle interference},\ }\href
  {https://doi.org/10.1103/PhysRevLett.118.153603} {\bibfield  {journal}
  {\bibinfo  {journal} {Physical Review Letters}\ }\textbf {\bibinfo {volume}
  {118}},\ \bibinfo {pages} {153603} (\bibinfo {year} {2017})},\ \Eprint
  {https://arxiv.org/abs/1609.09804} {arXiv:1609.09804 [quant-ph]} \BibitemShut
  {NoStop}%
\bibitem [{\citenamefont {Shchesnovich}(2015)}]{shchesnovich_partial_2015}%
  \BibitemOpen
  \bibfield  {author} {\bibinfo {author} {\bibfnamefont {V.~S.}\ \bibnamefont
  {Shchesnovich}},\ }\bibfield  {title} {\bibinfo {title} {Partial
  indistinguishability theory for multi-photon experiments in multiport
  devices},\ }\href {https://doi.org/10.1103/PhysRevA.91.013844} {\bibfield
  {journal} {\bibinfo  {journal} {Physical Review A}\ }\textbf {\bibinfo
  {volume} {91}},\ \bibinfo {pages} {013844} (\bibinfo {year} {2015})},\
  \Eprint {https://arxiv.org/abs/1410.1506} {arXiv:1410.1506 [cond-mat,
  physics:physics, physics:quant-ph]} \BibitemShut {NoStop}%
\bibitem [{\citenamefont {Tichy}(2015)}]{tichy_sampling_2015}%
  \BibitemOpen
  \bibfield  {author} {\bibinfo {author} {\bibfnamefont {M.~C.}\ \bibnamefont
  {Tichy}},\ }\bibfield  {title} {\bibinfo {title} {Sampling of partially
  distinguishable bosons and the relation to the multidimensional permanent},\
  }\href {https://doi.org/10.1103/PhysRevA.91.022316} {\bibfield  {journal}
  {\bibinfo  {journal} {Physical Review A}\ }\textbf {\bibinfo {volume} {91}},\
  \bibinfo {pages} {022316} (\bibinfo {year} {2015})},\ \Eprint
  {https://arxiv.org/abs/1410.7687} {arXiv:1410.7687 [physics,
  physics:quant-ph]} \BibitemShut {NoStop}%
\bibitem [{\citenamefont {Spivak}\ \emph {et~al.}(2022)\citenamefont {Spivak},
  \citenamefont {Niu}, \citenamefont {Sanders},\ and\ \citenamefont
  {De~Guise}}]{spivak_generalized_2022}%
  \BibitemOpen
  \bibfield  {author} {\bibinfo {author} {\bibfnamefont {D.}~\bibnamefont
  {Spivak}}, \bibinfo {author} {\bibfnamefont {M.~Y.}\ \bibnamefont {Niu}},
  \bibinfo {author} {\bibfnamefont {B.~C.}\ \bibnamefont {Sanders}},\ and\
  \bibinfo {author} {\bibfnamefont {H.}~\bibnamefont {De~Guise}},\ }\bibfield
  {title} {\bibinfo {title} {Generalized interference of fermions and bosons},\
  }\href {https://doi.org/10.1103/PhysRevResearch.4.023013} {\bibfield
  {journal} {\bibinfo  {journal} {Physical Review Research}\ }\textbf {\bibinfo
  {volume} {4}},\ \bibinfo {pages} {023013} (\bibinfo {year}
  {2022})}\BibitemShut {NoStop}%
\bibitem [{\citenamefont {Shchesnovich}\ and\ \citenamefont
  {Bezerra}(2018)}]{shchesnovich_collective_2018}%
  \BibitemOpen
  \bibfield  {author} {\bibinfo {author} {\bibfnamefont {V.~S.}\ \bibnamefont
  {Shchesnovich}}\ and\ \bibinfo {author} {\bibfnamefont {M.~E.~O.}\
  \bibnamefont {Bezerra}},\ }\bibfield  {title} {\bibinfo {title} {Collective
  phases of identical particles interfering on linear multiports},\ }\href
  {https://doi.org/10.1103/PhysRevA.98.033805} {\bibfield  {journal} {\bibinfo
  {journal} {Physical Review A}\ }\textbf {\bibinfo {volume} {98}},\ \bibinfo
  {pages} {033805} (\bibinfo {year} {2018})},\ \Eprint
  {https://arxiv.org/abs/1707.03893} {arXiv:1707.03893 [cond-mat,
  physics:quant-ph]} \BibitemShut {NoStop}%
\bibitem [{\citenamefont {Azuma}\ \emph {et~al.}(2015)\citenamefont {Azuma},
  \citenamefont {Tamaki},\ and\ \citenamefont {Lo}}]{azuma_allphotonic_2015}%
  \BibitemOpen
  \bibfield  {author} {\bibinfo {author} {\bibfnamefont {K.}~\bibnamefont
  {Azuma}}, \bibinfo {author} {\bibfnamefont {K.}~\bibnamefont {Tamaki}},\ and\
  \bibinfo {author} {\bibfnamefont {H.-K.}\ \bibnamefont {Lo}},\ }\bibfield
  {title} {\bibinfo {title} {All-photonic quantum repeaters},\ }\href
  {https://doi.org/10.1038/ncomms7787} {\bibfield  {journal} {\bibinfo
  {journal} {Nature Communications}\ }\textbf {\bibinfo {volume} {6}},\
  \bibinfo {pages} {6787} (\bibinfo {year} {2015})}\BibitemShut {NoStop}%
\bibitem [{\citenamefont {Bartolucci}\ \emph {et~al.}(2023)\citenamefont
  {Bartolucci}, \citenamefont {Birchall}, \citenamefont {Bomb{\'i}n},
  \citenamefont {Cable}, \citenamefont {Dawson}, \citenamefont
  {{Gimeno-Segovia}}, \citenamefont {Johnston}, \citenamefont {Kieling},
  \citenamefont {Nickerson}, \citenamefont {Pant}, \citenamefont {Pastawski},
  \citenamefont {Rudolph},\ and\ \citenamefont
  {Sparrow}}]{bartolucci_fusionbased_2023}%
  \BibitemOpen
  \bibfield  {author} {\bibinfo {author} {\bibfnamefont {S.}~\bibnamefont
  {Bartolucci}}, \bibinfo {author} {\bibfnamefont {P.}~\bibnamefont
  {Birchall}}, \bibinfo {author} {\bibfnamefont {H.}~\bibnamefont
  {Bomb{\'i}n}}, \bibinfo {author} {\bibfnamefont {H.}~\bibnamefont {Cable}},
  \bibinfo {author} {\bibfnamefont {C.}~\bibnamefont {Dawson}}, \bibinfo
  {author} {\bibfnamefont {M.}~\bibnamefont {{Gimeno-Segovia}}}, \bibinfo
  {author} {\bibfnamefont {E.}~\bibnamefont {Johnston}}, \bibinfo {author}
  {\bibfnamefont {K.}~\bibnamefont {Kieling}}, \bibinfo {author} {\bibfnamefont
  {N.}~\bibnamefont {Nickerson}}, \bibinfo {author} {\bibfnamefont
  {M.}~\bibnamefont {Pant}}, \bibinfo {author} {\bibfnamefont {F.}~\bibnamefont
  {Pastawski}}, \bibinfo {author} {\bibfnamefont {T.}~\bibnamefont {Rudolph}},\
  and\ \bibinfo {author} {\bibfnamefont {C.}~\bibnamefont {Sparrow}},\
  }\bibfield  {title} {\bibinfo {title} {Fusion-based quantum computation},\
  }\href {https://doi.org/10.1038/s41467-023-36493-1} {\bibfield  {journal}
  {\bibinfo  {journal} {Nature Communications}\ }\textbf {\bibinfo {volume}
  {14}},\ \bibinfo {pages} {912} (\bibinfo {year} {2023})}\BibitemShut
  {NoStop}%
\bibitem [{\citenamefont {Cai}\ \emph {et~al.}(2023{\natexlab{a}})\citenamefont
  {Cai}, \citenamefont {Babbush}, \citenamefont {Benjamin}, \citenamefont
  {Endo}, \citenamefont {Huggins}, \citenamefont {Li}, \citenamefont
  {McClean},\ and\ \citenamefont {O'Brien}}]{cai_quantum_2023a}%
  \BibitemOpen
  \bibfield  {author} {\bibinfo {author} {\bibfnamefont {Z.}~\bibnamefont
  {Cai}}, \bibinfo {author} {\bibfnamefont {R.}~\bibnamefont {Babbush}},
  \bibinfo {author} {\bibfnamefont {S.~C.}\ \bibnamefont {Benjamin}}, \bibinfo
  {author} {\bibfnamefont {S.}~\bibnamefont {Endo}}, \bibinfo {author}
  {\bibfnamefont {W.~J.}\ \bibnamefont {Huggins}}, \bibinfo {author}
  {\bibfnamefont {Y.}~\bibnamefont {Li}}, \bibinfo {author} {\bibfnamefont
  {J.~R.}\ \bibnamefont {McClean}},\ and\ \bibinfo {author} {\bibfnamefont
  {T.~E.}\ \bibnamefont {O'Brien}},\ }\bibfield  {title} {\bibinfo {title}
  {Quantum error mitigation},\ }\href
  {https://doi.org/10.1103/RevModPhys.95.045005} {\bibfield  {journal}
  {\bibinfo  {journal} {Reviews of Modern Physics}\ }\textbf {\bibinfo {volume}
  {95}},\ \bibinfo {pages} {045005} (\bibinfo {year}
  {2023}{\natexlab{a}})}\BibitemShut {NoStop}%
\bibitem [{\citenamefont {Greenbaum}\ and\ \citenamefont
  {Dutton}(2018)}]{greenbaum_modeling_2018}%
  \BibitemOpen
  \bibfield  {author} {\bibinfo {author} {\bibfnamefont {D.}~\bibnamefont
  {Greenbaum}}\ and\ \bibinfo {author} {\bibfnamefont {Z.}~\bibnamefont
  {Dutton}},\ }\bibfield  {title} {\bibinfo {title} {Modeling coherent errors
  in quantum error correction},\ }\href
  {https://doi.org/10.1088/2058-9565/aa9a06} {\bibfield  {journal} {\bibinfo
  {journal} {Quantum Science and Technology}\ }\textbf {\bibinfo {volume}
  {3}},\ \bibinfo {pages} {015007} (\bibinfo {year} {2018})}\BibitemShut
  {NoStop}%
\bibitem [{\citenamefont {Gidney}(2021)}]{gidney_stim_2021}%
  \BibitemOpen
  \bibfield  {author} {\bibinfo {author} {\bibfnamefont {C.}~\bibnamefont
  {Gidney}},\ }\bibfield  {title} {\bibinfo {title} {Stim: A fast stabilizer
  circuit simulator},\ }\href {https://doi.org/10.22331/q-2021-07-06-497}
  {\bibfield  {journal} {\bibinfo  {journal} {Quantum}\ }\textbf {\bibinfo
  {volume} {5}},\ \bibinfo {pages} {497} (\bibinfo {year} {2021})}\BibitemShut
  {NoStop}%
\bibitem [{\citenamefont {Jain}\ \emph {et~al.}(2023)\citenamefont {Jain},
  \citenamefont {Iyer}, \citenamefont {Bartlett},\ and\ \citenamefont
  {Emerson}}]{jain_improved_2023}%
  \BibitemOpen
  \bibfield  {author} {\bibinfo {author} {\bibfnamefont {A.}~\bibnamefont
  {Jain}}, \bibinfo {author} {\bibfnamefont {P.}~\bibnamefont {Iyer}}, \bibinfo
  {author} {\bibfnamefont {S.~D.}\ \bibnamefont {Bartlett}},\ and\ \bibinfo
  {author} {\bibfnamefont {J.}~\bibnamefont {Emerson}},\ }\bibfield  {title}
  {\bibinfo {title} {Improved quantum error correction with randomized
  compiling},\ }\href {https://doi.org/10.1103/PhysRevResearch.5.033049}
  {\bibfield  {journal} {\bibinfo  {journal} {Physical Review Research}\
  }\textbf {\bibinfo {volume} {5}},\ \bibinfo {pages} {033049} (\bibinfo {year}
  {2023})}\BibitemShut {NoStop}%
\bibitem [{\citenamefont {Tillmann}\ \emph {et~al.}(2015)\citenamefont
  {Tillmann}, \citenamefont {Tan}, \citenamefont {Stoeckl}, \citenamefont
  {Sanders}, \citenamefont {De~Guise}, \citenamefont {Heilmann}, \citenamefont
  {Nolte}, \citenamefont {Szameit},\ and\ \citenamefont
  {Walther}}]{tillmann_generalized_2015}%
  \BibitemOpen
  \bibfield  {author} {\bibinfo {author} {\bibfnamefont {M.}~\bibnamefont
  {Tillmann}}, \bibinfo {author} {\bibfnamefont {S.-H.}\ \bibnamefont {Tan}},
  \bibinfo {author} {\bibfnamefont {S.~E.}\ \bibnamefont {Stoeckl}}, \bibinfo
  {author} {\bibfnamefont {B.~C.}\ \bibnamefont {Sanders}}, \bibinfo {author}
  {\bibfnamefont {H.}~\bibnamefont {De~Guise}}, \bibinfo {author}
  {\bibfnamefont {R.}~\bibnamefont {Heilmann}}, \bibinfo {author}
  {\bibfnamefont {S.}~\bibnamefont {Nolte}}, \bibinfo {author} {\bibfnamefont
  {A.}~\bibnamefont {Szameit}},\ and\ \bibinfo {author} {\bibfnamefont
  {P.}~\bibnamefont {Walther}},\ }\bibfield  {title} {\bibinfo {title}
  {Generalized {{Multiphoton Quantum Interference}}},\ }\href
  {https://doi.org/10.1103/PhysRevX.5.041015} {\bibfield  {journal} {\bibinfo
  {journal} {Physical Review X}\ }\textbf {\bibinfo {volume} {5}},\ \bibinfo
  {pages} {041015} (\bibinfo {year} {2015})}\BibitemShut {NoStop}%
\bibitem [{\citenamefont {Pont}\ \emph {et~al.}(2022)\citenamefont {Pont},
  \citenamefont {Albiero}, \citenamefont {Thomas}, \citenamefont {Spagnolo},
  \citenamefont {Ceccarelli}, \citenamefont {Corrielli}, \citenamefont
  {Brieussel}, \citenamefont {Somaschi}, \citenamefont {Huet}, \citenamefont
  {Harouri}, \citenamefont {Lema{\^i}tre}, \citenamefont {Sagnes},
  \citenamefont {Belabas}, \citenamefont {Sciarrino}, \citenamefont {Osellame},
  \citenamefont {Senellart},\ and\ \citenamefont
  {Crespi}}]{pont_quantifying_2022}%
  \BibitemOpen
  \bibfield  {author} {\bibinfo {author} {\bibfnamefont {M.}~\bibnamefont
  {Pont}}, \bibinfo {author} {\bibfnamefont {R.}~\bibnamefont {Albiero}},
  \bibinfo {author} {\bibfnamefont {S.~E.}\ \bibnamefont {Thomas}}, \bibinfo
  {author} {\bibfnamefont {N.}~\bibnamefont {Spagnolo}}, \bibinfo {author}
  {\bibfnamefont {F.}~\bibnamefont {Ceccarelli}}, \bibinfo {author}
  {\bibfnamefont {G.}~\bibnamefont {Corrielli}}, \bibinfo {author}
  {\bibfnamefont {A.}~\bibnamefont {Brieussel}}, \bibinfo {author}
  {\bibfnamefont {N.}~\bibnamefont {Somaschi}}, \bibinfo {author}
  {\bibfnamefont {H.}~\bibnamefont {Huet}}, \bibinfo {author} {\bibfnamefont
  {A.}~\bibnamefont {Harouri}}, \bibinfo {author} {\bibfnamefont
  {A.}~\bibnamefont {Lema{\^i}tre}}, \bibinfo {author} {\bibfnamefont
  {I.}~\bibnamefont {Sagnes}}, \bibinfo {author} {\bibfnamefont
  {N.}~\bibnamefont {Belabas}}, \bibinfo {author} {\bibfnamefont
  {F.}~\bibnamefont {Sciarrino}}, \bibinfo {author} {\bibfnamefont
  {R.}~\bibnamefont {Osellame}}, \bibinfo {author} {\bibfnamefont
  {P.}~\bibnamefont {Senellart}},\ and\ \bibinfo {author} {\bibfnamefont
  {A.}~\bibnamefont {Crespi}},\ }\bibfield  {title} {\bibinfo {title}
  {Quantifying n-photon indistinguishability with a cyclic integrated
  interferometer},\ }\href {https://doi.org/10.1103/PhysRevX.12.031033}
  {\bibfield  {journal} {\bibinfo  {journal} {Physical Review X}\ }\textbf
  {\bibinfo {volume} {12}},\ \bibinfo {pages} {031033} (\bibinfo {year}
  {2022})},\ \Eprint {https://arxiv.org/abs/2201.13333} {arXiv:2201.13333
  [physics, physics:quant-ph]} \BibitemShut {NoStop}%
\bibitem [{\citenamefont {Sperling}\ and\ \citenamefont
  {Walmsley}(2018)}]{sperling_quasiprobability_2018}%
  \BibitemOpen
  \bibfield  {author} {\bibinfo {author} {\bibfnamefont {J.}~\bibnamefont
  {Sperling}}\ and\ \bibinfo {author} {\bibfnamefont {I.~A.}\ \bibnamefont
  {Walmsley}},\ }\bibfield  {title} {\bibinfo {title} {Quasiprobability
  representation of quantum coherence},\ }\href
  {https://doi.org/10.1103/PhysRevA.97.062327} {\bibfield  {journal} {\bibinfo
  {journal} {Physical Review A}\ }\textbf {\bibinfo {volume} {97}},\ \bibinfo
  {pages} {062327} (\bibinfo {year} {2018})},\ \Eprint
  {https://arxiv.org/abs/1803.04747} {arXiv:1803.04747 [quant-ph]} \BibitemShut
  {NoStop}%
\bibitem [{\citenamefont {Ryser}(1963)}]{ryser_combinatorial_1963}%
  \BibitemOpen
  \bibfield  {author} {\bibinfo {author} {\bibfnamefont {H.~J.}\ \bibnamefont
  {Ryser}},\ }\href@noop {} {\emph {\bibinfo {title} {Combinatorial Mathematics
  /by {{Herbert John Ryser}}}}},\ \bibinfo {series} {Carus Mathematical
  Monographs}\ No.~\bibinfo {number} {14}\ (\bibinfo  {publisher} {Mathematical
  Ass. of America},\ \bibinfo {address} {New York},\ \bibinfo {year}
  {1963})\BibitemShut {NoStop}%
\bibitem [{\citenamefont {Rota}(1964{\natexlab{a}})}]{rota_foundations_1964}%
  \BibitemOpen
  \bibfield  {author} {\bibinfo {author} {\bibfnamefont {G.~C.}\ \bibnamefont
  {Rota}},\ }\bibfield  {title} {\bibinfo {title} {On the foundations of
  combinatorial theory {{I}}. {{Theory}} of {{M{\"o}bius Functions}}},\ }\href
  {https://doi.org/10.1007/BF00531932} {\bibfield  {journal} {\bibinfo
  {journal} {Zeitschrift f{\"u}r Wahrscheinlichkeitstheorie und Verwandte
  Gebiete}\ }\textbf {\bibinfo {volume} {2}},\ \bibinfo {pages} {340} (\bibinfo
  {year} {1964}{\natexlab{a}})}\BibitemShut {NoStop}%
\bibitem [{\citenamefont {Minke}\ \emph {et~al.}(2021)\citenamefont {Minke},
  \citenamefont {Buchleitner},\ and\ \citenamefont
  {Dittel}}]{minke_characterizing_2021}%
  \BibitemOpen
  \bibfield  {author} {\bibinfo {author} {\bibfnamefont {A.~M.}\ \bibnamefont
  {Minke}}, \bibinfo {author} {\bibfnamefont {A.}~\bibnamefont {Buchleitner}},\
  and\ \bibinfo {author} {\bibfnamefont {C.}~\bibnamefont {Dittel}},\
  }\bibfield  {title} {\bibinfo {title} {Characterizing four-body
  indistinguishability via symmetries},\ }\href
  {https://doi.org/10.1088/1367-2630/ac0fb1} {\bibfield  {journal} {\bibinfo
  {journal} {New Journal of Physics}\ }\textbf {\bibinfo {volume} {23}},\
  \bibinfo {pages} {073028} (\bibinfo {year} {2021})}\BibitemShut {NoStop}%
\bibitem [{\citenamefont {Cai}\ \emph {et~al.}(2023{\natexlab{b}})\citenamefont
  {Cai}, \citenamefont {Babbush}, \citenamefont {Benjamin}, \citenamefont
  {Endo}, \citenamefont {Huggins}, \citenamefont {Li}, \citenamefont
  {McClean},\ and\ \citenamefont {O'Brien}}]{cai_quantum_2023}%
  \BibitemOpen
  \bibfield  {author} {\bibinfo {author} {\bibfnamefont {Z.}~\bibnamefont
  {Cai}}, \bibinfo {author} {\bibfnamefont {R.}~\bibnamefont {Babbush}},
  \bibinfo {author} {\bibfnamefont {S.~C.}\ \bibnamefont {Benjamin}}, \bibinfo
  {author} {\bibfnamefont {S.}~\bibnamefont {Endo}}, \bibinfo {author}
  {\bibfnamefont {W.~J.}\ \bibnamefont {Huggins}}, \bibinfo {author}
  {\bibfnamefont {Y.}~\bibnamefont {Li}}, \bibinfo {author} {\bibfnamefont
  {J.~R.}\ \bibnamefont {McClean}},\ and\ \bibinfo {author} {\bibfnamefont
  {T.~E.}\ \bibnamefont {O'Brien}},\ }\bibfield  {title} {\bibinfo {title}
  {Quantum {{Error Mitigation}}},\ }\href
  {https://doi.org/10.1103/RevModPhys.95.045005} {\bibfield  {journal}
  {\bibinfo  {journal} {Reviews of Modern Physics}\ }\textbf {\bibinfo {volume}
  {95}},\ \bibinfo {pages} {045005} (\bibinfo {year} {2023}{\natexlab{b}})},\
  \Eprint {https://arxiv.org/abs/2210.00921} {arXiv:2210.00921 [quant-ph]}
  \BibitemShut {NoStop}%
\bibitem [{\citenamefont {Wallman}\ and\ \citenamefont
  {Emerson}(2016)}]{wallman_noise_2016}%
  \BibitemOpen
  \bibfield  {author} {\bibinfo {author} {\bibfnamefont {J.~J.}\ \bibnamefont
  {Wallman}}\ and\ \bibinfo {author} {\bibfnamefont {J.}~\bibnamefont
  {Emerson}},\ }\bibfield  {title} {\bibinfo {title} {Noise tailoring for
  scalable quantum computation via randomized compiling},\ }\href
  {https://doi.org/10.1103/PhysRevA.94.052325} {\bibfield  {journal} {\bibinfo
  {journal} {Physical Review A}\ }\textbf {\bibinfo {volume} {94}},\ \bibinfo
  {pages} {052325} (\bibinfo {year} {2016})},\ \Eprint
  {https://arxiv.org/abs/1512.01098} {arXiv:1512.01098 [quant-ph]} \BibitemShut
  {NoStop}%
\bibitem [{\citenamefont {Taballione}\ \emph {et~al.}(2023)\citenamefont
  {Taballione}, \citenamefont {Anguita}, \citenamefont {de~Goede},
  \citenamefont {Venderbosch}, \citenamefont {Kassenberg}, \citenamefont
  {Snijders}, \citenamefont {Kannan}, \citenamefont {Vleeshouwers},
  \citenamefont {Smith}, \citenamefont {Epping}, \citenamefont {van~der Meer},
  \citenamefont {Pinkse}, \citenamefont {van~den Vlekkert},\ and\ \citenamefont
  {Renema}}]{taballione_20mode_2023}%
  \BibitemOpen
  \bibfield  {author} {\bibinfo {author} {\bibfnamefont {C.}~\bibnamefont
  {Taballione}}, \bibinfo {author} {\bibfnamefont {M.~C.}\ \bibnamefont
  {Anguita}}, \bibinfo {author} {\bibfnamefont {M.}~\bibnamefont {de~Goede}},
  \bibinfo {author} {\bibfnamefont {P.}~\bibnamefont {Venderbosch}}, \bibinfo
  {author} {\bibfnamefont {B.}~\bibnamefont {Kassenberg}}, \bibinfo {author}
  {\bibfnamefont {H.}~\bibnamefont {Snijders}}, \bibinfo {author}
  {\bibfnamefont {N.}~\bibnamefont {Kannan}}, \bibinfo {author} {\bibfnamefont
  {W.~L.}\ \bibnamefont {Vleeshouwers}}, \bibinfo {author} {\bibfnamefont
  {D.}~\bibnamefont {Smith}}, \bibinfo {author} {\bibfnamefont {J.~P.}\
  \bibnamefont {Epping}}, \bibinfo {author} {\bibfnamefont {R.}~\bibnamefont
  {van~der Meer}}, \bibinfo {author} {\bibfnamefont {P.~W.~H.}\ \bibnamefont
  {Pinkse}}, \bibinfo {author} {\bibfnamefont {H.}~\bibnamefont {van~den
  Vlekkert}},\ and\ \bibinfo {author} {\bibfnamefont {J.~J.}\ \bibnamefont
  {Renema}},\ }\bibfield  {title} {\bibinfo {title} {20-{{Mode Universal
  Quantum Photonic Processor}}},\ }\href
  {https://doi.org/10.22331/q-2023-08-01-1071} {\bibfield  {journal} {\bibinfo
  {journal} {Quantum}\ }\textbf {\bibinfo {volume} {7}},\ \bibinfo {pages}
  {1071} (\bibinfo {year} {2023})}\BibitemShut {NoStop}%
\bibitem [{\citenamefont {van~den Hoven}\ \emph {et~al.}(2024)\citenamefont
  {van~den Hoven}, \citenamefont {Kanis},\ and\ \citenamefont
  {Renema}}]{hoven_efficient_2024}%
  \BibitemOpen
  \bibfield  {author} {\bibinfo {author} {\bibfnamefont {S.~N.}\ \bibnamefont
  {van~den Hoven}}, \bibinfo {author} {\bibfnamefont {E.}~\bibnamefont
  {Kanis}},\ and\ \bibinfo {author} {\bibfnamefont {J.~J.}\ \bibnamefont
  {Renema}},\ }\href@noop {} {\bibinfo {title} {Efficient classical algorithm
  for simulating boson sampling with inhomogeneous partial distinguishability}}
  (\bibinfo {year} {2024}),\ \Eprint {https://arxiv.org/abs/2406.17682}
  {arXiv:2406.17682 [physics, physics:quant-ph]} \BibitemShut {NoStop}%
\bibitem [{\citenamefont {Clifford}\ and\ \citenamefont
  {Clifford}(2018)}]{clifford_classical_2018}%
  \BibitemOpen
  \bibfield  {author} {\bibinfo {author} {\bibfnamefont {P.}~\bibnamefont
  {Clifford}}\ and\ \bibinfo {author} {\bibfnamefont {R.}~\bibnamefont
  {Clifford}},\ }\bibfield  {title} {\bibinfo {title} {The {{Classical
  Complexity}} of {{Boson Sampling}}},\ }in\ \href
  {https://doi.org/10.1137/1.9781611975031.10} {\emph {\bibinfo {booktitle}
  {Proceedings of the 2018 {{Annual ACM-SIAM Symposium}} on {{Discrete
  Algorithms}} ({{SODA}})}}},\ \bibinfo {series and number} {Proceedings}\
  (\bibinfo  {publisher} {{Society for Industrial and Applied Mathematics}},\
  \bibinfo {year} {2018})\ pp.\ \bibinfo {pages} {146--155}\BibitemShut
  {NoStop}%
\bibitem [{\citenamefont {Clifford}\ and\ \citenamefont
  {Clifford}(2024)}]{clifford_faster_2024}%
  \BibitemOpen
  \bibfield  {author} {\bibinfo {author} {\bibfnamefont {P.}~\bibnamefont
  {Clifford}}\ and\ \bibinfo {author} {\bibfnamefont {R.}~\bibnamefont
  {Clifford}},\ }\bibfield  {title} {\bibinfo {title} {Faster classical boson
  sampling},\ }\href {https://doi.org/10.1088/1402-4896/ad4688} {\bibfield
  {journal} {\bibinfo  {journal} {Physica Scripta}\ }\textbf {\bibinfo {volume}
  {99}},\ \bibinfo {pages} {065121} (\bibinfo {year} {2024})}\BibitemShut
  {NoStop}%
\bibitem [{\citenamefont {Sparrow}(2017)}]{sparrow_quantum_2017}%
  \BibitemOpen
  \bibfield  {author} {\bibinfo {author} {\bibfnamefont {C.}~\bibnamefont
  {Sparrow}},\ }\emph {\bibinfo {title} {Quantum Interference in Universal
  Linear Optical Devices for Quantum Computation and Simulation}},\ \href
  {https://doi.org/10.25560/67638} {Ph.D. thesis},\ \bibinfo  {school}
  {Imperial College London} (\bibinfo {year} {2017})\BibitemShut {NoStop}%
\bibitem [{\citenamefont {Brod}\ \emph {et~al.}(2019)\citenamefont {Brod},
  \citenamefont {Galv{\~a}o}, \citenamefont {Viggianiello}, \citenamefont
  {Flamini}, \citenamefont {Spagnolo},\ and\ \citenamefont
  {Sciarrino}}]{brod_witnessing_2019}%
  \BibitemOpen
  \bibfield  {author} {\bibinfo {author} {\bibfnamefont {D.~J.}\ \bibnamefont
  {Brod}}, \bibinfo {author} {\bibfnamefont {E.~F.}\ \bibnamefont
  {Galv{\~a}o}}, \bibinfo {author} {\bibfnamefont {N.}~\bibnamefont
  {Viggianiello}}, \bibinfo {author} {\bibfnamefont {F.}~\bibnamefont
  {Flamini}}, \bibinfo {author} {\bibfnamefont {N.}~\bibnamefont {Spagnolo}},\
  and\ \bibinfo {author} {\bibfnamefont {F.}~\bibnamefont {Sciarrino}},\
  }\bibfield  {title} {\bibinfo {title} {Witnessing genuine multi-photon
  indistinguishability},\ }\href
  {https://doi.org/10.1103/PhysRevLett.122.063602} {\bibfield  {journal}
  {\bibinfo  {journal} {Physical Review Letters}\ }\textbf {\bibinfo {volume}
  {122}},\ \bibinfo {pages} {063602} (\bibinfo {year} {2019})},\ \Eprint
  {https://arxiv.org/abs/1804.01334} {arXiv:1804.01334 [quant-ph]} \BibitemShut
  {NoStop}%
\bibitem [{\citenamefont {Giordani}\ \emph {et~al.}(2020)\citenamefont
  {Giordani}, \citenamefont {Brod}, \citenamefont {Esposito}, \citenamefont
  {Viggianiello}, \citenamefont {Romano}, \citenamefont {Flamini},
  \citenamefont {Carvacho}, \citenamefont {Spagnolo}, \citenamefont
  {Galv{\~a}o},\ and\ \citenamefont {Sciarrino}}]{giordani_experimental_2020}%
  \BibitemOpen
  \bibfield  {author} {\bibinfo {author} {\bibfnamefont {T.}~\bibnamefont
  {Giordani}}, \bibinfo {author} {\bibfnamefont {D.~J.}\ \bibnamefont {Brod}},
  \bibinfo {author} {\bibfnamefont {C.}~\bibnamefont {Esposito}}, \bibinfo
  {author} {\bibfnamefont {N.}~\bibnamefont {Viggianiello}}, \bibinfo {author}
  {\bibfnamefont {M.}~\bibnamefont {Romano}}, \bibinfo {author} {\bibfnamefont
  {F.}~\bibnamefont {Flamini}}, \bibinfo {author} {\bibfnamefont
  {G.}~\bibnamefont {Carvacho}}, \bibinfo {author} {\bibfnamefont
  {N.}~\bibnamefont {Spagnolo}}, \bibinfo {author} {\bibfnamefont {E.~F.}\
  \bibnamefont {Galv{\~a}o}},\ and\ \bibinfo {author} {\bibfnamefont
  {F.}~\bibnamefont {Sciarrino}},\ }\bibfield  {title} {\bibinfo {title}
  {Experimental quantification of genuine four-photon indistinguishability},\
  }\href {https://doi.org/10.1088/1367-2630/ab7a30} {\bibfield  {journal}
  {\bibinfo  {journal} {New Journal of Physics}\ }\textbf {\bibinfo {volume}
  {22}},\ \bibinfo {pages} {043001} (\bibinfo {year} {2020})},\ \Eprint
  {https://arxiv.org/abs/1907.01325} {arXiv:1907.01325 [quant-ph]} \BibitemShut
  {NoStop}%
\bibitem [{\citenamefont {Yurke}\ \emph {et~al.}(1986)\citenamefont {Yurke},
  \citenamefont {McCall},\ and\ \citenamefont {Klauder}}]{yurke_su2_1986}%
  \BibitemOpen
  \bibfield  {author} {\bibinfo {author} {\bibfnamefont {B.}~\bibnamefont
  {Yurke}}, \bibinfo {author} {\bibfnamefont {S.~L.}\ \bibnamefont {McCall}},\
  and\ \bibinfo {author} {\bibfnamefont {J.~R.}\ \bibnamefont {Klauder}},\
  }\bibfield  {title} {\bibinfo {title} {{{SU}}(2) and {{SU}}(1,1)
  interferometers},\ }\href {https://doi.org/10.1103/PhysRevA.33.4033}
  {\bibfield  {journal} {\bibinfo  {journal} {Physical Review A}\ }\textbf
  {\bibinfo {volume} {33}},\ \bibinfo {pages} {4033} (\bibinfo {year}
  {1986})}\BibitemShut {NoStop}%
\bibitem [{\citenamefont {Tan}\ \emph {et~al.}(2013)\citenamefont {Tan},
  \citenamefont {Gao}, \citenamefont {{de Guise}},\ and\ \citenamefont
  {Sanders}}]{tan_su3_2013}%
  \BibitemOpen
  \bibfield  {author} {\bibinfo {author} {\bibfnamefont {S.-H.}\ \bibnamefont
  {Tan}}, \bibinfo {author} {\bibfnamefont {Y.~Y.}\ \bibnamefont {Gao}},
  \bibinfo {author} {\bibfnamefont {H.}~\bibnamefont {{de Guise}}},\ and\
  \bibinfo {author} {\bibfnamefont {B.~C.}\ \bibnamefont {Sanders}},\
  }\bibfield  {title} {\bibinfo {title} {{{SU}}(3) {{Quantum Interferometry}}
  with single-photon input pulses},\ }\href
  {https://doi.org/10.1103/PhysRevLett.110.113603} {\bibfield  {journal}
  {\bibinfo  {journal} {Physical Review Letters}\ }\textbf {\bibinfo {volume}
  {110}},\ \bibinfo {pages} {113603} (\bibinfo {year} {2013})},\ \Eprint
  {https://arxiv.org/abs/1208.5677} {arXiv:1208.5677 [math-ph,
  physics:quant-ph]} \BibitemShut {NoStop}%
\bibitem [{\citenamefont {Khalid}\ \emph {et~al.}(2018)\citenamefont {Khalid},
  \citenamefont {Spivak}, \citenamefont {Sanders},\ and\ \citenamefont {{de
  Guise}}}]{khalid_permutational_2018}%
  \BibitemOpen
  \bibfield  {author} {\bibinfo {author} {\bibfnamefont {A.}~\bibnamefont
  {Khalid}}, \bibinfo {author} {\bibfnamefont {D.}~\bibnamefont {Spivak}},
  \bibinfo {author} {\bibfnamefont {B.~C.}\ \bibnamefont {Sanders}},\ and\
  \bibinfo {author} {\bibfnamefont {H.}~\bibnamefont {{de Guise}}},\ }\bibfield
   {title} {\bibinfo {title} {Permutational symmetries for coincidence rates in
  multi-mode multi-photonic interferometry},\ }\href
  {https://doi.org/10.1103/PhysRevA.97.063802} {\bibfield  {journal} {\bibinfo
  {journal} {Physical Review A}\ }\textbf {\bibinfo {volume} {97}},\ \bibinfo
  {pages} {063802} (\bibinfo {year} {2018})},\ \Eprint
  {https://arxiv.org/abs/1709.06615} {arXiv:1709.06615 [physics,
  physics:quant-ph]} \BibitemShut {NoStop}%
\bibitem [{\citenamefont {Tichy}\ \emph {et~al.}(2010)\citenamefont {Tichy},
  \citenamefont {Tiersch}, \citenamefont {{de Melo}}, \citenamefont {Mintert},\
  and\ \citenamefont {Buchleitner}}]{tichy_zerotransmission_2010}%
  \BibitemOpen
  \bibfield  {author} {\bibinfo {author} {\bibfnamefont {M.~C.}\ \bibnamefont
  {Tichy}}, \bibinfo {author} {\bibfnamefont {M.}~\bibnamefont {Tiersch}},
  \bibinfo {author} {\bibfnamefont {F.}~\bibnamefont {{de Melo}}}, \bibinfo
  {author} {\bibfnamefont {F.}~\bibnamefont {Mintert}},\ and\ \bibinfo {author}
  {\bibfnamefont {A.}~\bibnamefont {Buchleitner}},\ }\bibfield  {title}
  {\bibinfo {title} {Zero-{{Transmission Law}} for {{Multiport Beam
  Splitters}}},\ }\href {https://doi.org/10.1103/PhysRevLett.104.220405}
  {\bibfield  {journal} {\bibinfo  {journal} {Physical Review Letters}\
  }\textbf {\bibinfo {volume} {104}},\ \bibinfo {pages} {220405} (\bibinfo
  {year} {2010})},\ \Eprint {https://arxiv.org/abs/1002.5038} {arXiv:1002.5038
  [quant-ph]} \BibitemShut {NoStop}%
\bibitem [{\citenamefont {Rodari}\ \emph {et~al.}(2025)\citenamefont {Rodari},
  \citenamefont {Novo}, \citenamefont {Albiero}, \citenamefont {Suprano},
  \citenamefont {Tavares}, \citenamefont {Caruccio}, \citenamefont {Hoch},
  \citenamefont {Giordani}, \citenamefont {Carvacho}, \citenamefont {Gardina},
  \citenamefont {Di~Giano}, \citenamefont {Di~Giorgio}, \citenamefont
  {Corrielli}, \citenamefont {Ceccarelli}, \citenamefont {Osellame},
  \citenamefont {Spagnolo}, \citenamefont {Galv{\~a}o},\ and\ \citenamefont
  {Sciarrino}}]{rodari_semideviceindependent_2025}%
  \BibitemOpen
  \bibfield  {author} {\bibinfo {author} {\bibfnamefont {G.}~\bibnamefont
  {Rodari}}, \bibinfo {author} {\bibfnamefont {L.}~\bibnamefont {Novo}},
  \bibinfo {author} {\bibfnamefont {R.}~\bibnamefont {Albiero}}, \bibinfo
  {author} {\bibfnamefont {A.}~\bibnamefont {Suprano}}, \bibinfo {author}
  {\bibfnamefont {C.~T.}\ \bibnamefont {Tavares}}, \bibinfo {author}
  {\bibfnamefont {E.}~\bibnamefont {Caruccio}}, \bibinfo {author}
  {\bibfnamefont {F.}~\bibnamefont {Hoch}}, \bibinfo {author} {\bibfnamefont
  {T.}~\bibnamefont {Giordani}}, \bibinfo {author} {\bibfnamefont
  {G.}~\bibnamefont {Carvacho}}, \bibinfo {author} {\bibfnamefont
  {M.}~\bibnamefont {Gardina}}, \bibinfo {author} {\bibfnamefont
  {N.}~\bibnamefont {Di~Giano}}, \bibinfo {author} {\bibfnamefont
  {S.}~\bibnamefont {Di~Giorgio}}, \bibinfo {author} {\bibfnamefont
  {G.}~\bibnamefont {Corrielli}}, \bibinfo {author} {\bibfnamefont
  {F.}~\bibnamefont {Ceccarelli}}, \bibinfo {author} {\bibfnamefont
  {R.}~\bibnamefont {Osellame}}, \bibinfo {author} {\bibfnamefont
  {N.}~\bibnamefont {Spagnolo}}, \bibinfo {author} {\bibfnamefont {E.~F.}\
  \bibnamefont {Galv{\~a}o}},\ and\ \bibinfo {author} {\bibfnamefont
  {F.}~\bibnamefont {Sciarrino}},\ }\bibfield  {title} {\bibinfo {title}
  {Semi-{{Device-Independent Characterization}} of {{Multiphoton
  Indistinguishability}}},\ }\href
  {https://doi.org/10.1103/PRXQuantum.6.020340} {\bibfield  {journal} {\bibinfo
   {journal} {PRX Quantum}\ }\textbf {\bibinfo {volume} {6}},\ \bibinfo {pages}
  {020340} (\bibinfo {year} {2025})}\BibitemShut {NoStop}%
\bibitem [{\citenamefont {{de Guise}}\ \emph {et~al.}(2014)\citenamefont {{de
  Guise}}, \citenamefont {Tan}, \citenamefont {Poulin},\ and\ \citenamefont
  {Sanders}}]{deguise_coincidence_2014}%
  \BibitemOpen
  \bibfield  {author} {\bibinfo {author} {\bibfnamefont {H.}~\bibnamefont {{de
  Guise}}}, \bibinfo {author} {\bibfnamefont {S.-H.}\ \bibnamefont {Tan}},
  \bibinfo {author} {\bibfnamefont {I.~P.}\ \bibnamefont {Poulin}},\ and\
  \bibinfo {author} {\bibfnamefont {B.~C.}\ \bibnamefont {Sanders}},\
  }\bibfield  {title} {\bibinfo {title} {Coincidence {{Landscapes}} for
  {{Three-Channel Linear Optical Networks}}},\ }\href
  {https://doi.org/10.1103/PhysRevA.89.063819} {\bibfield  {journal} {\bibinfo
  {journal} {Physical Review A}\ }\textbf {\bibinfo {volume} {89}},\ \bibinfo
  {pages} {063819} (\bibinfo {year} {2014})},\ \Eprint
  {https://arxiv.org/abs/1402.2391} {arXiv:1402.2391 [quant-ph]} \BibitemShut
  {NoStop}%
\bibitem [{\citenamefont {Karczewski}\ \emph {et~al.}(2019)\citenamefont
  {Karczewski}, \citenamefont {Pisarczyk},\ and\ \citenamefont
  {Kurzy{\'n}ski}}]{karczewski_genuine_2019}%
  \BibitemOpen
  \bibfield  {author} {\bibinfo {author} {\bibfnamefont {M.}~\bibnamefont
  {Karczewski}}, \bibinfo {author} {\bibfnamefont {R.}~\bibnamefont
  {Pisarczyk}},\ and\ \bibinfo {author} {\bibfnamefont {P.}~\bibnamefont
  {Kurzy{\'n}ski}},\ }\bibfield  {title} {\bibinfo {title} {Genuine
  multipartite indistinguishability and its detection via the generalized
  {{Hong-Ou-Mandel}} effect},\ }\href
  {https://doi.org/10.1103/PhysRevA.99.042102} {\bibfield  {journal} {\bibinfo
  {journal} {Physical Review A}\ }\textbf {\bibinfo {volume} {99}},\ \bibinfo
  {pages} {042102} (\bibinfo {year} {2019})}\BibitemShut {NoStop}%
\bibitem [{\citenamefont {Giordani}\ \emph {et~al.}(2018)\citenamefont
  {Giordani}, \citenamefont {Flamini}, \citenamefont {Pompili}, \citenamefont
  {Viggianiello}, \citenamefont {Spagnolo}, \citenamefont {Crespi},
  \citenamefont {Osellame}, \citenamefont {Wiebe}, \citenamefont {Walschaers},
  \citenamefont {Buchleitner},\ and\ \citenamefont
  {Sciarrino}}]{giordani_experimental_2018}%
  \BibitemOpen
  \bibfield  {author} {\bibinfo {author} {\bibfnamefont {T.}~\bibnamefont
  {Giordani}}, \bibinfo {author} {\bibfnamefont {F.}~\bibnamefont {Flamini}},
  \bibinfo {author} {\bibfnamefont {M.}~\bibnamefont {Pompili}}, \bibinfo
  {author} {\bibfnamefont {N.}~\bibnamefont {Viggianiello}}, \bibinfo {author}
  {\bibfnamefont {N.}~\bibnamefont {Spagnolo}}, \bibinfo {author}
  {\bibfnamefont {A.}~\bibnamefont {Crespi}}, \bibinfo {author} {\bibfnamefont
  {R.}~\bibnamefont {Osellame}}, \bibinfo {author} {\bibfnamefont
  {N.}~\bibnamefont {Wiebe}}, \bibinfo {author} {\bibfnamefont
  {M.}~\bibnamefont {Walschaers}}, \bibinfo {author} {\bibfnamefont
  {A.}~\bibnamefont {Buchleitner}},\ and\ \bibinfo {author} {\bibfnamefont
  {F.}~\bibnamefont {Sciarrino}},\ }\bibfield  {title} {\bibinfo {title}
  {Experimental statistical signature of many-body quantum interference},\
  }\href {https://doi.org/10.1038/s41566-018-0097-4} {\bibfield  {journal}
  {\bibinfo  {journal} {Nature Photonics}\ }\textbf {\bibinfo {volume} {12}},\
  \bibinfo {pages} {173} (\bibinfo {year} {2018})},\ \Eprint
  {https://arxiv.org/abs/2103.16418} {arXiv:2103.16418 [quant-ph]} \BibitemShut
  {NoStop}%
\bibitem [{\citenamefont {Marshall}(2022)}]{marshall_distillation_2022}%
  \BibitemOpen
  \bibfield  {author} {\bibinfo {author} {\bibfnamefont {J.}~\bibnamefont
  {Marshall}},\ }\bibfield  {title} {\bibinfo {title} {Distillation of
  {{Indistinguishable Photons}}},\ }\href
  {https://doi.org/10.1103/PhysRevLett.129.213601} {\bibfield  {journal}
  {\bibinfo  {journal} {Physical Review Letters}\ }\textbf {\bibinfo {volume}
  {129}},\ \bibinfo {pages} {213601} (\bibinfo {year} {2022})},\ \Eprint
  {https://arxiv.org/abs/2203.15197} {arXiv:2203.15197 [physics,
  physics:quant-ph]} \BibitemShut {NoStop}%
\bibitem [{\citenamefont {Somhorst}\ \emph {et~al.}(2024)\citenamefont
  {Somhorst}, \citenamefont {Sau{\"e}r}, \citenamefont {van~den Hoven},\ and\
  \citenamefont {Renema}}]{somhorst_photon_2024}%
  \BibitemOpen
  \bibfield  {author} {\bibinfo {author} {\bibfnamefont {F.~H.~B.}\
  \bibnamefont {Somhorst}}, \bibinfo {author} {\bibfnamefont {B.~K.}\
  \bibnamefont {Sau{\"e}r}}, \bibinfo {author} {\bibfnamefont {S.~N.}\
  \bibnamefont {van~den Hoven}},\ and\ \bibinfo {author} {\bibfnamefont
  {J.~J.}\ \bibnamefont {Renema}},\ }\href@noop {} {\bibinfo {title} {Photon
  distillation schemes with reduced resource costs based on multiphoton
  {{Fourier}} interference}} (\bibinfo {year} {2024}),\ \Eprint
  {https://arxiv.org/abs/2404.14262} {arXiv:2404.14262 [physics,
  physics:quant-ph]} \BibitemShut {NoStop}%
\bibitem [{\citenamefont {Heurtel}\ \emph {et~al.}(2023)\citenamefont
  {Heurtel}, \citenamefont {Fyrillas}, \citenamefont {Gliniasty}, \citenamefont
  {Le~Bihan}, \citenamefont {Malherbe}, \citenamefont {Pailhas}, \citenamefont
  {Bertasi}, \citenamefont {Bourdoncle}, \citenamefont {Emeriau}, \citenamefont
  {Mezher}, \citenamefont {Music}, \citenamefont {Belabas}, \citenamefont
  {Valiron}, \citenamefont {Senellart}, \citenamefont {Mansfield},\ and\
  \citenamefont {Senellart}}]{heurtel_perceval_2023}%
  \BibitemOpen
  \bibfield  {author} {\bibinfo {author} {\bibfnamefont {N.}~\bibnamefont
  {Heurtel}}, \bibinfo {author} {\bibfnamefont {A.}~\bibnamefont {Fyrillas}},
  \bibinfo {author} {\bibfnamefont {G.~D.}\ \bibnamefont {Gliniasty}}, \bibinfo
  {author} {\bibfnamefont {R.}~\bibnamefont {Le~Bihan}}, \bibinfo {author}
  {\bibfnamefont {S.}~\bibnamefont {Malherbe}}, \bibinfo {author}
  {\bibfnamefont {M.}~\bibnamefont {Pailhas}}, \bibinfo {author} {\bibfnamefont
  {E.}~\bibnamefont {Bertasi}}, \bibinfo {author} {\bibfnamefont
  {B.}~\bibnamefont {Bourdoncle}}, \bibinfo {author} {\bibfnamefont {P.-E.}\
  \bibnamefont {Emeriau}}, \bibinfo {author} {\bibfnamefont {R.}~\bibnamefont
  {Mezher}}, \bibinfo {author} {\bibfnamefont {L.}~\bibnamefont {Music}},
  \bibinfo {author} {\bibfnamefont {N.}~\bibnamefont {Belabas}}, \bibinfo
  {author} {\bibfnamefont {B.}~\bibnamefont {Valiron}}, \bibinfo {author}
  {\bibfnamefont {P.}~\bibnamefont {Senellart}}, \bibinfo {author}
  {\bibfnamefont {S.}~\bibnamefont {Mansfield}},\ and\ \bibinfo {author}
  {\bibfnamefont {J.}~\bibnamefont {Senellart}},\ }\bibfield  {title} {\bibinfo
  {title} {Perceval: {{A Software Platform}} for {{Discrete Variable Photonic
  Quantum Computing}}},\ }\href {https://doi.org/10.22331/q-2023-02-21-931}
  {\bibfield  {journal} {\bibinfo  {journal} {Quantum}\ }\textbf {\bibinfo
  {volume} {7}},\ \bibinfo {pages} {931} (\bibinfo {year} {2023})}\BibitemShut
  {NoStop}%
\bibitem [{\citenamefont {Rota}(1964{\natexlab{b}})}]{rota_number_1964}%
  \BibitemOpen
  \bibfield  {author} {\bibinfo {author} {\bibfnamefont {G.-C.}\ \bibnamefont
  {Rota}},\ }\bibfield  {title} {\bibinfo {title} {The {{Number}} of
  {{Partitions}} of a {{Set}}},\ }\href
  {https://doi.org/10.1080/00029890.1964.11992270} {\bibfield  {journal}
  {\bibinfo  {journal} {The American Mathematical Monthly}\ }\textbf {\bibinfo
  {volume} {71}},\ \bibinfo {pages} {498} (\bibinfo {year}
  {1964}{\natexlab{b}})}\BibitemShut {NoStop}%
\end{thebibliography}%

\appendix
\onecolumngrid

\section{Equivalence criterion from Shchesnovich's indistinguishabilities}
\label{app:cyclic}

    As originally derived in V. Shchesnovich's article~\cite{shchesnovich_partial_2015} and shown in Section~\ref{sec:preliminaries} of the main text (\refeqn{eq:path_interf}), the photocounting probabilities of an interference experiment involving an $n$-photon state can be expressed as a function of its $n!$ generalized indistinguishabilities $M_\sigma = \langle \opperm{\sigma}\rangle$.
    Grouping up the terms which depends on the same $M_\sigma$ we obtain:

    \begin{equation}
    \label{eq:path_interf_group}
    p(\outcome) = \sum_{\egperm{\sigma}} M_\sigma \sum_{\egperm{\tau}} X^*_{\tau}X_{\tau\sigma} .
    \end{equation}

    \noindent

    From this expression it follows that, if two states possess the same values of $M_\sigma$, they will always produce the same photocounting statistics in every interference experiment.
    
    \begin{lemma}[Shchesnovich's indistinguishabilities \cite{shchesnovich_partial_2015}]
    \label{lm:shch_necessary}
        Let $\rho,\rho' \in \mathcal{D}$ such that $M_\sigma=M'_\sigma$, then $p(\outcome| \rho) = p(\outcome|\rho')$ for every outcome $\outcome$ and linear interferometer $U$.
    \end{lemma}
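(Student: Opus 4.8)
The plan is to read the lemma off the grouped identity \refeqn{eq:path_interf_group}, whose essential feature is that for a fixed interferometer $U$ and outcome $\outcome$ the photocounting probability $p(\outcome)$ is a \emph{linear functional} of the family of generalized indistinguishabilities $(M_\sigma)_{\sigma\in\symg{n}}$, with coefficients that depend only on $U$ and $\outcome$ and never on the state. Granting this, the lemma is immediate: for any fixed $U$ and $\outcome$ one has $p(\outcome|\rho) - p(\outcome|\rho') = \sum_\sigma (M_\sigma - M'_\sigma)\, c_\sigma$ with $c_\sigma = \sum_\tau X^*_\tau X_{\tau\sigma}$, so the hypothesis $M_\sigma = M'_\sigma$ for every $\sigma$ forces the difference to vanish for every choice of $U$ and $\outcome$.

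To put \refeqn{eq:path_interf_group} into this explicitly linear form I would start from the pairwise path-interference expression \refeqn{eq:path_interf} and reindex the double sum by the relative permutation $\mu = \sigma'^{-1}\sigma$, i.e.\ $\sigma = \sigma'\mu$. For fixed $\sigma'$ the index $\mu$ ranges over all of $\symg{n}$, so the state-dependent overlap $\avg{\opperm{\mu}} = M_\mu$ factors out of the sum over $\sigma'$, leaving the state-independent amplitude sum $c_\mu = \sum_{\sigma'} X^*_{\sigma'} X_{\sigma'\mu}$. Relabelling $\mu\to\sigma$ and $\sigma'\to\tau$ then recovers \refeqn{eq:path_interf_group} and exhibits the coefficients $c_\sigma$ as functions of $U$ and $\outcome$ alone.

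The step demanding real care is the passage to \emph{bunched} outcomes, since \refeqn{eq:path_interf} was derived under the no-collision assumption where $\hat{O}^{\outcome}$ leaves exactly the bijections $f^{[\outcome]} = \symg{n}$. For an outcome with occupations $s_j > 1$ the surviving paths are no longer permutations, and the projector introduces symmetrization factors $\prod_j s_j!$ together with internal-state overlaps between photons sharing an output mode. The main obstacle is therefore to confirm that, once these combinatorial factors are accounted for, the probability still collapses to a state-independent linear combination of the $M_\sigma$---equivalently, that every surviving overlap reduces to an expectation $\avg{\opperm{\sigma}}$ of a mode-permutation operator (with unmoved photons traced out). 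This is precisely the generalization to arbitrary outcomes established in Shchesnovich's framework \cite{shchesnovich_partial_2015}; I would either invoke it to guarantee that \refeqn{eq:path_interf_group} holds for all $\outcome$, or reproduce the bookkeeping directly, after which the linearity conclusion of the first paragraph applies unchanged.
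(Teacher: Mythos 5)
Your proof is correct and follows essentially the same route as the paper: the paper's argument (Appendix~\ref{app:cyclic}) likewise regroups \refeqn{eq:path_interf} into \refeqn{eq:path_interf_group}, exhibiting $p(\outcome)$ as a linear functional of the $M_\sigma$ with state-independent coefficients $\sum_\tau X^*_\tau X_{\tau\sigma}$, from which the lemma is immediate. Your added care about bunched outcomes is resolved exactly as you anticipate---the paper also defers that generalization to Shchesnovich's framework \cite{shchesnovich_partial_2015} rather than redoing the bookkeeping.
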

    The backward implication can be proven by considering a family of $2n \times 2n$ interferometers which are able to isolate the dependence on one specific $M_\sigma$ in one specific outcome: if the photocounting statistics of two given states always coincide, one can proceed to measure each indistinguishability of the two states and the results must coincide.
    This family consists of $n!$ different interferometers, each one corresponding to a permutation $\sigma$ of the $n$ input photons and are notated by $C_\sigma$. These interferometers generalize the cyclic interferometers found in M. Pont et al.~\cite{pont_quantifying_2022}. In that work, they only considered the full cycle interferometer, notated here as $C_{(1~\!2~\!\dots~\!n)}$.
        
    The base structure consists of two layers of $n$ $50:50$ beam splitters separated by some pattern of mode swaps, and this structure is the same for all the interferometers (see Fig.\ref{fig:cyclic}). For clarity, we label the modes from 0 to $2n-1$, and the rows of beam splitters from 1 to $n$ so that, for example, row 1 contains modes 0 and 1 while row 2 contains modes 2 and 3.
    The even and odd modes between the first and second layer of beam splitters are linked as follows: the even output in row $i$ of first layer of beam splitters is connected directly to its corresponding even input mode in the same row of the second layer. The odd output mode on row $i$ on the first layer undergoes instead the permutation $\sigma$ specific to $C_\sigma$, being connected to the odd input mode in row $\sigma(i)$ in the second layer.
    For each disjoint cycle composing $\sigma_i$, a phase shifter imprinting a phase $\phi_i$ is positioned in the first even mode of the rows involved in that cycle.
    In Fig.~\ref{fig:cyclic}, we illustrate the interferometers $C_{(1243)}$ and $C_{(13)(24)}$.

        \begin{figure}[htb]
            
            \subfloat[\label{fig:dio}]{
            \begin{tikzpicture}[scale= 0.5,x=1pt,y=1pt]
                \draw[color=darkred,line width=3,line join=miter,fill=none] (10,-25) -- (25,-25);
                \draw[color=darkred,line width=3,line join=miter,fill=none] (10,-75) -- (25,-75);
                \draw[color=darkred,line width=3,line join=miter,fill=none] (10,-125) -- (25,-125);
                \draw[color=darkred,line width=3,line join=miter,fill=none] (10,-175) -- (25,-175);
                \draw[color=darkred,line width=3,line join=miter,fill=none] (10,-225) -- (25,-225);
                \draw[color=darkred,line width=3,line join=miter,fill=none] (10,-275) -- (25,-275);
                \draw[color=darkred,line width=3,line join=miter,fill=none] (10,-325) -- (25,-325);
                \draw[color=darkred,line width=3,line join=miter,fill=none] (10,-375) -- (25,-375);
                \draw[color=darkred,line width=3] (25,-25) -- (53,-25) -- (72,-44);
                \draw[color=darkred,line width=3] (78,-44) -- (97,-25) -- (125,-25);
                \draw[color=darkred,line width=3] (25,-75) -- (53,-75) -- (72,-56);
                \draw[color=darkred,line width=3] (78,-56) -- (97,-75) -- (125,-75);
                \draw[color=black,line width=1,fill=black] (50,-43) -- (100,-43) -- (100,-57) -- (50,-57) -- cycle;
                \node[align=center,anchor=base,font = {\fontsize{7pt}{0}\selectfont}] at (75,-85) {};
                \node[align=center,anchor=base,font = {\fontsize{7pt}{0}\selectfont}] at (75,-26) {};
                \draw[color=black,line width=1,fill=lightgray] (50,-43) -- (100,-43) -- (100,-47) -- (50,-47) -- cycle;
                \draw[color=darkred,line width=3] (25,-125) -- (53,-125) -- (72,-144);
                \draw[color=darkred,line width=3] (78,-144) -- (97,-125) -- (125,-125);
                \draw[color=darkred,line width=3] (25,-175) -- (53,-175) -- (72,-156);
                \draw[color=darkred,line width=3] (78,-156) -- (97,-175) -- (125,-175);
                \draw[color=black,line width=1,fill=black] (50,-143) -- (100,-143) -- (100,-157) -- (50,-157) -- cycle;
                \node[align=center,anchor=base,font = {\fontsize{7pt}{0}\selectfont}] at (75,-185) {};
                \node[align=center,anchor=base,font = {\fontsize{7pt}{0}\selectfont}] at (75,-126) {};
                \draw[color=black,line width=1,fill=lightgray] (50,-143) -- (100,-143) -- (100,-147) -- (50,-147) -- cycle;

                \draw[color=darkred,line width=3] (25,-225) -- (53,-225) -- (72,-244);
                \draw[color=darkred,line width=3] (78,-244) -- (97,-225) -- (125,-225);
                \draw[color=darkred,line width=3] (25,-275) -- (53,-275) -- (72,-256);
                \draw[color=darkred,line width=3] (78,-256) -- (97,-275) -- (125,-275);
                \draw[color=black,line width=1,fill=black] (50,-243) -- (100,-243) -- (100,-257) -- (50,-257) -- cycle;
                \node[align=center,anchor=base,font = {\fontsize{7pt}{0}\selectfont}] at (75,-285) {};
                \node[align=center,anchor=base,font = {\fontsize{7pt}{0}\selectfont}] at (75,-226) {};
                \draw[color=black,line width=1,fill=lightgray] (50,-243) -- (100,-243) -- (100,-247) -- (50,-247) -- cycle;

                \draw[color=darkred,line width=3] (25,-325) -- (53,-325) -- (72,-344);
                \draw[color=darkred,line width=3] (78,-344) -- (97,-325) -- (125,-325);
                \draw[color=darkred,line width=3] (25,-375) -- (53,-375) -- (72,-356);
                \draw[color=darkred,line width=3] (78,-356) -- (97,-375) -- (125,-375);
                \draw[color=black,line width=1,fill=black] (50,-343) -- (100,-343) -- (100,-357) -- (50,-357) -- cycle;
                \node[align=center,anchor=base,font = {\fontsize{7pt}{0}\selectfont}] at (75,-385) {};
                \node[align=center,anchor=base,font = {\fontsize{7pt}{0}\selectfont}] at (75,-326) {};
                \draw[color=black,line width=1,fill=lightgray] (50,-343) -- (100,-343) -- (100,-347) -- (50,-347) -- cycle;

                \draw[color=darkred,line width=3] (125,-25) -- (175,-25);
                \draw[color=black,line width=1,fill=gray] (130,-40) -- (139,-40) -- (153,-10) -- (144,-10) -- (130,-40) -- (139,-40) -- cycle;
                \node[align=center,anchor=west,font = {\fontsize{12pt}{0}\selectfont}] at (147,-38) {$\phi_1$};
                \draw[color=darkred,line width=3] (125,-75) -- (175,-75);
                \draw[color=darkred,line width=3] (125,-125) -- (175,-125);
                \draw[color=darkred,line width=3] (125,-175) -- (175,-175);
                \draw[color=darkred,line width=3] (125,-225) -- (175,-225);
                \draw[color=darkred,line width=3] (125,-275) -- (175,-275);
                \draw[color=darkred,line width=3] (125,-325) -- (175,-325);
                \draw[color=darkred,line width=3] (125,-375) -- (175,-375);
                \draw[color=white,line width=6] (178,-25) -- (222,-25);
                \draw[color=darkred,line width=3] (175,-25) -- (178,-25) -- (222,-25) -- (225,-25);
                \draw[color=white,line width=6] (178,-75) -- (222,-175);
                \draw[color=darkred,line width=3] (175,-75) -- (178,-75) -- (222,-175) -- (225,-175);
                \draw[color=white,line width=6] (178,-125) -- (222,-125);
                \draw[color=darkred,line width=3] (175,-125) -- (178,-125) -- (222,-125) -- (225,-125);
                \draw[color=white,line width=6] (178,-175) -- (222,-375);
                \draw[color=darkred,line width=3] (175,-175) -- (178,-175) -- (222,-375) -- (225,-375);
                \draw[color=white,line width=6] (178,-225) -- (222,-225);
                \draw[color=darkred,line width=3] (175,-225) -- (178,-225) -- (222,-225) -- (225,-225);
                \draw[color=white,line width=6] (178,-275) -- (222,-75);
                \draw[color=darkred,line width=3] (175,-275) -- (178,-275) -- (222,-75) -- (225,-75);
                \draw[color=white,line width=6] (178,-325) -- (222,-325);
                \draw[color=darkred,line width=3] (175,-325) -- (178,-325) -- (222,-325) -- (225,-325);
                \draw[color=white,line width=6] (178,-375) -- (222,-275);
                \draw[color=darkred,line width=3] (175,-375) -- (178,-375) -- (222,-275) -- (225,-275);
                \draw[color=darkred,line width=3] (225,-25) -- (253,-25) -- (272,-44);
                \draw[color=darkred,line width=3] (278,-44) -- (297,-25) -- (325,-25);
                \draw[color=darkred,line width=3] (225,-75) -- (253,-75) -- (272,-56);
                \draw[color=darkred,line width=3] (278,-56) -- (297,-75) -- (325,-75);
                \draw[color=black,line width=1,fill=black] (250,-43) -- (300,-43) -- (300,-57) -- (250,-57) -- cycle;
                \node[align=center,anchor=base,font = {\fontsize{7pt}{0}\selectfont}] at (275,-85) {};
                \node[align=center,anchor=base,font = {\fontsize{7pt}{0}\selectfont}] at (275,-26) {};
                \draw[color=black,line width=1,fill=lightgray] (250,-43) -- (300,-43) -- (300,-47) -- (250,-47) -- cycle;

                \draw[color=darkred,line width=3] (225,-125) -- (253,-125) -- (272,-144);
                \draw[color=darkred,line width=3] (278,-144) -- (297,-125) -- (325,-125);
                \draw[color=darkred,line width=3] (225,-175) -- (253,-175) -- (272,-156);
                \draw[color=darkred,line width=3] (278,-156) -- (297,-175) -- (325,-175);
                \draw[color=black,line width=1,fill=black] (250,-143) -- (300,-143) -- (300,-157) -- (250,-157) -- cycle;
                \node[align=center,anchor=base,font = {\fontsize{7pt}{0}\selectfont}] at (275,-185) {};
                \node[align=center,anchor=base,font = {\fontsize{7pt}{0}\selectfont}] at (275,-126) {};
                \draw[color=black,line width=1,fill=lightgray] (250,-143) -- (300,-143) -- (300,-147) -- (250,-147) -- cycle;
                \draw[color=darkred,line width=3] (225,-225) -- (253,-225) -- (272,-244);
                \draw[color=darkred,line width=3] (278,-244) -- (297,-225) -- (325,-225);
                \draw[color=darkred,line width=3] (225,-275) -- (253,-275) -- (272,-256);
                \draw[color=darkred,line width=3] (278,-256) -- (297,-275) -- (325,-275);
                \draw[color=black,line width=1,fill=black] (250,-243) -- (300,-243) -- (300,-257) -- (250,-257) -- cycle;
                \node[align=center,anchor=base,font = {\fontsize{7pt}{0}\selectfont}] at (275,-285) {};
                \node[align=center,anchor=base,font = {\fontsize{7pt}{0}\selectfont}] at (275,-226) {};
                \draw[color=black,line width=1,fill=lightgray] (250,-243) -- (300,-243) -- (300,-247) -- (250,-247) -- cycle;
                \draw[color=darkred,line width=3] (225,-325) -- (253,-325) -- (272,-344);
                \draw[color=darkred,line width=3] (278,-344) -- (297,-325) -- (325,-325);
                \draw[color=darkred,line width=3] (225,-375) -- (253,-375) -- (272,-356);
                \draw[color=darkred,line width=3] (278,-356) -- (297,-375) -- (325,-375);
                \draw[color=black,line width=1,fill=black] (250,-343) -- (300,-343) -- (300,-357) -- (250,-357) -- cycle;
                \node[align=center,anchor=base,font = {\fontsize{7pt}{0}\selectfont}] at (275,-385) {};
                \node[align=center,anchor=base,font = {\fontsize{7pt}{0}\selectfont}] at (275,-326) {};
                \draw[color=black,line width=1,fill=lightgray] (250,-343) -- (300,-343) -- (300,-347) -- (250,-347) -- cycle;
                \draw[color=darkred,line width=3,line join=miter,fill=none] (325,-25) -- (340,-25);
                \draw[color=darkred,line width=3,line join=miter,fill=none] (325,-75) -- (340,-75);
                \draw[color=darkred,line width=3,line join=miter,fill=none] (325,-125) -- (340,-125);
                \draw[color=darkred,line width=3,line join=miter,fill=none] (325,-175) -- (340,-175);
                \draw[color=darkred,line width=3,line join=miter,fill=none] (325,-225) -- (340,-225);
                \draw[color=darkred,line width=3,line join=miter,fill=none] (325,-275) -- (340,-275);
                \draw[color=darkred,line width=3,line join=miter,fill=none] (325,-325) -- (340,-325);
                \draw[color=darkred,line width=3,line join=miter,fill=none] (325,-375) -- (340,-375);
                \node[align=center,anchor=east,font = {\fontsize{12pt}{0}\selectfont}] at (370,-28) {0};
                \node[align=center,anchor=east,font = {\fontsize{12pt}{0}\selectfont}] at (370,-78) {1};
                \node[align=center,anchor=east,font = {\fontsize{12pt}{0}\selectfont}] at (370,-128) {2};
                \node[align=center,anchor=east,font = {\fontsize{12pt}{0}\selectfont}] at (370,-178) {3};
                \node[align=center,anchor=east,font = {\fontsize{12pt}{0}\selectfont}] at (370,-228) {4};
                \node[align=center,anchor=east,font = {\fontsize{12pt}{0}\selectfont}] at (370,-278) {5};
                \node[align=center,anchor=east,font = {\fontsize{12pt}{0}\selectfont}] at (370,-328) {6};
                \node[align=center,anchor=east,font = {\fontsize{12pt}{0}\selectfont}] at (370,-378) {7};
                \node[align=center,anchor=west,font = {\fontsize{12pt}{0}\selectfont}] at (-20,-28) {0};
                \node[align=center,anchor=west,font = {\fontsize{12pt}{0}\selectfont}] at (-20,-78) {1};
                \node[align=center,anchor=west,font = {\fontsize{12pt}{0}\selectfont}] at (-20,-128) {2};
                \node[align=center,anchor=west,font = {\fontsize{12pt}{0}\selectfont}] at (-20,-178) {3};
                \node[align=center,anchor=west,font = {\fontsize{12pt}{0}\selectfont}] at (-20,-228) {4};
                \node[align=center,anchor=west,font = {\fontsize{12pt}{0}\selectfont}] at (-20,-278) {5};
                \node[align=center,anchor=west,font = {\fontsize{12pt}{0}\selectfont}] at (-20,-328) {6};
                \node[align=center,anchor=west,font = {\fontsize{12pt}{0}\selectfont}] at (-20,-378) {7};
                \end{tikzpicture}
        }
        \hspace*{\fill}
        \subfloat[\label{fig:cane}]{
            \begin{tikzpicture}[scale= 0.5,x=1pt,y=1pt]
                \draw[color=darkred,line width=3,line join=miter,fill=none] (10,-25) -- (25,-25);
                \draw[color=darkred,line width=3,line join=miter,fill=none] (10,-75) -- (25,-75);
                \draw[color=darkred,line width=3,line join=miter,fill=none] (10,-125) -- (25,-125);
                \draw[color=darkred,line width=3,line join=miter,fill=none] (10,-175) -- (25,-175);
                \draw[color=darkred,line width=3,line join=miter,fill=none] (10,-225) -- (25,-225);
                \draw[color=darkred,line width=3,line join=miter,fill=none] (10,-275) -- (25,-275);
                \draw[color=darkred,line width=3,line join=miter,fill=none] (10,-325) -- (25,-325);
                \draw[color=darkred,line width=3,line join=miter,fill=none] (10,-375) -- (25,-375);
                \draw[color=darkred,line width=3] (25,-25) -- (53,-25) -- (72,-44);
                \draw[color=darkred,line width=3] (78,-44) -- (97,-25) -- (125,-25);
                \draw[color=darkred,line width=3] (25,-75) -- (53,-75) -- (72,-56);
                \draw[color=darkred,line width=3] (78,-56) -- (97,-75) -- (125,-75);
                \draw[color=black,line width=1,fill=black] (50,-43) -- (100,-43) -- (100,-57) -- (50,-57) -- cycle;
                \node[align=center,anchor=base,font = {\fontsize{7pt}{0}\selectfont}] at (75,-85) {};
                \node[align=center,anchor=base,font = {\fontsize{7pt}{0}\selectfont}] at (75,-26) {};
                \draw[color=black,line width=1,fill=lightgray] (50,-43) -- (100,-43) -- (100,-47) -- (50,-47) -- cycle;
                \draw[color=darkred,line width=3] (25,-125) -- (53,-125) -- (72,-144);
                \draw[color=darkred,line width=3] (78,-144) -- (97,-125) -- (125,-125);
                \draw[color=darkred,line width=3] (25,-175) -- (53,-175) -- (72,-156);
                \draw[color=darkred,line width=3] (78,-156) -- (97,-175) -- (125,-175);
                \draw[color=black,line width=1,fill=black] (50,-143) -- (100,-143) -- (100,-157) -- (50,-157) -- cycle;
                \node[align=center,anchor=base,font = {\fontsize{7pt}{0}\selectfont}] at (75,-185) {};
                \node[align=center,anchor=base,font = {\fontsize{7pt}{0}\selectfont}] at (75,-126) {};
                \draw[color=black,line width=1,fill=lightgray] (50,-143) -- (100,-143) -- (100,-147) -- (50,-147) -- cycle;
                \draw[color=darkred,line width=3] (25,-225) -- (53,-225) -- (72,-244);
                \draw[color=darkred,line width=3] (78,-244) -- (97,-225) -- (125,-225);
                \draw[color=darkred,line width=3] (25,-275) -- (53,-275) -- (72,-256);
                \draw[color=darkred,line width=3] (78,-256) -- (97,-275) -- (125,-275);
                \draw[color=black,line width=1,fill=black] (50,-243) -- (100,-243) -- (100,-257) -- (50,-257) -- cycle;
                \node[align=center,anchor=base,font = {\fontsize{7pt}{0}\selectfont}] at (75,-285) {};
                \node[align=center,anchor=base,font = {\fontsize{7pt}{0}\selectfont}] at (75,-226) {};
                \draw[color=black,line width=1,fill=lightgray] (50,-243) -- (100,-243) -- (100,-247) -- (50,-247) -- cycle;
                \draw[color=darkred,line width=3] (25,-325) -- (53,-325) -- (72,-344);
                \draw[color=darkred,line width=3] (78,-344) -- (97,-325) -- (125,-325);
                \draw[color=darkred,line width=3] (25,-375) -- (53,-375) -- (72,-356);
                \draw[color=darkred,line width=3] (78,-356) -- (97,-375) -- (125,-375);
                \draw[color=black,line width=1,fill=black] (50,-343) -- (100,-343) -- (100,-357) -- (50,-357) -- cycle;
                \node[align=center,anchor=base,font = {\fontsize{7pt}{0}\selectfont}] at (75,-385) {};
                \node[align=center,anchor=base,font = {\fontsize{7pt}{0}\selectfont}] at (75,-326) {};
                \draw[color=black,line width=1,fill=lightgray] (50,-343) -- (100,-343) -- (100,-347) -- (50,-347) -- cycle;
                \draw[color=darkred,line width=3] (125,-25) -- (175,-25);
                \draw[color=black,line width=1,fill=gray] (130,-40) -- (139,-40) -- (153,-10) -- (144,-10) -- (130,-40) -- (139,-40) -- cycle;
                \node[align=center,anchor=west,font = {\fontsize{12pt}{0}\selectfont}] at (147,-38) {$\phi_1$};
                \draw[color=darkred,line width=3] (125,-225) -- (175,-225);
                \draw[color=black,line width=1,fill=gray] (130,-240) -- (139,-240) -- (153,-210) -- (144,-210) -- (130,-240) -- (139,-240) -- cycle;
                \node[align=center,anchor=west,font = {\fontsize{12pt}{0}\selectfont}] at (147,-238) {$\phi_2$};
                \draw[color=darkred,line width=3] (125,-75) -- (175,-75);
                \draw[color=darkred,line width=3] (125,-125) -- (175,-125);
                \draw[color=darkred,line width=3] (125,-175) -- (175,-175);
                \draw[color=darkred,line width=3] (125,-275) -- (175,-275);
                \draw[color=darkred,line width=3] (125,-325) -- (175,-325);
                \draw[color=darkred,line width=3] (125,-375) -- (175,-375);
                \draw[color=white,line width=6] (178,-25) -- (222,-25);
                \draw[color=darkred,line width=3] (175,-25) -- (178,-25) -- (222,-25) -- (225,-25);
                \draw[color=white,line width=6] (178,-75) -- (222,-175);
                \draw[color=darkred,line width=3] (175,-75) -- (178,-75) -- (222,-175) -- (225,-175);
                \draw[color=white,line width=6] (178,-125) -- (222,-125);
                \draw[color=darkred,line width=3] (175,-125) -- (178,-125) -- (222,-125) -- (225,-125);
                \draw[color=white,line width=6] (178,-175) -- (222,-75);
                \draw[color=darkred,line width=3] (175,-175) -- (178,-175) -- (222,-75) -- (225,-75);
                \draw[color=white,line width=6] (178,-225) -- (222,-225);
                \draw[color=darkred,line width=3] (175,-225) -- (178,-225) -- (222,-225) -- (225,-225);
                \draw[color=white,line width=6] (178,-275) -- (222,-375);
                \draw[color=darkred,line width=3] (175,-275) -- (178,-275) -- (222,-375) -- (225,-375);
                \draw[color=white,line width=6] (178,-325) -- (222,-325);
                \draw[color=darkred,line width=3] (175,-325) -- (178,-325) -- (222,-325) -- (225,-325);
                \draw[color=white,line width=6] (178,-375) -- (222,-275);
                \draw[color=darkred,line width=3] (175,-375) -- (178,-375) -- (222,-275) -- (225,-275);
                \draw[color=darkred,line width=3] (225,-25) -- (253,-25) -- (272,-44);
                \draw[color=darkred,line width=3] (278,-44) -- (297,-25) -- (325,-25);
                \draw[color=darkred,line width=3] (225,-75) -- (253,-75) -- (272,-56);
                \draw[color=darkred,line width=3] (278,-56) -- (297,-75) -- (325,-75);
                \draw[color=black,line width=1,fill=black] (250,-43) -- (300,-43) -- (300,-57) -- (250,-57) -- cycle;
                \node[align=center,anchor=base,font = {\fontsize{7pt}{0}\selectfont}] at (275,-85) {};
                \node[align=center,anchor=base,font = {\fontsize{7pt}{0}\selectfont}] at (275,-26) {};
                \draw[color=black,line width=1,fill=lightgray] (250,-43) -- (300,-43) -- (300,-47) -- (250,-47) -- cycle;
                \draw[color=darkred,line width=3] (225,-125) -- (253,-125) -- (272,-144);
                \draw[color=darkred,line width=3] (278,-144) -- (297,-125) -- (325,-125);
                \draw[color=darkred,line width=3] (225,-175) -- (253,-175) -- (272,-156);
                \draw[color=darkred,line width=3] (278,-156) -- (297,-175) -- (325,-175);
                \draw[color=black,line width=1,fill=black] (250,-143) -- (300,-143) -- (300,-157) -- (250,-157) -- cycle;
                \node[align=center,anchor=base,font = {\fontsize{7pt}{0}\selectfont}] at (275,-185) {};
                \node[align=center,anchor=base,font = {\fontsize{7pt}{0}\selectfont}] at (275,-126) {};
                \draw[color=black,line width=1,fill=lightgray] (250,-143) -- (300,-143) -- (300,-147) -- (250,-147) -- cycle;
                \draw[color=darkred,line width=3] (225,-225) -- (253,-225) -- (272,-244);
                \draw[color=darkred,line width=3] (278,-244) -- (297,-225) -- (325,-225);
                \draw[color=darkred,line width=3] (225,-275) -- (253,-275) -- (272,-256);
                \draw[color=darkred,line width=3] (278,-256) -- (297,-275) -- (325,-275);
                \draw[color=black,line width=1,fill=black] (250,-243) -- (300,-243) -- (300,-257) -- (250,-257) -- cycle;
                \node[align=center,anchor=base,font = {\fontsize{7pt}{0}\selectfont}] at (275,-285) {};
                \node[align=center,anchor=base,font = {\fontsize{7pt}{0}\selectfont}] at (275,-226) {};
                \draw[color=black,line width=1,fill=lightgray] (250,-243) -- (300,-243) -- (300,-247) -- (250,-247) -- cycle;
                \draw[color=darkred,line width=3] (225,-325) -- (253,-325) -- (272,-344);
                \draw[color=darkred,line width=3] (278,-344) -- (297,-325) -- (325,-325);
                \draw[color=darkred,line width=3] (225,-375) -- (253,-375) -- (272,-356);
                \draw[color=darkred,line width=3] (278,-356) -- (297,-375) -- (325,-375);
                \draw[color=black,line width=1,fill=black] (250,-343) -- (300,-343) -- (300,-357) -- (250,-357) -- cycle;
                \node[align=center,anchor=base,font = {\fontsize{7pt}{0}\selectfont}] at (275,-385) {};
                \node[align=center,anchor=base,font = {\fontsize{7pt}{0}\selectfont}] at (275,-326) {};
                \draw[color=black,line width=1,fill=lightgray] (250,-343) -- (300,-343) -- (300,-347) -- (250,-347) -- cycle;
                \draw[color=darkred,line width=3,line join=miter,fill=none] (325,-25) -- (340,-25);
                \draw[color=darkred,line width=3,line join=miter,fill=none] (325,-75) -- (340,-75);
                \draw[color=darkred,line width=3,line join=miter,fill=none] (325,-125) -- (340,-125);
                \draw[color=darkred,line width=3,line join=miter,fill=none] (325,-175) -- (340,-175);
                \draw[color=darkred,line width=3,line join=miter,fill=none] (325,-225) -- (340,-225);
                \draw[color=darkred,line width=3,line join=miter,fill=none] (325,-275) -- (340,-275);
                \draw[color=darkred,line width=3,line join=miter,fill=none] (325,-325) -- (340,-325);
                \draw[color=darkred,line width=3,line join=miter,fill=none] (325,-375) -- (340,-375);
                \node[align=center,anchor=east,font = {\fontsize{12pt}{0}\selectfont}] at (370,-28) {0};
                \node[align=center,anchor=east,font = {\fontsize{12pt}{0}\selectfont}] at (370,-78) {1};
                \node[align=center,anchor=east,font = {\fontsize{12pt}{0}\selectfont}] at (370,-128) {2};
                \node[align=center,anchor=east,font = {\fontsize{12pt}{0}\selectfont}] at (370,-178) {3};
                \node[align=center,anchor=east,font = {\fontsize{12pt}{0}\selectfont}] at (370,-228) {4};
                \node[align=center,anchor=east,font = {\fontsize{12pt}{0}\selectfont}] at (370,-278) {5};
                \node[align=center,anchor=east,font = {\fontsize{12pt}{0}\selectfont}] at (370,-328) {6};
                \node[align=center,anchor=east,font = {\fontsize{12pt}{0}\selectfont}] at (370,-378) {7};
                \node[align=center,anchor=west,font = {\fontsize{12pt}{0}\selectfont}] at (-20,-28) {0};
                \node[align=center,anchor=west,font = {\fontsize{12pt}{0}\selectfont}] at (-20,-78) {1};
                \node[align=center,anchor=west,font = {\fontsize{12pt}{0}\selectfont}] at (-20,-128) {2};
                \node[align=center,anchor=west,font = {\fontsize{12pt}{0}\selectfont}] at (-20,-178) {3};
                \node[align=center,anchor=west,font = {\fontsize{12pt}{0}\selectfont}] at (-20,-228) {4};
                \node[align=center,anchor=west,font = {\fontsize{12pt}{0}\selectfont}] at (-20,-278) {5};
                \node[align=center,anchor=west,font = {\fontsize{12pt}{0}\selectfont}] at (-20,-328) {6};
                \node[align=center,anchor=west,font = {\fontsize{12pt}{0}\selectfont}] at (-20,-378) {7};
                \end{tikzpicture}
            }
            \caption{The interferometers $C_{(1243)}$ (a) and $C_{(12)(34)}$ (b). For both setups the input 4-photon state is fed in the mode 0,2,4,6 and the sought outcome is a coincidence in the output modes 0,2,4,6. The image has been generated using the Python package Perceval \cite{heurtel_perceval_2023} }
            \label{fig:cyclic}
        \end{figure}
        
    To measure $M_\sigma$, the $n$ photons are fed into the $n$ even input modes of the interferometer. For every configuration, we need to measure only the standard outcome $\outcome_0 = [1,0,1,0...1,0]$, which is the one where the photons have not changed mode. The interferometers (and their scattering matrices $U$) are sparse enough so that, in the simple case where $\sigma$ has a single cycle, such an outcome can only be realized by two virtual paths: $X_{\mathrm{id}}$ where all the photons end up in the top arms after the first layer of beam splitter (and thus do not switch mode) or $X_\sigma$ where they all end up in the bottom arms and perform the permutation.
    The interference between these two paths creates a $\phi_1$-dependent fringe in $\outcome_0$, isolating the two conjugate generalized indistinguishabilities $M_\sigma = M_{\sigma^{-1}}^*$ and resulting in the probability formula derived in Ref.~\cite{pont_quantifying_2022}.
    Highlighting modulus and phase of $M_\sigma = |M_\sigma|e^{i \theta_\sigma}$ one has:

    \begin{equation}
        \label{eq:cyclic1}
        p(\outcome_0) = \frac{1}{2^{2n-1}}(1 \pm |M_\sigma|\text{cos}(\theta_\sigma + \phi_1 )),
    \end{equation}
    \noindent where the sign of the sum is a function of the size of the interferometer.
    Fitting the amplitude and phase of the fringe function allows the direct observation of $M_\sigma$.
    
    For the general case where $\sigma = \sigma_1 \sigma_2 \dots \sigma_k$ has more than one disjoint cycle, the interferometer is broken into $k$ independent parts, each one allowing again only an identity and a permutation path: $ X_{\sigma_i^{0}}, X_{\sigma_i^{1}}$.
    There thus a total of $2^k$ $n$-photon paths, labeled by a vector of $k$ choices $\underline{c}\in \{0,1\}^k$, whose probability amplitudes can be recovered by multiplying the amplitude of each partial path: 
    \begin{equation}
        X_{\underline{c}}  = \prod_{i=1}^k X_{\sigma_i^{c_i}}.
    \end{equation}

    Given this structure, one can derive that the experiment is now susceptible $3^{k-r}2^r$ distinct indistinguishabilities, where $r$ is the number of cycles $\sigma_i$ of order 2. The permutations $\tau_{\underline{l}}$ relative to these indistinguishabilities take the form:

    \begin{equation}
        \tau_{\underline{l}} = \prod_{i=1}^k \sigma_i^{l_i},
    \end{equation}

    \noindent where $l_i \in \{0,1\}$ if  $\sigma_i$ has order $2$ and $l_i \in \{-1,0,1\}$ otherwise.

    Noting that $|X_{\underline{c}}| = \frac{1}{2^n}$, the probability formula for $\outcome_0$ can be expressed as:
    
    \begin{equation}
        \label{eq:cyclick}
        p(\outcome_0) = \sum_{\underline{c},\underline{c'} \in \{0,1\}^k}  \pm \frac{|M_{\tau_{\underline{c}-\underline{c}'}}|}{2^{2n}} \cos\left(\theta_{\tau_{\underline{c}-\underline{c}'}} + \sum_i (c_i - c'_i)\phi_i\right),
    \end{equation}

    \noindent where now the sign of each term sum is also a function of the length of each cycle.
    The two probability equations (\ref{eq:cyclic1}) (\ref{eq:cyclick}) are the elements needed for the proof of the equivalence criterion.

    \begin{lemma}
    \label{lm:shch_sufficient}
        Let $\rho,\rho' \in \mathcal{D}$ with generalized indistinguishabilities $M_\sigma$ and $M'_\sigma$.
        If $p(\outcome| \rho) = p(\outcome|\rho')$ for every outcome $\outcome$ and linear interferometer $U$, then $M_\sigma$ = $M'_\sigma$ for all $\sigma$.
    \end{lemma}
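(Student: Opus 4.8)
The plan is to use the hypothesis only through the dedicated family of $2n$-mode interferometers $C_\sigma$ constructed above, evaluated on the single outcome $\outcome_0 = [1,0,\dots,1,0]$. Since by assumption $p(\outcome|\rho) = p(\outcome|\rho')$ for every scattering matrix $U$ and every $\outcome$, it holds in particular that $p(\outcome_0|\rho) = p(\outcome_0|\rho')$ for each $C_\sigma$ and for every setting of its internal phase shifters $\phi_1,\dots,\phi_k$. The two relations \refeqn{eq:cyclic1} and \refeqn{eq:cyclick} already express this probability as an interference fringe whose harmonic content is carried entirely by the generalized indistinguishabilities; the proof then amounts to inverting these relations to read off each $M_\sigma$ from the measured fringe, and to arguing that every permutation is reached by some $C_\sigma$.

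First I would treat the single-cycle case, where $\sigma$ is an $n$-cycle. Here \refeqn{eq:cyclic1} shows that $p(\outcome_0)$ is a single cosine in the one free phase $\phi_1$, with amplitude proportional to $|M_\sigma|$ and offset $\theta_\sigma$; the conjugate contribution $M_{\sigma^{-1}} = M_\sigma^*$ collapses into this same real fringe. Two equal fringes must have equal amplitude and equal phase, so $|M_\sigma| = |M'_\sigma|$ and $\theta_\sigma = \theta'_\sigma$, hence $M_\sigma = M'_\sigma$.

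For the general case I would regard $p(\outcome_0)$ as a function on the $k$-torus parametrized by $(\phi_1,\dots,\phi_k)$. By \refeqn{eq:cyclick} it is a finite Fourier series whose characters are $\exp(i\sum_i (c_i - c_i')\phi_i)$, and the coefficient attached to the frequency vector $\underline{l} = \underline{c} - \underline{c}'$ is precisely the modulus and phase of $M_{\tau_{\underline{l}}}$, with the opposite frequency $-\underline{l}$ supplying the conjugate $M_{\tau_{\underline{l}}}^* = M_{\tau_{\underline{l}}^{-1}}$. Distinct admissible $\underline{l}$ give distinct lattice points, hence linearly independent characters, so a Fourier transform over the torus isolates each coefficient individually. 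Equality of the two fringe functions then forces equality of all their Fourier coefficients, i.e. $M_{\tau_{\underline{l}}} = M'_{\tau_{\underline{l}}}$ for every accessible $\underline{l}$. Taking the maximal frequency $\underline{l} = (1,\dots,1)$ yields $\tau_{\underline{l}} = \prod_i \sigma_i = \sigma$, so $M_\sigma = M'_\sigma$; since a choice of $C_\sigma$ exists for each $\egperm{\sigma}$, this establishes the claim for all permutations.

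The main obstacle is justifying the two structural facts underpinning the Fourier inversion. One must confirm that the outcome $\outcome_0$ is produced by exactly the $2^k$ advertised virtual paths $X_{\underline{c}}$ and no others---this is the sparsity of the scattering matrix of $C_\sigma$, which guarantees the fringe contains no spurious harmonics---and that the frequency vectors $\underline{c} - \underline{c}'$ occurring in \refeqn{eq:cyclick} are pairwise distinct up to sign, so that no two indistinguishabilities are conflated in a single harmonic and the conjugate pairing $M_{\tau^{-1}} = M_\tau^*$ cleanly separates modulus from phase. Some care is needed for order-$2$ cycles, where $\sigma_i^{+1} = \sigma_i^{-1}$ causes the $+\phi_i$ and $-\phi_i$ harmonics to encode the same permutation (the reason the count is $3^{k-r}2^r$ rather than $3^k$). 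Once this path-counting and the distinctness of the frequency lattice are in place, the remaining step is routine discrete Fourier analysis applied to the identity $p(\outcome_0|\rho) = p(\outcome_0|\rho')$.
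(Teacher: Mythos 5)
Your proposal is correct and takes essentially the same approach as the paper: both use the hypothesis only through the interferometer family $C_\sigma$ evaluated on the outcome $\outcome_0$, and both isolate $M_\sigma$ (with its conjugate partner $M_{\sigma^{-1}} = M_\sigma^*$) as the extreme-frequency Fourier component of the phase-dependent fringe, which only the paths $\underline{c} = (0,\dots,0)$ and $(1,\dots,1)$ can produce. The only cosmetic difference is that the paper first collapses all internal phases to a single parameter $\overline{\phi}$ and extracts the $k$-th harmonic of a one-variable fringe, whereas you perform the Fourier transform over the full $k$-torus and read off the coefficient at the maximal frequency vector $(1,\dots,1)$ --- the same isolation mechanism in one dimension versus $k$.
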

    \begin{proof}
        If sigma has just one cycle the value of $M_\sigma$ is directly observable with the interferometer $C_\sigma$ because of \refeqn{eq:cyclic1}.

        Let $\sigma$ be a permutation with $k$ cycles.
        One can construct a one-parameter fringe function in the interferometer $C_\sigma$ by setting all the internal phases of the interferometer to $\phi_i = \overline{\phi} \in [0,2\pi)$ so that the path relative to $\underline{c}=[0,0,\dots,0]$ gains a phase proportional to $k\overline{\phi}$.
        Consequently, out of all the terms in the summation of \refeqn{eq:cyclick}, the two terms relative to $\underline{c} = [0,0,\dots,0], [1,1,\dots,1]$, linked to the two conjugate indistinguishabilities $M_\sigma = M^*_{\sigma^{-1}}$, are the only one producing a fringe proportional to $\cos(\theta_\sigma + k\overline{\phi})$.
        The modulus and phase of $M_\sigma$ are then observable as the highest-frequency component of the fringe function after a Fourier transform.
        Repeating the operation for all $\sigma$ with $k$ cycles allows the direct observation of all such indistinguishabilities.
    \end{proof}

\section{Partition, permutations and ordering}
\label{app:partition}
    Given a set of $n$ elements $\Omega = \{1,2,\dots, n\}$ we define a partition $\partition{\Lambda}$ as a collection of non-intersecting subsets $ \Lambda_i \subseteq \Omega$ (cells) such that:
    \begin{equation}
        \bigcup_i \Lambda_i = \Omega.
    \end{equation}

    The number of subsets is indicated by $|\partition{\Lambda}|$ and the size of each subset by $|\Lambda_i|$.
    The number of distinct partitions of a set is counted by the $n$-th Bell number $\belln{n}$ \cite{rota_number_1964}, symbol that with a slight abuse will also be used to address the set of partitions ($\partition{\Lambda} \in \belln{n}$).
    Partitions can be naturally ordered by set inclusion, this results in a partial order relation that we denote with $\succeq$ and we define as follows:
    
    \begin{definition}
        Given two partitions $\partition{\Lambda}$ and $\partition{\Xi}$, \\
        $\partition{\Lambda} \succeq \partition{\Xi}$ $\iff$ $ \forall$ $\Xi_i \in \partition{\Xi}$ $\exists$ $\Lambda_j \in \partition{\Lambda}$ such that $\Lambda_j \supseteq \Xi_i$.
    \end{definition}

    By overlapping the two relations $\succeq$ and $\preceq$, one can rule out the possibility that two distinct partitions can be labeled as equal:

    \begin{lemma}[Cell number]
    \label{lm:deferrec_celln}
        $\partition{\Lambda} \succ \partition{\Xi} \follows  |\partition{\Lambda}| < |\partition{\Xi}|$.
    \end{lemma}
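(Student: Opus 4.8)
The plan is to translate the order relation $\succeq$ into a statement about how the cells of the finer partition assemble into the cells of the coarser one, and then run an elementary counting argument. The first step is to observe that $\partition{\Lambda} \succeq \partition{\Xi}$ forces every cell $\Lambda_j$ of $\partition{\Lambda}$ to be a disjoint union of cells of $\partition{\Xi}$. Concretely, for each cell $\Xi_i \in \partition{\Xi}$ the defining condition supplies a cell of $\partition{\Lambda}$ containing it; this cell is \emph{unique} because the cells of $\partition{\Lambda}$ are pairwise disjoint and $\Xi_i$ is nonempty, so any single element of $\Xi_i$ already selects one cell of $\partition{\Lambda}$. I would record this as a map $\pi \colon \partition{\Xi} \to \partition{\Lambda}$ sending $\Xi_i$ to the unique $\Lambda_j \supseteq \Xi_i$.

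Second, I would verify that $\pi$ is surjective: any element $x$ of a chosen cell $\Lambda_j$ lies in some $\Xi_i$, and since $\pi(\Xi_i) \supseteq \Xi_i \ni x$ while also $x \in \Lambda_j$, disjointness of the cells of $\partition{\Lambda}$ forces $\pi(\Xi_i) = \Lambda_j$. Surjectivity of a map between finite sets immediately yields the weak inequality $|\partition{\Lambda}| \le |\partition{\Xi}|$, and moreover shows that $\bigsqcup_{\pi(\Xi_i) = \Lambda_j} \Xi_i = \Lambda_j$ for each $j$, so the fibres of $\pi$ themselves partition $\partition{\Xi}$.

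Third, for the strict inequality I would count cells through these fibres. Summing over the cells of $\partition{\Lambda}$ gives $\sum_j |\pi^{-1}(\Lambda_j)| = |\partition{\Xi}|$, a sum of $|\partition{\Lambda}|$ integers each at least $1$ by surjectivity. Hence $|\partition{\Xi}| \ge |\partition{\Lambda}|$ with equality precisely when every fibre is a singleton, i.e.\ when each $\Lambda_j$ coincides with its unique constituent $\Xi_i$; in that case the two collections of cells are identical and $\partition{\Lambda} = \partition{\Xi}$. Since the hypothesis $\partition{\Lambda} \succ \partition{\Xi}$ excludes $\partition{\Lambda} = \partition{\Xi}$, at least one fibre must have size $\ge 2$, the inequality becomes strict, and we obtain $|\partition{\Lambda}| < |\partition{\Xi}|$.

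I expect the only genuine obstacle to lie in the bookkeeping of the first two steps — establishing that the containing cell is unique and that $\pi$ is honestly surjective — since everything downstream is routine finite counting. The conceptual heart is the equality case: the claim reduces to the fact that a strictly coarser partition cannot merely relabel its cells but must genuinely merge at least two cells of $\partition{\Xi}$, which is exactly what a fibre of size $\ge 2$ encodes.
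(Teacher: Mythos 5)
Your proof is correct and takes essentially the same route as the paper's: both construct the canonical surjection from the cells of $\partition{\Xi}$ onto the cells of $\partition{\Lambda}$ (each cell of $\partition{\Xi}$ mapped to the unique cell of $\partition{\Lambda}$ containing it), deduce $|\partition{\Lambda}| \le |\partition{\Xi}|$, and then observe that equality forces the map to be a bijection with cells agreeing one by one, which is excluded by strictness of $\succ$. Your version simply makes explicit the uniqueness, surjectivity, and fibre-counting steps that the paper's terser proof leaves implicit.
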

    \begin{proof}
        The definition of $\partition{\Lambda} \succeq \partition{\Xi}$ implies that there is a surjective function from the subset of $\partition{\Xi}$ to the one of $\partition{\Lambda}$, which means that $|\partition{\Lambda}| \leq |\partition{\Xi}|$.
         Now $|\partition{\Lambda}| = |\partition{\Xi}|$ iff this function is a bijection and the cells of the two partitions are equal one by one. It follows that  whenever $\partition{\Lambda} \succeq \partition{\Xi}$  and  $|\partition{\Lambda}| \neq |\partition{\Xi}|$, $\partition{\Lambda}$ must have fewer cells.
    \end{proof}

    Given a permutation $\sigma$ in the symmetric group $\symg{n}$, we let it act naturally on a partition such that, given a partition $\partition{\Lambda}$ and one of its cells $\Lambda_i = \{a_1, a_2, \dots a_j\}$:

    \begin{equation}
        \sigma \circ \Lambda_i = \{ \sigma (a_1), \sigma (a_2), \dots, \sigma (a_j) \}.
    \end{equation}

    Disjoint subsets will give disjoint images so that the action can be extended to the full partition:
    \begin{equation}
        \sigma \circ \partition{\Lambda} = \{\sigma \circ \Lambda_1, \dots , \sigma \circ \Lambda_k\}.
    \end{equation}
    It is also possible to assign a partition to each permutations in $\symg{n}$, which is naturally induced by the orbit of its cycles $\sigma_i$. We denote such a partition as $\partition{\sigma}$.
    The order relation defined earlier allows to confront partitions and permutations and is useful to describe whether a permutation is a symmetry for a partition, that is if $\sigma \circ \partition{\Lambda} = \partition{\Lambda}$. This conditions is met if and only if the permutations move the elements of $\Omega$ within the cells of $\partition{\Lambda}$ and not between them. This is equivalent to saying that each cycle of $\sigma$ is always contained in a cell of $\partition{\Lambda}$, that is $\partition{\Lambda} \succeq \partition{\sigma}$.

    The group of symmetries of a partition $\sympart{\Lambda}$ can then be swiftly described as:
    \begin{equation}
        \sympart{\Lambda} = \{\sigma \in \symg{n} \mid \partition{\sigma} \preceq \partition{\Lambda}\}.
    \end{equation}

    Using this notation we now give a independent proof of the invertibility of the system of equations \ref{eq:msigma_sum1}.

    \begin{proposition}[Permutation to partition matrix]
        \label{lm:perm_to_part}
            Let $R_{\partition{\sigma}\partition{\Lambda}}$ be the $\belln{n} \times \belln{n}$ matrix which describes the linear relation between the generalized indistinguishabilities and the partition distribution, such that $M_\sigma = \sum_{\partition{\Lambda}} R_{\partition{\sigma}\partition{\Lambda}} \wpart{\Lambda}$. Then $A_{\partition{\sigma}\partition{\Lambda}}$ takes the form:
    
            \begin{equation}
            \label{eq:perm_to_part}
                R_{\partition{\sigma}\partition{\Lambda}} =
                \begin{cases}
                1 & \text{if} \; \partition{\Lambda} \succeq \partition{\sigma}\\
                0 & \text{otherwise}
                \end{cases}.
            \end{equation}
            In particular $R_{\partition{\sigma}\partition{\Lambda}}$ is invertible.
        \end{proposition}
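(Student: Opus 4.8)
The plan is to separate the statement into two parts: first confirming the explicit form \refeqn{eq:perm_to_part} of the matrix, and then establishing its invertibility through a direct determinant computation, as an alternative to invoking the general M\"obius inversion result (Lm.~\ref{lm:mobius}).

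For the form, I would simply read off the coefficients from the partition decomposition that is already in hand. Any state $\rho' \in \mathcal{I}$ with partition distribution $\wpart{\Lambda}$ obeys \refeqn{eq:msigma_sum1}, namely $M_\sigma = \sum_{\partition{\Lambda} \succeq \partition{\sigma}} \wpart{\Lambda}$, i.e.\ the sum of the weights over those $\partition{\Lambda}$ coarser than $\partition{\sigma}$. Matching this against the defining relation $M_\sigma = \sum_{\partition{\Lambda}} R_{\partition{\sigma}\partition{\Lambda}} \wpart{\Lambda}$ forces $R_{\partition{\sigma}\partition{\Lambda}} = 1$ exactly when $\partition{\Lambda} \succeq \partition{\sigma}$ and $0$ otherwise, which is precisely \refeqn{eq:perm_to_part}.

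For invertibility, the key idea is to choose an enumeration of the $\belln{n}$ partitions that renders $R$ triangular. I would order the partitions by increasing number of cells $|\partition{\Lambda}|$. The diagonal entries $R_{\partition{\sigma}\partition{\sigma}}$ are all equal to $1$ by reflexivity of $\succeq$. For an off-diagonal entry with $\partition{\sigma} \neq \partition{\Lambda}$, the value $R_{\partition{\sigma}\partition{\Lambda}} = 1$ requires $\partition{\Lambda} \succ \partition{\sigma}$ strictly; by the Cell number lemma (Lm.~\ref{lm:deferrec_celln}) this forces $|\partition{\Lambda}| < |\partition{\sigma}|$, so in our ordering the column index $\partition{\Lambda}$ precedes the row index $\partition{\sigma}$, placing the nonzero entry strictly below the diagonal. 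Hence $R$ is lower triangular with unit diagonal, its determinant equals $1 \neq 0$, and $R$ is therefore invertible.

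The main subtlety I expect to handle is that $\succeq$ is only a \emph{partial} order, so the cell-count enumeration is not canonical and leaves incomparable partitions of equal cell count unordered relative to one another. This is harmless: two distinct partitions with the same number of cells cannot be related by $\succ$ (again by Lm.~\ref{lm:deferrec_celln}), so the corresponding off-diagonal entries vanish, and any tie-breaking within a block of fixed cell count preserves the lower-triangular structure. Establishing this compatibility between the partial order and the cell-count grading is the crux of the argument; once it is in place, the determinant evaluates immediately and the proof concludes.
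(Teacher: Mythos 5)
Your proposal is correct and takes essentially the same route as the paper's proof: the form of $R_{\partition{\sigma}\partition{\Lambda}}$ is read off by matching the defining relation against \refeqn{eq:msigma_sum1}, and invertibility follows by sorting partitions by cell count so that Lm.~\ref{lm:deferrec_celln} forces a triangular matrix with unit diagonal and hence determinant $1$. Your explicit handling of the tie-breaking among incomparable partitions of equal cell count (which cannot be related by $\succ$, so the block-internal entries vanish) is a slightly more careful rendering of the same argument, and your increasing-cell-count ordering is indeed the one that yields the lower-triangular form.
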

        \begin{proof}
            The form of the matrix derives from direct inspection of \refeqn{eq:msigma_sum1} and so it remains to prove that the matrix is invertible.
            
            Lemma~\ref{lm:deferrec_celln} shows that if $\partition{\Lambda} \succeq \partition{\sigma}$ then either $\partition{\Lambda} = \partition{\sigma}$ or $|\partition{\Lambda}| \leq |\partition{\sigma}|$.
            It follows that by sorting the permutation (rows) and partitions (columns) by decreasing number of cells, $R_{\partition{\sigma}\partition{\Lambda}}$ becomes a lower-triangular matrix.
            Since $\partition{\Lambda} \succeq \partition{\Lambda}$, the diagonal elements $R_{\partition{\Lambda}\partition{\Lambda}}$ are all equal to 1. It directly follows that $\mathrm{Det}\!\left( R_{\partition{\sigma}\partition{\Lambda}} \right) = 1$ and $R_{\partition{\sigma}\partition{\Lambda}}$ is invertible
        \end{proof}
We highlight how the structure of the linear system of Prop.~\ref{lm:perm_to_part} that allowed to prove invertibility, also suggests an algorithm to extract the partition distribution. 
It is in fact known that lower-triangular matrices can be efficiently inverted via forward substitution in quadratic time.
The matrix elements $R^{-1}_{\partition{\Lambda}\partition{\sigma}}$ underlying the inversion procedure correspond to the M\"obius function $\mu_{\partition{\Lambda}\partition{\sigma}}$ \cite{rota_foundations_1964} that realizes the inverse relation:

\begin{equation}
    \wpart{\Lambda} = \sum_{\partition{\sigma}} \mu_{\partition{\Lambda}\partition{\sigma}} M_{\partition{\sigma}}
\end{equation}

    We add one last lemma on conjugation, required for Sec.~\ref{ssec:twirling}:
    \begin{lemma}
    \label{lm:connect_conj}
        Given two permutations $\sigma$ and $\tau \in \symg{n}$ such that $\partition{\sigma} = \partition{\tau}$, there always exists a third permutation $\nu$ such that $\tau = \nu \sigma \nu^{-1}$.
    \end{lemma}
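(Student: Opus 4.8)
The plan is to reduce this to the classical description of conjugacy in $\symg{n}$ and then build the connecting permutation $\nu$ one cell at a time. The single fact I rely on is the cycle-conjugation identity: for any cycle $(a_1\,a_2\,\cdots\,a_j)$ and any permutation $\nu$, one has $\nu\,(a_1\,a_2\,\cdots\,a_j)\,\nu^{-1} = (\nu(a_1)\,\nu(a_2)\,\cdots\,\nu(a_j))$. This is exactly the tool that turns the hypothesis $\partition{\sigma}=\partition{\tau}$ into an explicit construction.

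First I would write the common partition as $\partition{\sigma}=\partition{\tau}=\{\Lambda_1,\dots,\Lambda_k\}$ and note that, since $\partition{\sigma}$ is by definition the orbit partition of $\sigma$, each cell $\Lambda_i$ is the support of exactly one disjoint cycle of $\sigma$, say $\sigma_i$, and likewise of exactly one cycle $\tau_i$ of $\tau$. Because the two permutations share the very same cell $\Lambda_i$, these cycles have equal length $|\Lambda_i|$, so I can write $\sigma_i = (a^{(i)}_1\,\cdots\,a^{(i)}_{j_i})$ and $\tau_i = (b^{(i)}_1\,\cdots\,b^{(i)}_{j_i})$, where both tuples enumerate the same set $\Lambda_i$ (cells of size one are fixed points of both permutations and need no treatment). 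Next I would define, for each $i$, a bijection $\nu_i$ of $\Lambda_i$ onto itself by matching the two cyclic orderings position by position, $\nu_i(a^{(i)}_l)=b^{(i)}_l$. Since the cells are pairwise disjoint, these local bijections have disjoint supports and assemble unambiguously into a single permutation $\nu=\prod_i \nu_i \in \symg{n}$.

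Finally I would verify the conjugation relation cell by cell: the cycle-conjugation identity gives $\nu\,\sigma_i\,\nu^{-1}=\nu_i\,\sigma_i\,\nu_i^{-1}=\tau_i$ for each $i$, and since conjugation distributes over the product of disjoint cycles, $\nu\,\sigma\,\nu^{-1}=\prod_i \nu\,\sigma_i\,\nu^{-1}=\prod_i \tau_i=\tau$. I expect no genuine obstacle, as the statement is simply the ``only if'' direction of the standard characterization of conjugacy classes in the symmetric group, specialized to the case where the two permutations act on \emph{identical} orbits; this makes the pairing of cells automatic rather than a separate combinatorial choice. The only point requiring care is bookkeeping—fixing a consistent starting element in each cycle so the position-by-position matching is well defined, and confirming that the disjointness of supports indeed lets the local conjugations be glued into a single global $\nu$.
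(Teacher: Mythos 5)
Your proof is correct and follows essentially the same route as the paper's: both construct $\nu$ explicitly by aligning the disjoint-cycle notations of $\sigma$ and $\tau$ position by position and sending the $l$-th element of each cycle of $\sigma$ to the $l$-th element of the corresponding cycle of $\tau$. Your version merely organizes this cell by cell (natural, given the literal hypothesis $\partition{\sigma}=\partition{\tau}$) and adds the explicit verification via the cycle-conjugation identity, which the paper leaves implicit.
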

    \begin{proof}
        Since $\sigma$ and $\tau$ have the same cycle structure, one can write the disjoint cycles of the two permutations and their respective elements side by side such as:
        \begin{equation*}
        \begin{split}
            (s_0,s_1,s_2)(s_3, s_4, \cdots)\cdots(\cdots s_n),\\
            (t_0,t_1,t_2)(t_3, t_4, \cdots)\cdots(\cdots t_n).
        \end{split}
        \end{equation*}

        Using the two orderings $s_i$ and $t_i$ of the set $\{1, \cdots,n\}$. One can then construct $\nu$ by setting $\nu(s_i) = t_i$.
    \end{proof}


\section{Partition states}
\label{app:partstate}

It is possible to build a set of representative states $\{\partket{\Lambda}\}$ in one to one relation with the set $\belln{n}$, making use of the following the membership function $s_{\partition{\Lambda}}$:

\begin{equation}
    s_{\partition{\Lambda}} : \{1,...,n\} \rightarrow \{1,...,|\partition{\Lambda}|\} \quad | \quad s_{\partition{\Lambda}}(i) = j \iff i \in \Lambda_j.
\end{equation}

Arbitrarily choosing a single-photon basis $\{\ket{1_k}\}$, each partition state takes the following form:

\begin{equation}
    \partket{\Lambda} = \bigotimes_{j=0}^n \ket{1_{s_{\partition{\Lambda}} (i)}}_i.
\end{equation}

Since two distinct partitions have membership functions that differ by at least one value, it is clear that for $\partition{\Lambda},\partition{\Xi} \in \belln{n}$ we have $\scalar{\psi_{\partition{\Lambda}}}{\psi_{\partition{\Xi}}}= \delta_{\partition{\Lambda} \partition{\Xi}}$.

The interesting property of the partition states is that they behave well under the mode permutation operators of equation (\ref{eq:evolution_permutation}).
In appendix~\ref{app:partition}, we showed that one can define an action of the permutations $\sigma \in \symg{n}$ on any partition $\partition{\Lambda} \in \belln{n}$. This action translates directly to the effect of permutation operators on the partition states:

\begin{equation}
    \sigma \circ \partition{\Lambda} = \partition{\Xi} \follows \opperm{\sigma}\partket{\Lambda} = \partket{\Xi}.
\end{equation}

Such action also induces a stabilizer subgroup $\sympart{\Lambda}$, which represents the symmetries of a partition as well as the one of the relative partition state. The cosets relative to these symmetries, $\tau \symg{\partition{\Lambda}}$ with representatives  $[\tau]$, list all the distinct partitions $\tau \circ \partition{\Lambda}$ that can be obtained from $\partition{\Lambda}$ through mode permutation, where now the ordering of subsets is relevant. 

Evolving $\partket{\Lambda}$ according to \refeqn{eq:evolution_permutation}, and grouping up all elements in the right-hand side that result in the same partition, we obtain:

\begin{equation}
    \hat{O}^{\outcome}\mathcal{U}\partket{\Lambda} =\sum_{[\tau]} \left( \sum_{\egpartperm{\sigma}{\Lambda}} X_{\tau\sigma} \right) \ket{\psi_{\tau \circ \partition{\Lambda}}},
\end{equation}
from which one can compute the outcome probability as the norm of the state vector:

\begin{equation}
\label{eq:part_path_interference}
    p(\outcome | \partition{\Lambda}) = \sum_{[\tau]} \modsq{ \sum_{\egpartperm{\sigma}{\Lambda}} X_{\tau\sigma} }.
\end{equation}

The path interference structure of the above formula can be further confirmed with what is obtained by computing Shchesnovich's generalized indistinguishabilities:

\begin{equation}
\label{eq:part_msigma}
    M_\sigma = \langle \opperm{\sigma} \rangle =
    \begin{cases}
       1  & \text{ if } \sigma \in \symg{\partition{\Lambda}}\\
       0  & otherwise \\
    \end{cases}.
\end{equation}

The well-structured incoherence of the distinguishable cells of a partition state can be made more explicit by expressing the sum of path coefficients $X_\sigma$ (see \refeqn{eq:path})  in terms of sub-permanents of the scattering matrix $U^{\outcome}$.
We note that the group of symmetries of a partition $\sympart{\Lambda}$ is actually the product group of the symmetries of each one of its subsets, which are in turn isomorphic to a symmetric group of size $|\Lambda_i|$:

    \begin{equation}
    \label{eq:symg_sep}
        \sympart{\Lambda} = \bigtimes_i \symg{\Lambda_i},
    \end{equation}
$\sigma$ can thus always be written as  $\sigma_1\sigma_2...\sigma_k$, where $\sigma_i \in \symg{\Lambda_i}$ and the sum over the symmetries can then be broken down into $|\partition{\Lambda}|$ sums over the symmetries of the subsets.
Fixing $[\tau]$ in \refeqn{eq:part_path_interference}, the partial sum inside each modulus can be broken down into the single-photon transition amplitudes and sorted according to the cell they move:

    \begin{equation}
    \label{eq:partition_convolution}
    \begin{split}
       \sum_{\egpartperm{\sigma}{\Lambda}} X_{\tau\sigma}   
       =& \sum_{\sigma_1 \in \symg{\Lambda_1}} \sum_{\sigma_2 \in \symg{\Lambda_2}}...\sum_{\sigma_k \in \symg{\Lambda_k}} \prod_i U_{i \:\tau(\sigma(i))} \\
       =& \left(\sum_{\sigma_1 \in \symg{\Lambda_1}} \prod_{i \in \Lambda_i} U_{i \:\tau(\sigma_1(i))}\right)\left(\sum_{\sigma_2 \in \symg{\Lambda_2}} \prod_{i \in \Lambda_1} U_{i \:\tau(\sigma_2(i))}\right)...\left(\sum_{\sigma_k \in \symg{\Lambda_k}} \prod_{i \in \Lambda_k} U_{i \:\tau(\sigma_k(i))}\right).
    \end{split}
    \end{equation}

    The terms inside each pair of parentheses can be seen as the sub-permanents of the outcome matrix relative to the photons in the cell $\Lambda_i$ coherently evolving into the modes $[\tau] \circ \Lambda_i$.
    The overall photocounting statistics is then a result of a classical convolution between the statistics relative to each distinguishable cell: the incoherent sum over $[\tau]$ lists all the configurations in which the cells $\Lambda_i$ can independently evolve to form the outcome $\outcome$.

\section{Examples of coherent and incoherent distinguishability}
\label{app:examples}

    Knowing that orbit invariance is the criterion for a partition representation to exist, it becomes straightforward to construct a state for which such a representation is not possible.
    Since $\sigma$ and $\sigma^{-1}$ have the same orbits, and $M_\sigma = M_{\sigma^{-1}}^*$, orbit invariance immediately implies that $M_\sigma \in \mathbb{R}$ for all $\sigma$.
    Therefore, all collective phases \cite{shchesnovich_collective_2018} are constrained to be equivalent to either $0$ or $\pi$.
    The smallest number of modes for which this is not automatically true is three.
    Considering a single-photon orthogonal basis $\{\ket{1_a},\ket{1_b}, \hdots \}$ one can construct the following three-photon state.
    
    \begin{example}[Triad phase \cite{menssen_distinguishability_2017}]
        \label{ex:triad_phase}
        The state:
        $$ \ket{\psi} = \ket{1_a} \otimes \frac{1}{\sqrt{2}} (\ket{1_a} + \ket{1_b})\otimes \frac{1}{\sqrt{2}} (\ket{1_a} + e^{i \phi} \ket{1_b})$$

        \noindent
        has $M_{(1 2 3)} = \frac{1}{4}(1+ e^{i \phi})$ and thus does not have a partition representation unless $\phi =m\pi$ for $m$ integer.
        \end{example}

    Provided that a partition representation can be established, the set of partition weights sum to one. This can be shown by considering \refeqn{eq:msigma_sum1} for the identity permutation:
    
    \begin{equation}
        \sum_{\partition{\Lambda}} \wpart{\Lambda} = M_{\mathrm{id}} = \trace{\rho}.
    \end{equation}

    \noindent
    However, when attempting to compute generalized indistinguishabilities of a state with a triad phase being equal to $\phi=(2m+1)\pi$, it becomes clear that the set $\{\wpart{\Lambda}\}$ is actually a quasi-probability distribution, with no constraint on being positive:

    \begin{example}[Negative partition distribution]
        \label{ex:negative_partition}
    
            The pure state
            $$ \ket{\psi} = \frac{1}{2\sqrt2}\left( \ket{1_c}+ \ket{1_a}\right)\otimes \left( \ket{1_a}+ \ket{1_b}\right)\otimes \left( -\ket{1_b}+ \ket{1_c}\right)$$

            \noindent
            has the following six generalized indistinguishabilities:
    
            \begin{equation*}
            \begin{split}
                M_{(123)} = M_{(132)}  = -\frac{1}{8}\\
                M_{(12)} = M_{(13)} =  M_{(32)}  = \frac{1}{4}\\
                M_{\mathrm{id}} = 1.            \end{split}
            \end{equation*}
            This corresponds to the partition distribution:
            \begin{equation*}
            \begin{split}
                p_{\{1,2,3\}} = -\frac{1}{8}\\
                p_{\{1,2\}\{3\}} = p_{\{1,3\}\{2\}} = p_{\{2,3\}\{1\}} = \frac{3}{8}\\
                p_{\{1\}\{2\}\{3\}} = 0,
            \end{split}
            \end{equation*}
            which sums to 1 while containing a negative weight.
        \end{example}

    multi-photon states that are simultaneously diagonalizable are instead the original example of states with a proper positive partition distribution, and this class includes the common scenario dealing with $n$ copies of the same single-photon state $\rho_0$.
    In this case, assuming $\rho_0$ has a discrete eigenspectrum $\{ p_\lambda\}$ of size $|\{ p_\lambda\}|$, the partition distribution is a linear function of the generalized means $\mu_k = \left(\frac{\sum_\lambda p_\lambda^k}{|\{p_\lambda\}|}\right)^{\frac{1}{k}}$ of its eigenspectrum.
    For a $k$-cycle $(1~2~\cdots~k)$ the relative generalized indistinguishability in fact reads:
    
    \begin{equation}
        M_{(1~2~\cdots~k)} = \trace{\rho_0^{ k}} =  |\{p_\lambda\}| \cdot \mu_k^k.
    \end{equation}
    According to \refeqn{eq:msigma_separable}, these values are then sufficient to obtain all the other indistinguishabilities.
    A state of the form $\rho_0^{\otimes n}$ is the most relevant for the analysis of a Boson Sampling implementation that relies on just one periodic single-photon source and time-to-space demultiplexing to produce the desired multi-photon initial state: as a first approximation, the source will always produce the same effective single-photon state.

    A last relevant example is a class of states possessing a partition representation but where the constituent separable components are not simultaneous diagonalizable. An insightful example of such a state is found from the Orthogonal Bad Bits (OBB) distinguishability model \cite{sparrow_quantum_2017,renema_classical_2019}.

    \begin{example}[OBB]
    \label{ex:obb}
        There exist a family of separable $n$-photon states $\rho = \bigotimes_i^n \rho_i $ which possess a partition representation, but such that $[\rho_i, \rho_j] \neq 0$ for all $i\neq j$.

        \noindent The state is obtained by setting:

        $$ \rho_i =  (\sqrt{x}\ket{\underline{1}} + \sqrt{1 - x}\ket{1_i})(\sqrt{x}\bra{\underline{1}} + \sqrt{1 - x}\bra{1_i}), $$
        
        \noindent where  $x \in [0,1]$,  $\ket{\underline{1}}$ is a single photon in a mode common to all $\rho_i$, and $\ket{1_i}$ is a single photon in a mode orthogonal to all other modes:
        $$\scalar{1_i}{\underline{1}}=0, \scalar{1_i}{1_j}=\delta_{ij}.$$
    \end{example}

    In this model, a single parameter $x$ governs the amount of distinguishability by tuning the share of the common mode $\ket{\underline{1}}$ in each single-photon state, driving a transition between a perfectly indistinguishable statistics when $x=1$ and perfectly distinguishable statistics when $x=0$.
    The generalized indistinguishabilities can be computed with Prop.~\ref{eq:msigma_separable} and are orbit invariant, depending only on the number of fixed points of the permutation:

    \begin{equation}
    \label{obb_genind}
        M_\sigma = \prod_{i \neq \sigma(i)} |x|.
    \end{equation}

    The partition distribution of an OBB state is also easily recovered: although each $\rho_i$ is pure, the coherence between $\ket{\underline{1}}$ and  $\ket{1_i}$ is photocounting-irrelevant. The partially distinguishable photon $\rho_i$ is thus equivalent to a classical mixture of the desired `signal' photon $\ket{\underline{1}}\bra{\underline{1}}$ with probability $x$ and the undesired `noise' photon $\ket{1_i}\bra{1_i}$ with probability $1-x$.
    One then finds $\sum_{k=0}^n\binom{n}{k}=2^n$ distinct partition configurations: each one of the $\binom{n}{k}$ states with $k$ indistinguishable and $n-k$ perfectly distinguishable photons is found with a coin-flip probability of $x^k (1-x)^{n-k}$. We also highlight how this number is smaller than the expected Bell number $\belln{n}$ for a general partition representation.

\section{$L_2$ distance for twirled state}
\label{app:l2dist}
Upon grouping up the terms which depend on the same $M_\sigma$, Eq.~(\ref{eq:path_interf}) shows how we can represent the probability of a photocounting outcome as a linear function of the indistinguishabilities $M_\sigma$ (\refeqn{eq:path_interf_group} of App.~\ref{app:cyclic}), with the ideal indistinguishable distribution obtained by setting $M_\sigma = 1~\forall\sigma$.
We can therefore consider a random outcome $\outcome$ and express the observed difference in probabilities as:

\begin{equation}
    \Delta p(\outcome) = p_0(\outcome) - p(\outcome) =\sum_{\sigma \neq id} (1 -M_\sigma) \sum_{\tau}X^*_{\tau}X_{\tau\sigma}.
\end{equation}

We shall use $q_\sigma$ to denote $\sum_{\tau}X_{\tau\sigma} X^*_{\tau}$ and rely on some useful results detailed in Ref.~\cite{hoven_efficient_2024} on the average behavior over truncated Haar-random matrices (in their notation $ q_\sigma = \permament{M\circ M^*_{\sigma^{-1}}}$, where $M$ would be the scattering matrix relative to the outcome $\outcome$).
In Ref.~\cite{hoven_efficient_2024}, it is shown that $q_\sigma$ behave like independent random variables with zero mean and a variance that depends only on the number $j$ of fixed points of $\sigma$. For $\sigma,\tau \neq id$ we thus have:

\begin{equation}
    \avg{q_\sigma} = 0,
\end{equation}
\begin{equation}
    \avg{q_\sigma q_\tau} = \delta_{\sigma\tau} \cdot \frac{n!}{m^{2n}}\sum_{p=0}^{n-j} R_{n-j,p}2^p = \var{q_\sigma},
\end{equation}
\begin{equation}
    \avg{\Delta p} =0,
\end{equation}
\begin{equation}
    \var{\Delta p} =\avg{\Delta p^2},
\end{equation}
where $R_{n,j}$ is the rencontres number, which counts the permutation of $n$ objects with exactly $j$ fixed points.
Using the function $\mathrm{fix}(\sigma)$ to count the number of fixed points of $\sigma$ and grouping up the terms with the same number of fixed points one obtains:
\begin{equation}
        \Delta p = \sum_{j = 0}^{n-1} \sum_{\mathrm{fix}(\sigma) = j} (1 -M_\sigma) q_\sigma.
\end{equation}

We shall now focus on a specific inner sum, which due to the dependence of $q_\sigma$ on the number of fixed points contains i.i.d random variables $q^{(j)}$.
We can further count the number of terms in each inner sum through the rencontres number.
The variance of their sum reads:
\begin{equation}
   \var{\sum_{\mathrm{fix}(\sigma) = j} (1 -M_\sigma) q_\sigma} = \var{q^{(j)}} \cdot R_{n,j} \left[(1 -\avg{M_\sigma})^2 + \var{M_\sigma} \right],
\end{equation}
where the moments $\avg{M_\sigma}$, $\avg{M_\sigma^2}$ are taken over the set $\mathrm{fix}(\sigma) = j$.
Any averaging between the $M_\sigma$ will leave $\avg{M_\sigma}$ untouched while decreasing $\var{M_\sigma}$, decreasing the overall variance.
Extending this reasoning to all $j$, denoting the moments of the indistinguishabilities by $\avg{\cdot}_j, \var{\cdot}_j$, with $j$ fixed points we obtain:

\begin{equation}
    \var{\Delta p} = \sum_{j=0}^{n-1}\var{q^{(j)}} \cdot R_{n,j} \left[(1 -\avg{M_\sigma}_j)^2 + \var{M_\sigma}_j \right].
\end{equation}

We thus conclude that any averaging operation between indistinguishabilities with the same number of fixed points will result in particle statistics that lies closer to the ideal distribution, on average.
We can finally observe that both the physical permutation twirling operation (Def.~\ref{def:proj_twirl}) and the possibly unphysical strict partition projection (Def.~\ref{def:proj_strict}) respect this requirement.

\section{Symmetric subspace projection for partition states}
\label{app:sym_proj}

Given a partition state $\partket{\Lambda}$ and the projector onto the symmetric subspace $\Pi_{\mathrm{sym}} = \frac{1}{n!} \sum_\sigma \opperm{\sigma}$, the modulus of the projection of the former onto the latter can be computed with a closed formula:

\begin{equation}
    \partbra{\Lambda} \Pi_{\mathrm{sym}} \partket{\Lambda} = \frac{1}{n!} \sum_\sigma \partbra{\Lambda} \opperm{\sigma} \partket{\Lambda} = \frac{|\sympart{\Lambda}|}{n!}. 
\end{equation}

\noindent These stabilizer groups can be broken down into the product of smaller permutation groups \refeqn{eq:symg_sep}) and their size can be computed in terms of the size of the cells of $\partition{\Lambda}$:

\begin{equation}
    |\sympart{\Lambda}| = \prod_i |\Lambda_i|!.
\end{equation}

Given the inequality $(n-k)! \, k! \leq (n-1)!$ for $k\leq n$, which is true whenever $k \neq 0,n$, we deduce that given a two-set partition $\partition{\Lambda} = \{ \Lambda_1, \Lambda_2 \}$, the size of $\sympart{\Lambda}$ is always less than the one relative to the one-subset partition $\partition{\Xi} = \{\Lambda_1 \cup \Lambda_2\}$.
By repeated application of this dividing argument, one can bound the size of the stabilizer of a partition $\partition{\Lambda}$ by the one relative to any $\partition{\Xi} \succeq \partition{\Lambda}$.
In particular, if $\partition{\Lambda} \neq \{1,2,\dots,n\}$ then $|\sympart{\Lambda}| \leq (n-1)!$ and the following bound is always true:
\begin{equation}
    \frac{|\sympart{\Lambda}|}{n!} \leq \frac{1}{n}.
\end{equation}

\end{document}